\definecolor{ao(english)}{rgb}{0.0, 0.5, 0.0}
\def\softO{\ensuremath{{O}{\,\tilde{ }\,}}}
\DeclareBoldMathCommand{\bnu}{\nu}
\def\Sc{\ensuremath{\mathcal{S}}}
\def\Tb {\ensuremath{\mathbb{T}}}
\def\KKbar {\ensuremath{\overline{\mathbf{K}}}}
\def\KK {\ensuremath{\mathbf{K}}}
\def\d {\ensuremath{\mathbf{d}}}
\def\G {\ensuremath{\mathrm{GL}}}
\def\jac{\ensuremath{{\rm Jac}}}
\def\scrR{\ensuremath{\mathscr{R}}}
\def\e {\ensuremath{\bm{e}}}
\def\b {\ensuremath{\mathbf{b}}}
\def\f {\ensuremath{\mathbf{f}}}
\def\L {\ensuremath{\mathbf{L}}}
\def\G {\ensuremath{\mathbf{G}}}
\def\Y {\ensuremath{\mathbf{Y}}}
\def\H {\ensuremath{\mathbf{H}}}
\def\U {\ensuremath{\mathbf{U}}}
\def\b_eta{\mbox{\boldmath$\eta$}}
\def\bzeta{\mbox{\boldmath$\zeta$}}
\def\bkappa{\mbox{\boldmath$\kappa$}}
\def\bh{\mbox{\boldmath$h$}}
\def\bell{\mbox{\boldmath$\ell$}}
\def\mat#1{\ensuremath{\bm{#1}}}
\def\KKbar {\ensuremath{\overline{\mathbf{K}}}}
\DeclareBoldMathCommand{\bq}{q}
\DeclareBoldMathCommand{\bp}{p}
\DeclareBoldMathCommand{\y}{y}
\DeclareBoldMathCommand{\bM}{M}
\DeclareBoldMathCommand{\bD}{D}
\DeclareBoldMathCommand{\bB}{B}
\DeclareBoldMathCommand{\bA}{A}
\DeclareBoldMathCommand{\bC}{C}
\DeclareBoldMathCommand{\bX}{X}
\DeclareBoldMathCommand{\bZ}{Z}
\DeclareBoldMathCommand{\bi}{i}
\DeclareBoldMathCommand{\bk}{k}
\DeclareBoldMathCommand{\bY}{Y}
\DeclareBoldMathCommand{\bn}{n}
\DeclareBoldMathCommand{\bbb}{b}
\DeclareBoldMathCommand{\d}{d}
\DeclareBoldMathCommand{\ba}{a}
\DeclareBoldMathCommand{\bc}{c}
\DeclareBoldMathCommand{\bj}{j}
\DeclareBoldMathCommand{\be}{e}
\DeclareBoldMathCommand{\bh}{h}
\DeclareBoldMathCommand{\br}{r}
\DeclareBoldMathCommand{\bu}{u}
\DeclareBoldMathCommand{\bv}{v}
\DeclareBoldMathCommand{\bx}{x}
\DeclareBoldMathCommand{\bz}{z}
\DeclareBoldMathCommand{\by}{y}
\DeclareBoldMathCommand{\bw}{w}
\DeclareBoldMathCommand{\c}{c}
\DeclareBoldMathCommand{\f}{f}
\DeclareBoldMathCommand{\g}{g}
\DeclareBoldMathCommand{\h}{h}
\DeclareBoldMathCommand{\x}{x}
\DeclareBoldMathCommand{\bell}{\ell}
\DeclareBoldMathCommand{\bbeta}{\beta}
\DeclareBoldMathCommand{\bchi}{\chi}
\DeclareBoldMathCommand{\balpha}{\alpha}
\DeclareBoldMathCommand{\bgamma}{\gamma}
\DeclareBoldMathCommand{\bkappa}{\kappa}
\DeclareBoldMathCommand{\btheta}{\vartheta}
\DeclareBoldMathCommand{\bvartheta}{\vartheta}
\DeclareBoldMathCommand{\bzeta}{\zeta}
\def\wdeg{\ensuremath{\mathrm{wdeg}}}
\def\crit{\ensuremath{W}}
\def\EQS{\bm{f}}
\def\eq{{f}}
\def\MAT{{\G}}
\def\mat#1{{\bf #1}}
\def\gmat{{g}}
\def\symmrep{symmetric representation}
\def\Upsymmrep{Symmetric representation}
\def\Clambda{{\cal C}_\lambda}
\def\ClambdaS{\Clambda^{\rm strict}}
\def\OurOmega{\omega}
\def\mycomment#1{}
\title{Computing critical points for invariant algebraic systems}
\author{Jean-Charles Faug\`ere\thanks{Inria, Sorbonne Universit\'e, CNRS,
    Laboratoire d'Informatique de Paris 6, \'Equipe PolSys, CryptoNext Security,
    4 place Jussieu, F-75252, Paris Cedex 05, France, email:{\text{Jean-Charles.Faugere@inria.fr.}}},   George Labahn\thanks{Cheriton School of Computer Science,
        University of Waterloo, Waterloo ON, Canada N2L 3G1, emails:{\text{\{glabahn, eschost, txvu\}@uwaterloo.ca.}}}, 
        \and Mohab Safey El Din\thanks{Sorbonne Universit\'e, CNRS,
    Laboratoire d'Informatique de Paris 6, \'Equipe PolSys, 4 place Jussieu, F-75252, Paris Cedex 05, France, email:{\text{Mohab.Safey@lip6.fr.}}}, 
        \'Eric Schost$^\dagger,$  Thi Xuan Vu$^{* \ddagger}$
}
\date{}
\newcommand{\rvline}{\hspace*{-\arraycolsep}\vline\hspace*{-\arraycolsep}}
\newtheorem{definition}{Definition}
\numberwithin{definition}{section}
\newtheorem{theorem}[definition]{Theorem}
\newtheorem{corollary}[definition]{Corollary}
\newtheorem{proposition}[definition]{Proposition}
\newtheorem{lemma}[definition]{Lemma}
\newtheorem{remark}[definition]{Remark}
\newtheorem{example}[definition]{Example}
\begin{document}
\maketitle
\begin{abstract} 
  Let $\KK$ be a field and $\phi$, $\f = (f_1, \ldots, f_s)$ in $\KK[x_1, \dots,
  x_n]$ be multivariate polynomials (with $s < n$) invariant under the action of
  $\Sc_n$, the group of permutations of $\{1, \dots, n\}$. We consider the
  problem of computing the points at which $\f$ vanish and the Jacobian matrix
  associated to $\f, \phi$ is rank deficient provided that this set is finite.
 
   We exploit the invariance properties of the input to split the
   solution space according to the orbits of $\Sc_n$.  This allows us
   to design an algorithm which gives a triangular description of the
   solution space and which runs in time polynomial in $d^s$,
   ${{n+d}\choose{d}}$ and $\binom{n}{s+1}$ where $d$ is the maximum
   degree of the input polynomials. When $d,s$ are fixed, this is
   polynomial in $n$ while when $s$ is fixed and $d \simeq n$ this
   yields an exponential speed-up with respect to the usual polynomial
   system solving algorithms.
 \end{abstract}



\section{Introduction}

Our main motivation in this paper is the problem of finding the
critical points of a polynomial map $\phi$ restricted to an algebraic
set $V(\f)$, where $\f=(f_1, \dots, f_s)$ and $\phi$ come from the
multivariate polynomial ring $\KK[x_1, \ldots, x_n]$, with $\KK$ a
field of characteristic zero.  The problem of computing such points
appears in many application areas including for example polynomial optimization and real
algebraic geometry.
  
In our case we consider the closely related problem of computing a
description of the set $\crit(\phi, \f)$ defined by the following
equations:
\begin{equation}\label{intro:one}
   \langle f_1, \dots,  f_s \rangle + \langle M_{s+1}
  (\jac(\f, \phi))\rangle
\end{equation}
where, $\jac(\f, \phi)$ is the Jacobian matrix of $(f_1, \dots, f_s,
\phi)$ with respect to $(x_1, \dots, x_n)$, and $M_r( \MAT )$ denotes
the set of all $r$-minors of a matrix $\MAT$. If we assume that the
Jacobian matrix $\jac(\f)$ has full rank $s$ at any point of $V(\f)$,
then, the Jacobian criterion~\cite[Theorem~16.19]{Eisenbud95} implies
that the algebraic set $V(\f)$ is smooth and $(n-s)$-equidimensional,
and that $\crit(\phi, \f)$ is indeed the set of critical points of
$\phi$ on $V(\f)$. 

When $\phi$ is linear, there exist algorithms for determining critical
points using $d^{O(n)}$ operations in
$\KK$~\cite[Section~14.2]{Basu2006}. More precisely, using Gr\"obner
basis techniques, the paper~\cite[Corollary 3]{Faug12} establishs that, if
the polynomials $f_1,\dots,f_s$ are generic enough of degree $d$, then
this computation can be done using
\[
  O\Big( {{n+D_{{\rm reg}}} \choose {n}}^{\OurOmega}  + n \left (d^s \, (d-1)^{n-s}
 \, {{n-1}
    \choose {s-1}}\right )^{3}\Big)
\] 
operations in $\KK$. Here $D_{{\rm reg}} = d(s-1) + (d-2)n + 2$, and
$\OurOmega$ is the exponent of multiplying two $(n \times n)$-matrices
with coefficients in $\KK$ (see~\cite{Spa14} for a generalization to
systems with mixed degrees).

In this paper, we consider the important case where the polynomials
$f_1,\dots,f_s$ and $\phi$ are all invariant under the action of the
symmetric group $\Sc_n$. As we will show later, the set $\crit(\phi,
\f)$ is then also invariant under $\Sc_n$.

There has been considerable work on solving symmetric algebraic
systems. Indeed, while it is always possible to compute the Gr\"obner
basis of a set of symmetric polynomials, symmetries of the initial
system are lost during the computation. In~\cite{Colin97}, for a
finite symmetry group, Colin proposed to use primary and secondary
invariants~\cite{Sturmfels93} to reformulate the problem. For the
particular case of $\Sc_n$-invariant equations, in~\cite{Faugere09},
the authors compute a SAGBI-Gr\"obner basis in the ring $\KK[e_1,
  \dots, e_n]$, where $e_i$ is a variable corresponding to $i$-th
elementary symmetric polynomial $\eta_i$ in $(x_1, \dots, x_n)$.
However, even if $f_1,\dots,f_s$ and $\phi$ are $\Sc_n$-invariant, the
equations in~\eqref{intro:one} are usually not invariant, so these technique
cannot be directly applied to our problem.

It is possible to prove that the system of equations in
\eqref{intro:one} is {\em globally invariant}: for all $\sigma \in
\Sc_n$, and any $g$ among either $f_1,\dots,f_s$ or the $(s+1)$-minors
of $\jac(\f, \phi)$, either $\sigma(g)$ or $-\sigma(g)$ belongs again
to the same set of equations. This implies that $\crit(\phi, \f)$ is
$\Sc_n$-invariant, as we claimed above. As an example, with $n=3$ and $s=1$, in order to determine
the critical points of $\phi = x_1x_2x_3 - 3 x_1 - 3x_2 - 3x_3$ over
the sphere defined by $f = x_1^2 + x_2^2 + x_3^2 - 6$, one has to 
solve the globally invariant set of equations defined by 
$$
\{ f =0\, , \,x_1^2 x_3 - x_2^2 x_3 - 3x_1 +3 x_2\,=0 , \,
x_1^2 x_2 - x_2 x_3^2 - 3x_1 +3 x_3\,=0, \,
x_1 x_2^2 -  x_1 x_3^2 - 3x_2 +3 x_3\, =0\}.
$$ 
For such systems, following~\cite{FaHePh03}, the authors
in~\cite{FaugereJules12} used divided differences to construct a new
system which is $\Sc_n$-invariant. Our work is inspired by this
reference, but the specific type of the equations that we solve,
involving minors of a Jacobian matrix, requires us to extend the work
from~\cite{FaugereJules12} (in addition, no complexity analysis is
given in that reference).

The global invariance property allows us to split the set
$W=\crit(\phi, \f)$ into orbits under the action of the symmetric
group. The size of the orbit of a point in $W$ will depend on the
number of pairwise distinct coordinates of that point. For example,
for $f$ and $\phi$ as above, the points $(2,1,1), (0, \sqrt{3},
\sqrt{3}), (-2,-1,-1)$ are solutions with three elements in their
respective $\Sc_3$-orbits, while the point $(\sqrt{2}, \sqrt{2},
\sqrt{2})$ is also a solution, with only one point in its orbit (this
is the whole decomposition of $W$ into orbits). To devise a fast
algorithm, the different sizes of orbits needs to be taken into
consideration.  This phenomenon is to be expected for systems such
as~\eqref{intro:one}, but is not discussed for the particular family
of equations in~\cite{FaugereJules12} (on the other hand, that
reference takes into consideration further properties of the family of
equations considered therein).
 
The structure of these orbits is determined by the number of pairwise
distinct coordinates of the points they contain. To study them, we
make use of partitions of $n$.  A sequence $\lambda = (n_1^{\ell_1} \,
n_2^{\ell_2}\, \dots\, n_r^{\ell_r})$, with the $\ell_i$ and $n_i$
positive integers and $n_1 < \cdots < n_r$, is called a {\it
  partition} of $n$ if $n_1 \ell_1 + n_2 \ell_2 + \cdots + n_r \ell_r
= n$. Partitions of $n$ will be used to parameterize orbits, with
$\lambda$ as above parameterizing those points in $W$ having $\ell_1$
distinct sets of $n_1$ equal coordinates, $\ell_2$ distinct sets of
$n_2$ equal coordinates and so on. We will write $W_\lambda$ for the
set of such orbits contained in $W$, so that $W$ is the disjoint union
of all $W_\lambda$, for all partitions $\lambda$ of $n$.

For instance, for the $\phi$ and $f$ mentioned previously, our
algorithm will determine that the set $W_{(1^3)}$ of orbits parameterized by
$\lambda=(1^3)$, which corresponds to the orbits with all distinct
coordinates $(\xi_1,\xi_2,\xi_3)$, is equal to the zero set of
\[ 
(f,\ -4,\ -2(x_1+x_2+x_3),\ 2(x_1^2+x_2^2+x_3^2) + 8(x_1x_2+x_2x_3+x_1x_3) - 36)
\] 
(and so $W_{(1^3)}$ is empty, as we saw above). The set
$W_{(1^1\, 2^1)}$ of orbits parameterized by $\lambda=(1^1\, 2^1)$, that is,
orbits of points of the form $(\xi_1,\xi_2,\xi_2)$, with $\xi_1 \ne
\xi_2$, is the orbit of the zero set of $$( x_1^2 + 2x_2^2 -
6,\ x_2^2+x_1x_2-3,\ x_2-x_3),$$ where the first component is $f$
restricted to the hyperplane $x_2=x_3$. In particular,
$W_{(1^1\, 2^1)}$ is the union of the orbits of the points $(2,1,1),
(0, \sqrt{3}, \sqrt{3}), (-2,-1,-1)$ seen above.

In this paper we provide a procedure to determine invariant
polynomials that describe these $\Sc_n$-orbits. For an orbit
parameterized by the partition $\lambda =(n_1^{\ell_1} \,
n_2^{\ell_2}\, \dots\, n_r^{\ell_r})$, we work with points which have
distinct coordinates $(\xi_{1,1}, \ldots, \xi_{1,\ell_1}, \xi_{2, 1},
\ldots, \xi_{2,\ell_2}, \ldots , \xi_{r, 1}, \ldots, \xi_{r,
  \ell_r})$, so that instead of $n$ coordinates, there are only $\ell
= \ell_1 + \cdots + \ell_r$ distinct coordinates for points in this
orbit. Then, invariance under of $W$ permutations implies that single
distinct points are permuted, groups of two points are permuted,
etc. This will allow us to work with polynomials in $ \KK[\bm
  e_1,\dots,\bm e_r]=\KK[e_{1, 1}, \ldots, e_{1,\ell_1},e_{2, 1},
  \ldots, e_{2,\ell_2}, \dots, e_{r, 1}, \ldots, e_{r,\ell_r}]$, in
order to represent a certain ``compressed'' image $W'_\lambda \subset
\KKbar{}^\ell$ of $W_\lambda$; here, $e_{i, 1}, \ldots, e_{i,\ell_i}$
are variables standing for the elementary symmetric polynomial in
$\ell_i$ indeterminates and $\KKbar$ is an algebraic closure of $\KK$.
In our running example, for $\lambda =(1^1\,2^1)$, we have $\ell=2$
and $W'_{(1^1\, 2^1)}$ is the set $\{(2,1),\,(0,\sqrt 3),\,(-2,-1)\}$.

Throughout the paper, we will assume that $W$, and thus all
$W_\lambda$ and $W'_\lambda$, are finite. Then, for $\lambda$ as
above, the cardinality of $W'_\lambda$ is smaller than that of
$W_\lambda$ by a factor $$\gamma_\lambda = {{n}\choose {n_1, \dots,
    n_1, \dots, n_r, \dots, n_r}},$$ where each $n_i$ is repeated
$\ell_i$ times.  Altogether, if $d$ is the maximum of the degrees of
the input of polynomials, then we will prove some bounds, which will
be denoted by $\mathfrak{c}_\lambda$, on the cardinality of the finite
set $W'_\lambda$; we will see that, in practice, each of the
$\mathfrak{c}_\lambda$ provides an accurate bound on the cardinality
of $W'_\lambda$. The sum of the $\mathfrak{c}_\lambda$'s then gives us
an upper bound on the size of the output of our main algorithm.  We
did not find a closed formula for this sum, but we can prove that it
is bounded above by
\begin{equation}\label{intro:two}
\mathfrak{c} = d^s\, {{n+d-1} \choose {n}}.
\end{equation}
We will see that, in practice, this is a rather rough upper bound but
in several cases, it
compares well to the upper bound
\begin{equation} \label{eq:naive}
\tilde{\mathfrak{c}} = d^s \, (d-1)^{n-s}\, {n \choose s}
\end{equation}
from Nie and Ranestad~\cite[Theorem~2.2]{NieRan09} on the size of $W$.
For example, when $d=2$, we have $\mathfrak{c} = 2^s
(n+1)$ while $\tilde{\mathfrak{c}} = 2^s {n \choose s}$.  More
generally, when $d$ and $s$ are fixed, $\mathfrak{c}$ is polynomial in
$n$ (since it is bounded above by $d^s(n+d-1)^d$) while
$\tilde{\mathfrak{c}}$ is exponential in $n$ (since it is greater than
$(d-1)^{n}$). When $s$ is fixed and $d = n$, $\mathfrak{c}$ is 
$n^{O(1)} 2^n$, whereas $\tilde{\mathfrak{c}}$ is  $n^{O(1)}
(n-1)^{n-s}$.

In view of this discussion, our algorithm will naturally compute
descriptions of the sets $W'_\lambda$ rather than $W_\lambda$ (we will
also explain how one would recover the later knowing the former).
There are a number of ways to represent algebraic sets; in our case we
make use of a representation based on univariate polynomials.  In
particular, if $Y \subset \KKbar{}^m$ is a zero-dimensional variety
defined by polynomials in $\KK[z_1,\dots,z_m]$, then a {\em
  zero-dimensional parametrization} $\scrR = ((q, v_1, \dots, v_m),
\mu)$ of $Y$ consists of
\begin{itemize}
\item[(i)] a squarefree polynomial $q$ in $\KK[y]$, with $y$ a new
  indeterminate and $\deg(q) = |Y|$,
     
\item[(ii)] polynomials $(v_1, \dots, v_m)$ in $\KK[y]$ with 
  $\deg(v_i) < \deg(q)$ for all $i$, and satisfying 
  $Y = \{(v_1(\tau), \dots, v_m(\tau)) \in \KKbar{}^m \, | \, q(\tau) = 0\}$,
\item[(iii)] a vector $\mu = (\mu_1,\dots,\mu_m)$ in $\KK^m$ such that
  $\mu_1 v_1 + \cdots + \mu_m v_m = y$.
\end{itemize}
When these conditions hold, we write $Y=Z(\scrR)$.

The last condition says that the roots of $q$ are the values taken by
the linear form $\mu_1 z_1 + \cdots + \mu_m z_m$ on $Y$. In
particular, this linear form takes pairwise distinct values on the
points of $Y$. This representation was first introduced in the works
of Kronecker and K\"{o}nig~\cite{Kronecker82} and has been widely used
in computer algebra~\cite{ABRW96, GiMo89,
  GiHeMoMoPa98,GiHeMoPa95,GiLeSa01, Rouillier99}.
The output of our algorithm will thus be a collection of
zero-dimensional parameterizations, one for each of the sets
$W'_\lambda$; we will call such a data structure a {\em {\symmrep}} of
$W$ (precise definitions are in Section~\ref{sec:invariant}). 

However, rather than using Gr\"obner bases to compute such
descriptions, we will use a {\it symbolic homotopy continuation}, so
as to control precisely the cost of the algorithm. Homotopy
continuation has become a foundational tool for numerical algorithms
while the use of symbolic homotopy continuation algorithms is more
recent. Such algorithms first appeared in~\cite{Bomp04, Heintz00}, for
general inputs, and later for sparse~\cite{Jer09, HeJeSa10, Maria13,
  HeJeSa14} and multi-homogeneous systems~\cite{SaSc18, HeJeSaSo02,
  HSSV18}.

In our case we can make use of a recent sparse symbolic homotopy
method given in~\cite{sparse-homotopy} specifically designed to handle
determinantal systems over weighted polynomial rings, that is,
multivariate polynomial rings where each variable has a weighted
degree, which is a positive integer. These domains arise naturally for
our orbits: the domain arising from an orbit parameter $\lambda$ has
variables $e_{i,k}$ which are defined corresponding to elementary
symmetric polynomials $\eta_{i, k}$; since $\eta_{i,k}$ has degree
$k$, the variable $e_{i,k}$ will naturally be assigned weight $k$.

\begin{theorem}\label{thm:one}
  Suppose $\f = (f_1, \dots, f_s)$ and $\phi$ are $\Sc_n$-invariant
  polynomials in $\KK[x_1, \dots, x_n]$, with degree at most $d \ge
  2$, and suppose that $W=\crit(\phi, \f)$ is finite.  There exists a
  randomized algorithm that takes $\f,\phi$ as input and outputs a
  {\symmrep} for the set $W$, and whose runtime is polynomial in
  $d^s,\, \binom{n+d}{d},\, {n \choose {s+1}}$. The total number of
  points described by the output is at most $d^s\, {{n+d-1} \choose
    {n}}$.
\end{theorem}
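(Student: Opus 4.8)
The plan is to reduce the computation for each orbit type $\lambda$ to a determinantal system over a weighted polynomial ring, solve those with the sparse symbolic homotopy of \cite{sparse-homotopy}, and then assemble the results into a \symmrep{} of $W$; the complexity bound follows by controlling the number of partitions, the weighted B\'ezout-type degree bound $\mathfrak{c}_\lambda$ for each, and the per-partition homotopy cost. The first step is to fix a partition $\lambda = (n_1^{\ell_1}\,\cdots\,n_r^{\ell_r})$ of $n$ and to describe $W'_\lambda\subset\KKbar{}^\ell$, with $\ell=\ell_1+\cdots+\ell_r$, by an explicit polynomial system in the weighted variables $\bm e_1,\dots,\bm e_r$. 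Concretely, one substitutes into $f_1,\dots,f_s$ and into the $(s+1)$-minors of $\jac(\f,\phi)$ the constraint that the coordinates are grouped according to $\lambda$ (i.e.\ $n_i$ copies of each of $\ell_i$ distinct values), uses the global invariance property stated in the introduction to show that this system is genuinely $(\Sc_{\ell_1}\times\cdots\times\Sc_{\ell_r})$-invariant on the relevant stratum, and then rewrites it in terms of the elementary symmetric polynomials $\eta_{i,k}$ in each block, replacing $\eta_{i,k}$ by the weight-$k$ variable $e_{i,k}$. The minors must be handled by the divided-difference / Schur-function machinery extending \cite{FaugereJules12,FaHePh03}: the key is that a minor, viewed as a function of two variables that are being set equal, is divisible by their difference, so after clearing these differences one obtains honest polynomials in the symmetric coordinates, and the resulting object is a determinantal system of the shape handled in \cite{sparse-homotopy}.

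Next I would invoke the sparse symbolic homotopy of \cite{sparse-homotopy} for determinantal systems over the weighted ring $\KK[\bm e_1,\dots,\bm e_r]$: under the finiteness hypothesis on $W$ (hence on each $W_\lambda$ and $W'_\lambda$), this produces a zero-dimensional parametrization $\scrR_\lambda$ with $Z(\scrR_\lambda)\supseteq W'_\lambda$, in time polynomial in the number of variables, the number of equations, and the weighted degree bound $\mathfrak{c}_\lambda$ on the output. One then removes the spurious solutions — points whose blocks are not actually pairwise distinct, or which come from clearing the difference factors — by a gcd/saturation step on the univariate parametrization, keeping only the genuine $W'_\lambda$; this is the standard way a zero-dimensional parametrization is cleaned, and it costs polynomially in $\deg(\scrR_\lambda)\le\mathfrak{c}_\lambda$. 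Collecting $\scrR_\lambda$ over all partitions $\lambda$ of $n$ gives the claimed \symmrep{} of $W$.

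For the complexity and output-size claims I would argue as follows. The number of partitions of $n$ of the relevant form, together with the cost of building each weighted determinantal system (which involves $\binom{n}{s+1}$ minors of a matrix of size essentially $(s+1)\times n$, each expanded in the symmetric coordinates), contributes a factor polynomial in $\binom{n}{s+1}$ and in $\binom{n+d}{d}$ (the latter bounding the size of the dense representation of the input polynomials and of the substituted/divided systems). For each $\lambda$, the homotopy runs in time polynomial in $d^s$, in the number of weighted variables $\ell\le n$, and in $\mathfrak{c}_\lambda$; summing the per-partition costs and using $\sum_\lambda \mathfrak{c}_\lambda \le \mathfrak{c} = d^s\binom{n+d-1}{n}$ from \eqref{intro:two} yields the stated overall bound, and the same inequality gives $\sum_\lambda |W'_\lambda| \le \sum_\lambda \deg(\scrR_\lambda) \le d^s\binom{n+d-1}{n}$ for the total number of points described. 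The main obstacle is the second part of the first step: showing that the substituted Jacobian-minor system, once the difference factors are divided out, is \emph{exactly} a determinantal system over the weighted ring in the sense required by \cite{sparse-homotopy}, and that the weighted degree data of that system is tight enough to give the bound $\mathfrak{c}_\lambda$ — equivalently, correctly identifying which minors survive the degeneration to the stratum of $\lambda$ and what their weighted degrees are. This is where the extension beyond \cite{FaugereJules12} really happens, and it is the part that requires the most care; everything downstream (homotopy call, cleanup, summation) is comparatively routine.
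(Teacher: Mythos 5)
Your high-level plan matches the paper: split $W$ by orbit type $\lambda$, rewrite the restricted system in the weighted ring $\KK[\bm e_1,\dots,\bm e_r]$, solve each piece with the weighted determinantal homotopy of \cite{sparse-homotopy}, and sum the per-partition bounds $\mathfrak{c}_\lambda\le\mathfrak{c}$ to get both the runtime and the output size. You have also correctly identified the crux: showing that the substituted system really is a determinantal system in the sense required by \cite{sparse-homotopy}. But the mechanism you sketch there would not work. You propose to clear the difference factors from the $\binom{n}{s+1}$ substituted minors themselves. Setting aside the sign issue (minors are only globally invariant up to sign, so they are not $\Sc_\lambda$-equivariant in the sense of Section~2.3), dividing each minor by products of $z_i - z_j$ produces a list of invariant polynomials that are no longer the minors of any single matrix, so the determinantal shape required by Theorem~\ref{thm:homotopy} is destroyed and the $\binom{n}{s+1}$ savings evaporate. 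The paper works one level up: it is the \emph{rows} of the deduplicated Jacobian $\mat G^{[\lambda]}$ that are shown to be $\Sc_\lambda$-equivariant and divisibility-compatible (Lemma~\ref{lemma:g}). Running {\sf Symmetrize} on each row, via the divided-difference constructions of Propositions~\ref{lemma:vector_symme_divided} and~\ref{lemma:vector_symme_divided2}, yields a matrix $\mat H^{[\lambda]}$ with $\Sc_\lambda$-invariant entries, equal to $\mat G^{[\lambda]}\mat U^{[\lambda]}$ with $\mat U^{[\lambda]}$ invertible off the diagonals. On ${\cal C}^{\rm strict}_\lambda$ the rank condition therefore transfers (Proposition~\ref{prop:some_properties}(ii)), the determinantal shape is preserved, and one reads off the weighted column degrees $\delta_j = d-1-\ell+j$ that feed the $\mathfrak{c}_\lambda$ count.

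There is a second, structurally important gap in the cleanup step. You propose to iterate over \emph{all} partitions and saturate away spurious solutions. The paper only launches solvers for partitions of length $\geq s$: for $\ell = s$ only $\bar{\f}^{[\lambda]}$ is needed (no matrix), and for $\ell < s$ nothing is solved at all --- indeed for $\ell < s$ there is no determinantal system of the right codimension to pose. Orbits of type $\lambda'$ with $\ell_{\lambda'} < s$ show up precisely as the ``spurious'' points of the length-$s$ computations, and the {\sf Decompose} routine (Lemma~\ref{lemma:decompose}) does not discard them but identifies their true type and reassigns them to the correct partition, with {\sf Remove\_Duplicates} cleaning up overlaps afterwards. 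Simply deleting points whose blocks collide, as in your proposal, would lose those orbits. The reassignment is load-bearing, not cosmetic.
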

\noindent Note that the runtime is polynomial in the bound we give on
the output size, as well as the number ${n \choose {s+1}}$ of maximal
minors in the matrix $\jac(\f,
\phi)$. Section~\ref{sec:weightedhomotopy} gives a more precise estimate
on the runtime of the algorithm.

We use standard notions and notations of commutative algebra and
algebraic geometry which can be found for example in~\cite{CLO07,
  Eisenbud95}.  We will assume that the reader is familiar with
concepts such as {\it dimension}, {\it Zariski topology}, {\it
  equidimensional algebraic set} and the {\it degree} of an algebraic
set, with definitions found in~\cite{CLO07, Eisenbud95}.

The remainder of the paper is organized as follows. In the next
section, we provide several properties of invariant polynomials and
discuss in detail the sets $W_\lambda$ and $W'_\lambda$ mentioned
above.  Section~\ref{sec:crit} contains our main algorithm, called
${\sf Critical\_Points\_Per\_Orbit}$ and includes a proof of
correctness. The runtime of this algorithm is analysed in
Section~\ref{sec:weightedhomotopy}, finishing the proof of
Theorem~\ref{thm:one}.  Experiments to validate our new algorithm is
given in Section~\ref{sec:experiments} followed by a section which
gives topics for future research. The latter section also includes a
discussion on how our results can decide emptiness of
$\Sc_n$-invariant algebraic sets over a real field. The appendices
include a proof of two technical propositions.


\section{Partitions and distinct coordinates of $\Sc_n$-invariants}\label{sec:invariant}

One of our key observations, formalized in
the next section, is that the special nature of our set of critical
points allows us to split $\crit(\phi, \f)$ into subsystems defined
by the orbits of the symmetric group $\Sc_n$.  

More precisely, in this paper an {\em orbit} is a set of the form
$\Sc_n(\bm\xi)$, for some point $\bm\xi$ in $\KKbar{}^n$, that is, it
is the set of all $\Sc_n$-conjugates of $\bm\xi$. As mentioned in the
introduction, the size of an orbit $\Sc_n(\bm\xi)$ will depend on the
number of pairwise distinct coordinates of $\bm\xi$. For example, with
$n=3$, a point of the form $(\xi_1,\xi_2,\xi_2)$ will have an orbit of
size $3$, unless we have $\xi_1 = \xi_2$ (in which case the orbit has
size $1$).

As a result, we need to consider the separation of distinct
coordinates in an orbit, which is what we do in this section. We do
this through a discussion of the geometry of (finite)
$\Sc_n$-invariant subsets of $\KKbar{}^n$ and the data structures we
can use to represent them.  Much of what follows is preliminary for our
description of orbits presented in the next section.


\subsection{Partitions}\label{ssec:partitions}

Partitions play a major role in describing our orbits. In this
subsection, we gather the basic definitions of partitions and of a few
notions attached to them, which will be used throughout this section.

A sequence $\lambda = (n_1^{\ell_1} \, n_2^{\ell_2}\, \dots\,
n_r^{\ell_r})$, with $\ell_i$'s and $n_i$'s positive integers and $n_1
< \cdots < n_r$, is called a {\it partition} of $n$, sometimes denoted
by $\lambda\vdash n$, if $n_1 \ell_1 + n_2 \ell_2 + \cdots + n_r
\ell_r = n$. The number $\ell = \sum_{i=1}^r \, \ell_i$ is called the
\textit{length} of the partition $\lambda$. We remark that to a
partition such as $\lambda = (n_1^{\ell_1} \, n_2^{\ell_2}\, \dots\,
n_r^{\ell_r})$ we can associate (in a one-to-one manner) the ordered
list $(n_1,\dots,n_1,\dots,n_r,\dots,n_r)$, with each $n_i$ repeated
$\ell_i$ times.

We will make use of the {\em refinement order} on partitions.  To describe this we first
need to define the union of partitions: if $\lambda$ and $\lambda'$ are
partitions of  $a$ and $a'$, respectively, then  $\lambda \cup
\lambda'$ is the partition of $a+a'$ whose ordered list is obtained
by merging those of $\lambda$ and $\lambda'$.  Then, consider two
partitions $\lambda = (n_1^{\ell_1} \, n_2^{\ell_2}\, \dots\,
n_r^{\ell_r})$ and $\lambda' = (m_1^{k_1} \, m_2^{k_2}\, \dots\,
m_s^{k_s})$ of the same integer $n$. As
in~\cite[p.~103]{Macdonald1998} (or e.g.~\cite[p.~16]{Birkhoff67}), we
write $\lambda \le \lambda'$, and we say that $\lambda$ {\em refines}
$\lambda'$, if $\lambda$ is the union of some partitions
$(\lambda_{i,j})_{1 \le i \le s, 1 \le j \le k_i}$, where $\lambda_{i,j}$ is
a partition of $m_i$ for all $i,j$.

\begin{example}
  For the partitions of $n=3$, we have $(1^3) \le (1^1 2^1) \le (3^1)$.
\end{example}

Let $\lambda = (n_1^{\ell_1} \, n_2^{\ell_2}\, \dots\,
n_r^{\ell_r})$ be a partition of $n$ having length $\ell$. For $k=1,\dots,r$, we will
denote by $\bZ_k = (z_{k, 1}, \dots, z_{k, \ell_k})$ a sequence of
$\ell_k$ indeterminates. When convenient, we will also index the
entire sequence of indeterminates
$(\bZ_1,\dots,\bZ_r)=(z_{1,1},\dots,z_{r,\ell_r})$ as
$(z_1,\dots,z_\ell)$, so that $z_1=z_{1,1},
\dots,z_\ell=z_{r,\ell_r}$. From this point of view, introducing
$\tau_0=0$ and $\tau_k = \sum_{i=1}^k \ell_i$, for $k=1,\dots,r$, 
any index $i$ in $1,\dots,\ell$ can be
written uniquely as $i=\tau_{k-1} + u$, for some $k$ in $1,\dots,r$ 
and $u$ in $1,\dots,\ell_k$. Thus, the
indeterminates $z_{k, 1}, \dots, z_{k, \ell_k}$ are numbered
$z_{\tau_{k-1}+1},\dots,z_{\tau_{k}}$, with $\tau_r=\ell$.

We will let $\Sc_\lambda$ be the group
$$\Sc_\lambda = \Sc_{\ell_1} \times \cdots \times \Sc_{\ell_r}.$$
$\Sc_\lambda$ acts naturally on $\KK[\bZ_1, \dots, \bZ_r]$, and we
will denote by $\KK[\bZ_1, \dots, \bZ_r]^{\Sc_\lambda}$ the
$\KK$-algebra of $\Sc_\lambda$-invariant polynomials. Note that
$\Sc_\lambda$ can be seen as a subgroup of the permutation group
$\Sc_\ell$ of $\{1,\dots,\ell\}$, where $\Sc_{\ell_1}$ acts on the
  first $\ell_1$ indices, $\Sc_{\ell_2}$ acts on the next $\ell_2$
  ones, etc.

Finally, for $i=1,\dots,r$, we will let
$\bm\eta_i=(\eta_{i,1},\dots,\eta_{i,\ell_i})$ denote the vector of
elementary symmetric polynomials in variables $\bZ_i$, where
$\eta_{i,j}$ has degree $j$ for all $i,j$.


\subsection{$\Sc_\lambda$-invariant polynomials: the {\sf Symmetric\_Coordinates} algorithm}

Let $\lambda = (n_1^{\ell_1} \, n_2^{\ell_2}\, \dots\, n_r^{\ell_r})$
be a partition of $n$ having length $\ell$, and, for $i=1,\dots,r$,
let $\bm e_i = (e_{i, 1}, \dots, e_{i, \ell_i})$ be a set of $\ell_i$
new variables. Then, by the fundamental theorem of symmetric
polynomials~\cite[Theorem~3.10.1]{DeKe02}, for any $f$ in $\KK[\bZ_1,
  \dots, \bZ_r]^{\Sc_\lambda}$, there exists a unique $\bar f$ in
$\KK[\bm e_1, \dots, \bm e_r]$ with
\begin{equation}\label{eqdef:fbar}
  f(\bZ_1,\dots,\bZ_r) = \bar f(\bm{\eta}_1,\dots,\bm{\eta}_r),
\end{equation}
for $\bm\eta_1,\dots,\bm\eta_r$ as defined in the previous subsection.
We will need a quantitative version of this existence result,
which gives an estimate on the cost of computing $\bar f$ from $f$.
\begin{lemma} \label{elem_slp}
  There exists an algorithm ${\sf Symmetric\_Coordinates}(\lambda, f)$
  which, given a partition $\lambda$ of $n$ and $f$ of degree at most
  $d$ in $\KK[\bZ_1, \dots, \bZ_r]^{\Sc_\lambda}$, returns $\bar f$
  such that $f = \bar f(\bm{\eta}_1,\dots,\bm{\eta}_r)$, using
  $\softO( {\ell + d \choose d}{}^2 )$ operations in $\KK$.\footnote{ Throughout this paper we use $\softO(\cdot)$ to
indicate that polylogarithmic factors are omitted, that is, $f$ is
$\softO(g)$ if there exists a constant $k$ such that $f$ is $O(g \,
\log^k(g))$.}
\end{lemma}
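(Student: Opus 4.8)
The plan is to express the change of variables $f \mapsto \bar f$ explicitly via linear algebra on the monomial basis, and to track the cost carefully. First I would fix the ambient space: since $f$ has degree at most $d$ and lies in $\KK[\bZ_1,\dots,\bZ_r]=\KK[z_1,\dots,z_\ell]$, it is a $\KK$-linear combination of the monomials $z_1^{a_1}\cdots z_\ell^{a_\ell}$ with $a_1+\cdots+a_\ell \le d$; there are exactly $\binom{\ell+d}{d}$ such monomials. Likewise, $\bar f\in\KK[\bm e_1,\dots,\bm e_r]$ will be sought as a combination of monomials in the $e_{i,j}$'s. Here one must be slightly careful about degrees: if we give $e_{i,j}$ the weight $j$ (matching $\deg\eta_{i,j}=j$), then the substitution $e_{i,j}\mapsto\eta_{i,j}$ is weight-homogeneous, so a weighted-degree-$\le d$ monomial in the $e_{i,j}$ maps to a (standard) degree-$\le d$ polynomial in the $z$'s, and conversely every monomial appearing in $\bar f$ has weighted degree $\le d$. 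The number of such weighted monomials is at most $\binom{\ell+d}{d}$ as well (it is the number of monomials in $\ell$ ordinary variables of degree $\le d$, since lowering weights only shrinks the count). So both source and target live in spaces of dimension $N:=\binom{\ell+d}{d}$.

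Next I would set up the linear map $\Phi$ sending a polynomial $g(\bm e_1,\dots,\bm e_r)$ (supported on the weighted monomials of weighted-degree $\le d$) to $g(\bm\eta_1,\dots,\bm\eta_r)\in\KK[z_1,\dots,z_\ell]_{\le d}$, written in the monomial basis. By~\eqref{eqdef:fbar} and the fundamental theorem of symmetric polynomials, $\bar f$ is the unique preimage $\Phi^{-1}(f)$, and it exists because $f$ is assumed $\Sc_\lambda$-invariant. Concretely I would (i) build the matrix $M_\Phi$ of $\Phi$ in these two monomial bases, then (ii) solve the linear system $M_\Phi\,\mathbf{x}=\mathbf{f}$ where $\mathbf{f}$ is the coefficient vector of $f$. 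Step (ii) costs $O(N^\omega)=\softO(N^2)$ (even a cubic solver is absorbed by the $\softO$ if we are willing to be generous, but $\omega<3$ gives the stated bound directly, and in any case $N^\omega \le N^{O(1)}$; the statement only claims $\softO(N^2)$, so I should double-check that the intended cost model uses fast linear algebra, or else that the columns of $M_\Phi$ are sparse enough — see below). Step (i): each column of $M_\Phi$ is the coefficient vector of a product $\eta_{1,1}^{b_{1,1}}\cdots\eta_{r,\ell_r}^{b_{r,\ell_r}}$; such a product can be expanded incrementally by multiplying in one elementary symmetric polynomial at a time, each multiplication being of two polynomials supported on $\le N$ monomials, costing $\softO(N)$ with fast (truncated, multivariate) multiplication, and we do $O(d)$ of them per column, $N$ columns — total $\softO(N^2)$.

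The one genuine subtlety — and the step I expect to be the main obstacle — is getting the cost of the linear solve (and of assembling $M_\Phi$) down to $\softO(N^2)$ rather than $\softO(N^3)$. The clean way to achieve this is to exploit the triangular structure: order the monomials in the $e_{i,j}$'s and the monomials in the $z$'s compatibly (e.g. by the partial order induced by "leading monomial under a suitable term order", using that the leading term of $\eta_{i,j}$ is $z_{i,1}\cdots z_{i,j}$), so that $M_\Phi$ becomes block-triangular with identity-like diagonal. Then $\Phi^{-1}(f)$ is computed by an elimination that, monomial by monomial, subtracts off the appropriate multiple of a known $\eta$-product — this is exactly the classical algorithmic proof of the fundamental theorem of symmetric polynomials, and each elimination step touches $\le N$ coefficients, giving $\softO(N^2)$ overall. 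I would therefore present the algorithm as: repeatedly take the largest monomial $z^{\bm a}$ in the current remainder, read off which product of $\eta_{i,j}$'s has that leading monomial, subtract the corresponding scalar multiple of that product (recording the scalar as a coefficient of $\bar f$), and iterate until the remainder is $0$ — termination and correctness being the standard symmetric-function argument restricted block-by-block to each $\bZ_i$ (since $f$ is $\Sc_{\ell_i}$-invariant in $\bZ_i$ for each $i$, the process within each block is exactly the symmetric-reduction for $\ell_i$ variables). The degree bound $d$ controls the number of iterations and the size of every intermediate polynomial by $N$, which yields the claimed $\softO\!\big(\binom{\ell+d}{d}^2\big)$.
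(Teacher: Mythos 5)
Your proof is correct in substance, but it takes a genuinely different route from the paper's. The paper follows Bl\"aser--Jindal: lift to the power-series ring $\KK[[\bm e_1,\dots,\bm e_r]]$, shift the $e_{i,j}$'s so that the generic monic polynomial $P_i$ splits into distinct linear factors at $\bm e=0$, compute its roots by Newton iteration truncated at precision $d$, evaluate $f$ at those truncated series, and translate back. You instead work purely by linear algebra in the monomial basis: order both monomial bases compatibly (so that the leading monomial of $\prod\eta_{i,j}^{b_{i,j}}$ is the weakly-decreasing-exponent monomial in each block), observe that the matrix of the substitution is unit block-triangular on the symmetric part, and do the classical FTSP reduction (Waring's algorithm), subtracting off one $\eta$-product at a time. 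Both routes give $\softO\big(\binom{\ell+d}{d}^2\big)$, and both ultimately rely on quasi-linear multivariate truncated multiplication for the inner loop. Two small inaccuracies in your cost accounting, both fixable: (i) $O(N^\omega)$ is \emph{not} $\softO(N^2)$ for the currently known $\omega\approx 2.37$, so the generic-linear-solve fallback does not give the bound --- you must use the triangular structure you later identify; (ii) building the $N$ product columns at $O(d)$ multiplications each gives $\softO(dN^2)$, not $\softO(N^2)$ --- to get the claimed bound you should build the $\eta$-products incrementally, each new one obtained from a previously stored one by a single $\softO(N)$ multiplication, for $\softO(N^2)$ total. With those repairs your argument is a valid, more elementary alternative to the paper's power-series proof; the paper's approach is arguably cleaner in that it needs no explicit monomial ordering or stored product table and generalizes directly to the straight-line-program model in which Bl\"aser--Jindal originally stated it.
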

\begin{proof}
  Algorithm {\sf Symmetric\_Coordinates} is a slight generalization of
  the procedure described in the proof of Bl\"aser and Jindal's
  algorithm \cite[Theorem 4]{BlaserJindal18}, which was written only
  for the case of $r=1$, and for polynomials represented as
  straight-line programs.

  The key to the algorithm is the following.  Assume we know an
  integral domain $\bm{L}$ containing $\KK[\bm e_1, \dots, \bm e_r]$,
  and vectors $\bm{\zeta}_1,\dots,\bm{\zeta}_r$ of elements in $\bm{L}$,
  where for each $i$, $\bm{\zeta}_i = (\zeta_{i,1},\dots,\zeta_{i,\ell_i})
  \in \bm{L}^{\ell_i}$ are the $\ell_i$ pairwise distinct roots of
  \[
  P_i(T) = T^{\ell_i} - (e_{i,1} + \rho_{i,1}) T^{\ell_i - 1} + \cdots
  + (-1)^{\ell_i} \, (e_{i,\ell_i} + \rho_{i,\ell_i}), 
  \]
  and where $\rho_{i,1},\dots,\rho_{i,\ell_i}$ are the elementary
  symmetric polynomials evaluated at $1,\dots,\ell_i$. Then, $\bar f$
  satisfies $\bar f(e_{1,1} + 
  \rho_{1,1},\dots,e_{r,\ell_r}+\rho_{r,\ell_r}) =
  f(\bm{\zeta}_1,\dots,\bm{\zeta}_r)$.
  
  As in Bl\"aser and Jindal's algorithm, we take for $\bm{L}$ a ring
  of multivariate power series, namely $\bm{L}=\KK[[\bm e_1,\dots,\bm
      e_r]]$.  Our construction, involving the shifts by
  $(\rho_{1,1},\dots,\rho_{r,\ell_r})$ shows that at $\bm
  e_1=\cdots=\bm e_r=0$, $P_i(T)$ factors as $(T-1)\cdots
  (T-\ell_i)$. 

  Applying Newton's iteration, we deduce the existence of the
  requested power series roots
  $\bm{\zeta}_i=(\zeta_{i,1},\dots,\zeta_{i,\ell_i})$. In order to
  obtain the polynomial $\bar f$, we only need truncations of these
  roots at precision $d$. For $i=1,\dots,r$,  we can obtain the
  truncation of $\bm{\zeta}_i$ using $\softO(\ell_i {\ell_i + d
    \choose d})$ operations in $\KK$, where the factor ${\ell_i + d
    \choose d}$ accounts for the cost of multivariate power series
  arithmetic~\cite{LeSc03}. Taking all $i$'s into account, this adds
  up to $\softO(\ell {\ell + d \choose d})$ arithmetic operations.

  We then evaluate $f$ at these truncated power series. Since $f$ has
  degree at most $d$, this can be done using $O( {\ell + d \choose
    d})$ $(+,\times)$ operations on $\ell$-variate power series
  truncated in degree $d$, for a total of $\softO( {\ell + d \choose
    d}{}^2)$ operations in $\KK$.  This gives us $\bar f(e_{1,1} +
  \rho_{1,1},\dots,e_{r,\ell_r}+\rho_{r,\ell_r})$. We then apply the
  translation $(e_{i,j})_{i,j} \leftarrow (e_{i,j}-\rho_{i,j})_{i,j}$
  in order to obtain the polynomial $\bar f$, also at a cost of
  $\softO( {\ell + d \choose d}{}^2)$ operations in $\KK$: through
  successive multiplications, we incrementally compute the translates
  of all monomials of degree up to $d$ and then, before combining,
  using the coefficients of $\bar f(e_{1,1} +
  \rho_{1,1},\dots,e_{r,\ell_r}+\rho_{r,\ell_r})$.
\end{proof}


\subsection{$\Sc_\lambda$-equivariant polynomials: the {\sf Symmetrize} algorithm}
\label{divided-diffs}

As before we let $\lambda = (n_1^{\ell_1} \, n_2^{\ell_2}\, \dots\,
n_r^{\ell_r})$  denote a partition of $n$ of length $\ell = \sum_{i=1}^r \,
\ell_i$.  The aim of this subsection is to define {\em
  $\Sc_\lambda$-equivariant} systems of polynomials and give a
detailed description of an algorithm, called ${\sf Symmetrize}$, that
turns an $\Sc_{\lambda}$-equivariant system into one which is
$\Sc_{\lambda}$-invariant.

Consider a sequence of polynomials $\bm{q}=(q_1,\dots,q_\ell)$ in
$\KK[\bZ_1,\dots,\bZ_r]$. We say that $\bm{q}$ is $\Sc_{\lambda}$-{\it
  equivariant} if for any $\sigma$ in $\Sc_{\lambda}$ and $i$ in
$1,\dots,\ell$, we have $\sigma(q_i) = q_{\sigma(i)}$, or equivalently
\[
q(z_{\sigma(1)}, \dots, z_{\sigma(\ell)}) = q_{\sigma(i)}(z_1, \dots, z_\ell);
\] 
here, we are implicitly seeing the elements of $\Sc_\lambda$ as
permutations of $\{1,\dots,\ell\}$, as explained in
Section~\ref{ssec:partitions}.

In geometric terms, the zero-set $V(\bq)\subset \KKbar{}^\ell$ of such
a system is $\Sc_{\lambda}$-invariant, even though the equations
themselves may not be invariant. In what follows, we describe how to
derive equations $\bp=(p_1,\dots,p_\ell)$ that generate the same ideal
as $\bq$ (in a suitable localization of $\KK[\bZ_1,\dots,\bZ_r]$) and
are actually $\Sc_{\lambda}$-invariant. We will need an assumption,
discussed below, that $z_i-z_j$ divides $q_i-q_j$ for all pairwise
distinct indices $i,j$.

\begin{example}\label{ex:symmetrize} 
  Let $n=3$ and $\lambda = (1^2\, 2^1)$ so $r=2$, $\ell_1=2$,
  $\ell_2=1$ and $\ell=3$; we have $\Sc_\lambda=\Sc_2 \times \Sc_1$.
  We take $\bq = (q_1, q_2, q_3)$, where
  \begin{align*}
    q_1 &  = z_2 z_3^2 (z_1 + z_2 + 2z_3) + z_1 z_2 z_3^2,\\
    q_2 &  = z_1 z_3^2 (z_1 + z_2 + 2z_3) + z_1 z_2 z_3^2,\\
    q_3 &   = z_1 z_2 z_3 (z_1 + z_2 + 2z_3) + z_1 z_2 z_3^2.
  \end{align*}
  These polynomials satisfy both the equivariance property
  and the divisibility property. Our procedure will produce the 
  following polynomials:
  \begin{align*}
    p_1 &= (z_1+z_2+ 2z_3)z_3, \\
    p_2 &= (z_1+z_2+2z_3)z_2z_3+ (z_1+z_2+2z_3)z_1z_3, \\
    p_3 &= z_1 z_2 z_3 (z_1 + z_2 + 2z_3) + z_1 z_2 z_3^2.
  \end{align*}
  The polynomials $(p_1, p_2, p_3)$ are symmetric in $(z_1, z_2)$ and
  $(z_3)$, that is, are $\Sc_2 \times \Sc_1$-invariant. They generate
  the same ideal as $(q_1,q_2,q_3)$ in the localization
  $\KK[z_1,z_2,z_3]_{(z_1-z_2)(z_1-z_3)(z_2-z_3)}$.
\end{example}

In order to construct a set of invariant generators we make use of
{\em divided differences} of $\bm{q}=(q_1,\dots,q_\ell)$.  These are
defined as $q_{\{i\}} = q_i$ for $i$ in $\{1,\dots,\ell\}$, and for
each set of $k$ distinct integers $I := \{i_1, \dots, i_k\} \subset
\{1, \dots, \ell \}$, with $k\ge 2$,
\begin{equation}\label{recur_dividediff}
  q_{I} =  \frac{q_{\{i_1, \dots, i_{r-1}, i_{r+1},
      \dots, i_k\}} - q_{\{i_1, \dots, i_{q-1}, i_{q+1}, \dots, i_k\}} }{z_{i_r} - z_{i_q}},
\end{equation}
for any choice of $i_r,i_q$ in $I$, with $i_r\ne i_q$.  Indeed, it is
known (see e.g.~\cite[Theorem 1]{FaugereJules12}) that this defines $q_I$
unambiguously (independently of the choice of $i_r,i_q$). Another 
useful property of divided differences is the following:
\begin{itemize}
\item[(i)] if $z_i-z_j$ divides $q_i-q_j$ for all $1 \leq i < j \leq
  \ell$, then $q_{I}$ is a polynomial for all $I \subset \{1, \ldots,
  \ell\}$.
\end{itemize}
The following proposition then gives our construction of the polynomials
$\bm p$. In what follows, for $i \ge 0$, $\eta_i(y_1,\dots,y_s)$
denotes the degree $i$ elementary symmetric function in variables
$y_1,\dots,y_s$.
\begin{proposition}\label{lemma:vector_symme_divided} 
  Suppose the sequence $\bm{q} = (q_1,\dots,q_\ell)$ in
  $\KK[\bZ_1,\dots,\bZ_r]^\ell$ is $\Sc_{\lambda}$-equivariant and
  satisfies $z_i - z_j$ divides $q_i - q_j$ for $1 \leq i < j \leq
  \ell$.  For $0 \leq k \leq r-1$ and $1 \leq j < \ell_{k+1}$, define
  \begin{align*}
    p_{\tau_{k+1}} &= \sum_{i=\tau_{k}+1}^{\tau_{k+1}} \,
    q_{\{i, \tau_{k+1}+1, \dots, \tau_r\}},\\
    p_{\tau_k + j} &= \sum\limits_{s=1}^{j}
    \, \eta_{j - s}(z_{\tau_{k} +
      s+2}, \dots, z_{\tau_{k+1}}) \big(
    \sum\limits_{i=\tau_k+1}^{\tau_k+s} q_{\{i, \tau_k+s+1, \dots,
        \tau_r\}}\big). 
  \end{align*}
  Then the sequence $$\bp = \big( p_{1}, \dots, p_{\tau_1}, \,
  p_{\tau_1+1}, \dots, p_{\tau_2},\ \dots, p_{\tau_{r-1}+ 1}, \dots,
  p_{\tau_r}\big)$$ is in
  $\KK[\bZ_1,\dots,\bZ_r]^{\Sc_{\lambda}}$. If all $q_i$'s have degree
  at most $d$, then $\deg(p_i) \leq d-\ell+i$ holds for
  $i=1,\dots,\ell$. In particular, if $\ell \geq d+2$, then $p_i = 0$
  for all $i=1, \dots, \ell-d-1$.
\end{proposition}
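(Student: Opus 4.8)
The plan is to establish the three assertions of the proposition in order: first the $\Sc_\lambda$-invariance of $\bp$, then the degree bound $\deg(p_i)\le d-\ell+i$, and finally the vanishing of the first $\ell-d-1$ components, which is an immediate corollary of the degree bound. Throughout I would work block by block: since $\Sc_\lambda=\Sc_{\ell_1}\times\cdots\times\Sc_{\ell_r}$ and the formulas for $p_{\tau_k+j}$ only involve the indices $\tau_k+1,\dots,\tau_r$ (i.e.\ the $(k+1)$-st block and all later blocks), it suffices to check invariance under each factor $\Sc_{\ell_{m}}$ separately. For a block index $m\le k$ the polynomial $p_{\tau_k+j}$ does not involve the variables $z_{\tau_{m-1}+1},\dots,z_{\tau_m}$ at all, so invariance under $\Sc_{\ell_m}$ is trivial. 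Hence the real content is invariance under the factors $\Sc_{\ell_{m}}$ with $m\ge k+1$.

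For that I would lean on two structural facts about divided differences of an equivariant family, both of which follow from property~(i) and the unambiguous-definition property cited before the proposition (from~\cite[Theorem~1]{FaugereJules12}): (a) for a permutation $\sigma$ that fixes the set $I$ setwise, $\sigma(q_I)=q_I$ — this is because permuting the $z_i$ inside $I$ permutes the nodes in the divided-difference recursion~\eqref{recur_dividediff} but leaves the value unchanged, using equivariance $\sigma(q_i)=q_{\sigma(i)}$; and (b) the symmetrization identity $\sum_{i\in J} q_{J}$-type sums behave well under enlarging the index set. The key combinatorial observation is that in $p_{\tau_{k+1}}$ the sum runs over all $i$ in the $(k+1)$-st block while the "tail" $\{\tau_{k+1}+1,\dots,\tau_r\}$ is a full union of later blocks; a permutation $\sigma\in\Sc_{\ell_{k+1}}$ permutes the summands among themselves (replacing $q_{\{i,\tail\}}$ by $q_{\{\sigma(i),\tail\}}$, since $\sigma$ fixes the tail pointwise), so the sum is invariant. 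For the factors $\Sc_{\ell_m}$ with $m>k+1$: such a $\sigma$ fixes every index $i\le\tau_k+s$ and permutes the block $\{\tau_{m-1}+1,\dots,\tau_m\}\subset\{\tau_k+s+1,\dots,\tau_r\}$ setwise, so by fact~(a) each $q_{\{i,\tau_k+s+1,\dots,\tau_r\}}$ is individually fixed; and the elementary-symmetric prefactor $\eta_{j-s}(z_{\tau_k+s+2},\dots,z_{\tau_{k+1}})$ involves only block-$(k+1)$ variables with index $>\tau_k+s+1\ge$ anything touched, hence is also fixed when $m>k+1$. The one genuinely delicate case is invariance of $p_{\tau_k+j}$ under $\Sc_{\ell_{k+1}}$ itself, where the $\eta_{j-s}$ prefactors and the partial sums $\sum_{i=\tau_k+1}^{\tau_k+s}$ interact; here I expect to need an Abel-summation / telescoping argument showing that $\sum_s \eta_{j-s}(z_{\tau_k+s+2},\dots)\big(\sum_{i\le\tau_k+s}q_{\{i,\dots\}}\big)$ is a symmetric function of $z_{\tau_k+1},\dots,z_{\tau_{k+1}}$ — essentially a Lagrange/Newton interpolation identity expressing that this combination equals a coefficient of a symmetric generating polynomial built from the $q_{\{i,\tail\}}$. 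This is the main obstacle and the step I would spend the most care on.

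For the degree bound, the input $q_i$ have $\deg q_i\le d$; by the recursion~\eqref{recur_dividediff} each divided difference $q_I$ with $|I|=k$ satisfies $\deg q_I\le d-(k-1)$ (each step drops the degree by at least one, since we divide a difference by $z_{i_r}-z_{i_q}$, and the difference vanishes on $z_{i_r}=z_{i_q}$). For $p_{\tau_{k+1}}$ the index sets have size $1+(\tau_r-\tau_{k+1})=1+(\ell-\tau_{k+1})$, giving $\deg\le d-(\ell-\tau_{k+1})=d-\ell+\tau_{k+1}$, as required. For $p_{\tau_k+j}$, the term indexed by $s$ has $q$-factor a divided difference over a set of size $1+(\tau_r-\tau_k-s)=1+\ell-\tau_k-s$, hence degree $\le d-\ell+\tau_k+s-1$, multiplied by $\eta_{j-s}$ of degree $j-s$; the product has degree $\le d-\ell+\tau_k+j-1<d-\ell+(\tau_k+j)$, which is even slightly better than claimed, so the stated bound $\deg p_i\le d-\ell+i$ holds uniformly. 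Finally, if $\ell\ge d+2$ then for $i\le\ell-d-1$ we get $d-\ell+i\le -1<0$, forcing $p_i=0$; this last assertion is then immediate.

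<br>

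One bookkeeping point worth flagging: the indices $\tau_k+s+1,\dots,\tau_r$ appearing inside $q_{\{i,\tau_k+s+1,\dots,\tau_r\}}$ must be disjoint from $i$ for the divided difference to make sense, which holds since $i\le\tau_k+s<\tau_k+s+1$; and the $\eta_{j-s}$ arguments $z_{\tau_k+s+2},\dots,z_{\tau_{k+1}}$ are nonempty exactly when $s\le j-1<\ell_{k+1}-1$, consistent with the range $1\le j<\ell_{k+1}$. I would verify these edge cases (e.g.\ $j=1$, or $k=r-1$ where the tail is empty) explicitly before running the general argument, since Example~\ref{ex:symmetrize} with $\lambda=(1^2\,2^1)$ already exercises the $j=1$ and empty-tail situations and can serve as a sanity check for the formulas.
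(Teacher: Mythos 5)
Your outline agrees with the paper's strategy at the top level (decompose the invariance check block by block, lean on the equivariance property of divided differences, isolate the within-block-$(k+1)$ case as the genuinely hard one), and the degree bound is handled in roughly the same by-inspection way the paper does. However, the proposal is not a complete proof, and one intermediate claim is simply false.

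First, the claim that for $m \leq k$ the polynomial $p_{\tau_k+j}$ ``does not involve the variables $z_{\tau_{m-1}+1},\dots,z_{\tau_m}$ at all, so invariance under $\Sc_{\ell_m}$ is trivial'' is wrong. The index set $\{i,\tau_k+s+1,\dots,\tau_r\}$ contains no indices $\leq \tau_k$, but the $q_i$ themselves are arbitrary polynomials in \emph{all} the variables $\bZ_1,\dots,\bZ_r$, so the divided differences $q_I$ — and hence $p_{\tau_k+j}$ — can certainly depend on $z_1,\dots,z_{\tau_k}$. (A toy example: $r=2$, $\ell_1=\ell_2=1$, $q_1=q_2=z_1+z_2$ gives $p_2=z_1+z_2$.) Invariance under $\Sc_{\ell_m}$ with $m\leq k$ is not trivial; it must be argued from the equivariance lemma, exactly as the paper does: such a $\sigma$ fixes the index set $I$ pointwise, so $\sigma(q_I)=q_{\sigma(I)}=q_I$, and the elementary-symmetric prefactors depend only on block-$(k+1)$ variables. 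The conclusion happens to be right, but the stated reason is not.

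Second, and more seriously, the core of the proposition — invariance of $p_{\tau_k+j}$ under $\Sc_{\ell_{k+1}}$ — is identified but not actually carried out. You write ``here I expect to need an Abel-summation / telescoping argument \dots\ this is the main obstacle and the step I would spend the most care on.'' That step is where essentially all of the paper's effort lies: the paper introduces the auxiliary sums $p_{\tau_k+j,t}$, shows they satisfy a recursion, and runs a decreasing induction on $t$ with a careful telescoping computation involving the divided-difference recursion~\eqref{recur_dividediff}. Correctly diagnosing where the difficulty is does not replace doing it; as submitted, the proposal establishes only the easy cases and defers the one that actually requires work. There is also a small off-by-one in the degree estimate (the degree drop of a divided difference over a set of size $|I|$ is $|I|-1$, not $|I|$, so the bound is $\deg p_{\tau_k+j}\leq d-\ell+\tau_k+j$ with equality rather than the strict inequality you claim), though this does not affect the stated conclusion.
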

The degree bound comes by inspection.  We defer the rest of the proof
(which follows by induction) to Appendix~\ref{sec:proof}.
  
\medskip
  
We can also show that $\bm{q}$ can be written as a linear
combination of $\bm{p}$, that is, we can find an $\ell \times \ell$
 matrix polynomial $\mat{U}$ such that $\bm{p} \mat{U} = \bm{q} $.
The construction of $\mat{U}$ proceeds as follows. Let $\mat{M}$ be
the block-diagonal matrix with blocks $\mat{M}_1,\dots,\mat{M}_r$
given by
\[
   \mat{M}_{k+1} = \begin{pmatrix}
     1 &  \eta_1(z_{\tau_k+3}, \dots, z_{\tau_{k+1}})
     &   \eta_{2}(z_{\tau_k+3}, \dots, z_{\tau_{k+1}}) & \cdots &
      \eta_{\ell_{k+1}-2}(z_{\tau_k+3}, \dots, z_{\tau_{k+1}})  &  0  \\
     0 & 1  &  \eta_1(z_{\tau_k+4}, \dots, z_{\tau_{k+1}}) & \cdots &
      \eta_{\ell_{k+1}-3}(z_{\tau_k+4}, \dots, z_{\tau_{k+1}})  & 0  \\
     0 & 0 & 1 & \cdots &
      \eta_{\ell_{k+1}-4}(z_{\tau_k+5}, \dots, z_{\tau_{k+1}})  & 0  \\
     \vspace{0.2cm}
     \vdots & \vdots & \vdots &  & \vdots & \vdots \\
     \vspace{0.2cm}
     0 & 0 & 0 & \cdots & 1 & 0 \\
     0 & 0 & 0 & \cdots & 0 & 1
   \end{pmatrix},
 \]
for all $0 \leq k \leq r-1$.  Note that $\det(\mat{M}_{k+1})=1$ for
all $k$, hence $\det(\mat{M})=1$.
 
For a non-negative integer $u$, denote by $\mat{I}_u$ the identity
matrix of size $u$ and by $\mat{0}$ a zero matrix. Then for $k=0,
\dots, r-1$ and $j = 1, \dots, \ell_{k+1}$, we define the following
$\tau_r \times \tau_r$ polynomial matrices. Set $\mat{B}_{\tau_0+1} =
\mat{I}_{\tau_r}$, $\mat{C}_{\tau_0+1} = \mat{I}_{\tau_r}$,
$\mat{D}_{\tau_0+j} = \mat{I}_{\tau_r}$, and
 \begin{equation}\label{BE}
\mat{B}_{\tau_k+j} =  \begin{pmatrix}
  \begin{matrix}
 \mat{I}_{\tau_k}
  \end{matrix}
  & \rvline & \mat{0} & \rvline & \mat{0}\\
\hline
  \mat{0}   & \rvline & \mat E_{k, j} & \rvline & \mat{0} \\ 
\hline 
\mat{0} & \rvline &  \mat{0} & \rvline &
  \begin{matrix}
\mat{I}_{\tau_r - \tau_{k+1}}
  \end{matrix}
\end{pmatrix}, \text{~with~}
\mat{E}_{k, j} = 
\begin{pmatrix}
  \begin{matrix}
 \mat{I}_{j-1}
  \end{matrix} & \rvline & 
\begin{matrix} z_{\tau_k+j} - z_{\tau_k + 1} \\ \vdots \\
  z_{\tau_k+j} - z_{\tau_k + j           -1} 
\end{matrix}& \rvline & \mat{0}\\
\hline
\begin{matrix} 0 & \hdots & 0
\end{matrix} & \rvline & -1 & \rvline & \mat{0}\\
\hline 
\mat{0} & \rvline & 0 & \rvline & \begin{matrix}
 \mat{I}_{\ell_{k+1}-j}
  \end{matrix}
\end{pmatrix}, \\
\end{equation}

\begin{equation}\label{CF}
\mat{C}_{\tau_k+j} = 
 \begin{pmatrix}
  \begin{matrix}
 \mat{I}_{\tau_k}
  \end{matrix} & \rvline &   \mat{0} & \rvline & \mat{0} \\
\hline 
\mat{0}  & \rvline &  \mat{F}_{k, j} & \rvline & \mat{0} \\
\hline 
\mat{0} & \rvline &  \mat{0} & \rvline &
  \begin{matrix}
\mat{I}_{\tau_r - \tau_{k+1}}
  \end{matrix}
\end{pmatrix}, \text{~with~}
 \mat{F}_{k, j} = \begin{pmatrix}
  \begin{matrix}
{\rm {\bf diag}}(z_{\tau_k+j} -
z_{\tau_k + t})_{t=1}^{j           -1}
  \end{matrix} & \rvline & 
\begin{matrix} \mat{0}
\end{matrix} & \rvline & \mat{0} \\
\hline
\begin{matrix} \frac{-1}{j} & \hdots &
  \frac{-1}{j} 
\end{matrix} & \rvline & \frac{-1}{j} & \rvline & \mat{0}  \\
\hline 
\mat{0} &  \rvline & 0 & \rvline & \mat{I}_{\ell_{k+1}-j}
\end{pmatrix}, \\ 
\end{equation}

\begin{equation}\label{DG}
\mat{D}_{\tau_k+j} = 
\begin{pmatrix} 
  \begin{matrix}
{\rm {\bf diag}}(z_{\tau_k+j} - z_t)_{t=1}^{\tau_k}
  \end{matrix} & \rvline &   \mat{0} & \rvline & \mat{0} \\
\hline 
\mat{G}_{k, j}  & \rvline &  \mat{I}_{\ell_{k+1}} & \rvline & \mat{0} \\
\hline 
\mat{0} & \rvline &  \mat{0} & \rvline &
  \begin{matrix}
\mat{I}_{\tau_r - \tau_{k+1}}
  \end{matrix}
\end{pmatrix}, \ 
\mat{G}_{k, j} \, : \, j^{th} {\rm \, row \, is \ } (1,
\dots, 1), {\rm \,rest \, zeros}. 
\end{equation}

Then we have:

\begin{proposition}\label{lemma:vector_symme_divided2} 
  Suppose the sequence $\bm{q} = (q_1,\dots,q_\ell)$ in
  $\KK[\bZ_1,\dots,\bZ_r]^\ell$ satisfies the conditions of
  Proposition \ref{lemma:vector_symme_divided}.  Let $\Delta=\prod_{1
    \le i < j \leq \ell} (z_i-z_j)$ be the Vandermonde determinant
  associated with $z_1,\dots,z_\ell$. Then the matrix $\mat{U}$ in
  $\KK[\bZ_1,\dots,\bZ_r]^{\ell \times \ell}$, defined by
  $$\mat{M} \cdot \mat{U} = \left( \prod_{k=0}^{r-1} \, \prod_{j =
    1}^{\ell_{k+1}} \, \mat{B}_{\tau_k+j} \, \mat{C}_{\tau_k+j} \,
  \mat{D}_{\tau_k + j} \right) $$ has determinant a unit in
  $\KK[\bZ_1,\dots,\bZ_r,1/\Delta]$ and satisfies $\bp \mat{U} = \bq$.
\end{proposition}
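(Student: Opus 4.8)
The plan is to verify the two asserted properties of $\mat U$ — that $\det(\mat U)$ is a unit in $\KK[\bZ_1,\dots,\bZ_r,1/\Delta]$ and that $\bp\,\mat U = \bq$ — by analyzing the explicit factorization $\mat M\cdot\mat U = \prod_{k,j}\mat B_{\tau_k+j}\mat C_{\tau_k+j}\mat D_{\tau_k+j}$ one elementary factor at a time. First I would dispose of the determinant claim, which is the easy half: $\det(\mat M)=1$ was already noted, and each of the blocks $\mat E_{k,j}$, $\mat F_{k,j}$, and the relevant triangular block of $\mat D_{\tau_k+j}$ has a determinant that is, up to a nonzero scalar, a product of factors $z_a - z_b$ with $a\neq b$; hence $\det\big(\prod_{k,j}\mat B_{\tau_k+j}\mat C_{\tau_k+j}\mat D_{\tau_k+j}\big)$ is a scalar multiple of a product of such differences, which is a unit in $\KK[\bZ_1,\dots,\bZ_r,1/\Delta]$ since $\Delta=\prod_{i<j}(z_i-z_j)$ and each factor $z_a-z_b$ divides some power of $\Delta$. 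Dividing by $\det(\mat M)=1$ gives the claim for $\det(\mat U)$. I should double-check that each $z_a-z_b$ appearing really does divide $\Delta$ (i.e. that only differences of the $z$'s, not more exotic polynomials, show up) — this is immediate from the shapes in \eqref{BE}, \eqref{CF}, \eqref{DG}.

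The substantive half is the identity $\bp\,\mat U = \bq$, equivalently $\bp\,\mat M\,\mat U = \bp\,\mat M \cdot (\text{the big product})$ rearranged to $(\bp\,\mat M)\cdot\big(\prod_{k,j}\mat B\mat C\mat D\big)=\bq\,\mat M$ — wait, more precisely I want to show $\bp\cdot\big(\mat M^{-1}\prod_{k,j}\mat B_{\tau_k+j}\mat C_{\tau_k+j}\mat D_{\tau_k+j}\big)=\bq$, i.e. $(\bp\,\mat M^{-1})\cdot\prod_{k,j}\mat B_{\tau_k+j}\mat C_{\tau_k+j}\mat D_{\tau_k+j}=\bq$. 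The cleanest route is to introduce the intermediate row vector $\tilde{\bp}:=\bp\,\mat M^{-1}$ and observe, from the triangular-plus-identity shape of $\mat M_{k+1}$, that $\tilde p_{\tau_k+j}=\sum_{i=\tau_k+1}^{\tau_k+j} q_{\{i,\tau_k+j+1,\dots,\tau_r\}}$ for each block — i.e. multiplying $\bp$ by $\mat M^{-1}$ exactly strips off the elementary-symmetric coefficients $\eta_{j-s}(\dots)$ that appear in the definition of $p_{\tau_k+j}$ in Proposition \ref{lemma:vector_symme_divided}. (This is a short computation with the inverse of the unitriangular block $\mat M_{k+1}$; the entries of $\mat M_{k+1}^{-1}$ are signed elementary symmetric functions again, and the telescoping matches the stated formula for the $p$'s.) With $\tilde{\bp}$ in hand, I would then show that right-multiplication by each elementary factor $\mat B_{\tau_k+j}$, then $\mat C_{\tau_k+j}$, then $\mat D_{\tau_k+j}$, in the prescribed order over $(k,j)$, transforms the "partial divided-difference" vector $\tilde{\bp}$ step by step into $\bq$, by undoing one level of divided differences at a time via the defining recurrence \eqref{recur_dividediff}. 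Concretely, the blocks $\mat E_{k,j}$ (multiplication by differences $z_{\tau_k+j}-z_{\tau_k+t}$ and a $-1$), $\mat F_{k,j}$ (the diagonal of differences together with the $-1/j$ row), and $\mat G_{k,j}$ inside $\mat D$ (the all-ones row together with the diagonal $z_{\tau_k+j}-z_t$) are precisely engineered so that applying them implements the identity $q_{I\setminus\{a\}} = (z_a-z_b)\,q_I + q_{I\setminus\{b\}}$ rearranged from \eqref{recur_dividediff}, restoring coordinates from divided differences of order $\tau_r-\tau_k-j+1$ down to order $0$.

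I would organize the induction on the pair $(k,j)$ ordered lexicographically, maintaining the invariant that after processing all factors up to and including $(k,j)$ the current row vector has as its first few coordinates the already-recovered $q_i$'s and as its remaining coordinates the appropriate divided differences $q_I$ with $I$ an interval-suffix of $\{1,\dots,\tau_r\}$; the base case is the formula for $\tilde{\bp}$ just established, and the final state ($k=r-1$, $j=\ell_r$) is $\bq$ itself. The index bookkeeping — tracking exactly which sets $I$ index the coordinates of the intermediate vectors, and checking that the off-diagonal entries of $\mat B,\mat C,\mat D$ hit the right coordinates — is where essentially all the work lies, and it is the step I expect to be the main obstacle: it is not conceptually deep but it is notationally heavy, and one must be careful that the three sub-steps $\mat B\to\mat C\to\mat D$ per index, and the ordering of the product over $(k,j)$, compose in the correct sense (the product $\prod_{k,j}$ should be read left-to-right in increasing order, and right-multiplication then applies the leftmost factor first). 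Because the degree/sanity checks were already noted "by inspection" in Proposition \ref{lemma:vector_symme_divided} and $\det\mat M=1$ is given, the remaining content is exactly this combinatorial verification, which I would carry out in full in the appendix; here I would present the definition of $\tilde{\bp}$, state the inductive invariant, and illustrate one step of the recurrence, referring to \eqref{recur_dividediff} for the key algebraic identity being iterated.
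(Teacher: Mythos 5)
Your proposal is correct and follows essentially the same route as the paper's proof: the vector $\tilde{\bp}=\bp\,\mat M^{-1}$ you introduce is exactly the intermediate vector $\h$ the paper defines via $h_{\tau_k+j}=\sum_{i=\tau_k+1}^{\tau_k+j}q_{\{i,\tau_k+j+1,\dots,\tau_r\}}$, and the paper likewise tracks how right-multiplication by each $\mat B_{\tau_k+j}\mat C_{\tau_k+j}\mat D_{\tau_k+j}$ successively converts $\h$ into $\bq$ by undoing one divided difference at a time. (The only caveat is a typo in the paper's appendix writing $\h=\bp\,\mat M$ where, consistently with the stated formula $p_{\tau_k+j}=\sum_s\eta_{j-s}(\cdot)h_{\tau_k+s}$, it should read $\bp=\h\,\mat M$; your reading is the correct one.)
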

The proof of Proposition \ref{lemma:vector_symme_divided2} follows by
induction and is deferred to Appendix~\ref{sec:proof2}.

\begin{example} 
  Consider again the polynomials $\bq=(q_1,q_2,q_3)$ and
  $\bp=(p_1,p_2,p_3)$ of Example~\ref{ex:symmetrize}.  The matrix
  $\mat{U}$ which relates $\bp$ to $\bq$ is constructed as
  follows. For $k=0$ and $j=1, 2$ let
  \[   
    \mat B_{1} =
    \begin{pmatrix}
      1 & 0 & 0\\
      0 & 1 & 0 \\
      0 & 0 & 1      
    \end{pmatrix}, \quad \quad \quad
    \mat  C_{1} =
       \begin{pmatrix}
         1 & 0 & 0 \\
         0 & 1 & 0 \\
         0 & 0  & 1
      \end{pmatrix} \quad \quad \quad \quad
        \mat D_ 1 = 
        \begin{pmatrix}
          1 & 0 & 0 \\
          0 & 1 & 0 \\
          0 & 0  & 1
        \end{pmatrix},
      \]
      \[
        \mat B_{2} =
    \begin{pmatrix}
       1 & z_2-z_1  & 0 \\
       0 & -1~ & 0 \\
        0 & 0 & 1    
    \end{pmatrix}, \quad
      \mat C_2 =
      \begin{pmatrix}
        z_2 - z_1 & 0  & 0 \\
        - \frac{1}{2}~~ & - \frac{1}{2}~~  & 0 \\
        0 & 0 & 1
        \end{pmatrix}, 
        \quad
         \mat D_2 = 
        \begin{pmatrix}
          1 & 0 & 0 \\
          0 & 1 & 0 \\
          0 & 0  & 1
          \end{pmatrix}
        \]
while for  $k=1$ and $j=1$ we have
          \[
    \mat B_{3} =
    \begin{pmatrix}
      1 & 0 & 0\\
      0 & 1 & 0 \\
      0 & 0 & -1~~      
    \end{pmatrix}, \quad
    \mat C_{3} =
    \begin{pmatrix}
      1 & 0 & 0\\
      0 & 1 & 0 \\
      0 & 0 & -1~~     
    \end{pmatrix},
    \quad
    \mat D_3 =
    \begin{pmatrix}
     z_3 - z_1 & 0 & 0 \\
     0 &  z_3 - z_2 & 0 \\
     1 & 1 & 1
      \end{pmatrix}. 
    \]
    In the case $\lambda = (1^2 \, 2^1)$, 
    \[
      \mat M = \begin{pmatrix}
        1 & 0 & 0\\
        0 & 1 & 0 \\
        0 & 0 & 1
        \end{pmatrix} 
        \]
        and hence
    \begin{align*}
      \mat U = ( \mat B_1 \mat C_1  \mat D_1) (  \mat B_2 \mat C_2
      \mat D_2 ) ( \mat B_3 \mat C_3  \mat D_3 )
             = \begin{pmatrix}
                 \frac{1}{2}(z_3 - z_1)(z_2 - z_1) & \frac{-1}{~2}(z_2 - z_1)(z_3-z_2) & 0 \\
                 \frac{1}{2}(z_3-z_1) & \frac{1}{2}(z_3-z_2) & 0 \\
                 1 & 1 & 1              
               \end{pmatrix}.
    \end{align*}
    Note that $\det(\mat U) = \frac{1}{2}(z_3-z_1)(z_3 - z_2)(z_2 - z_1)$.
   \end{example}
The formulas defining $\bp$ are straightforward to implement.  The
following proposition describes the resulting algorithm, called {\sf
  Symmetrize}, and gives the cost of this procedure.
\begin{proposition}\label{prop:symmetrize}
  There exists an algorithm ${\sf Symmetrize}(\lambda,\bq)$ which
  takes as input $\bq$ as in
  Proposition~\ref{lemma:vector_symme_divided} and a partition
  $\lambda$ of $ n$, and returns $\bp$ as defined in that
  proposition. For $\bq$ of degree at most $d$, the runtime is
  $\softO(\ell^3 {\ell + d \choose d})$ operations in $\KK$.
\end{proposition}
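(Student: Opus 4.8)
The plan is to read the cost directly off the explicit formulas defining $\bp$ in Proposition~\ref{lemma:vector_symme_divided}, proceeding in two stages: first bound the cost of computing all the divided differences $q_I$ that appear, then bound the cost of assembling the $p_i$'s from them. First I would observe which divided differences are actually needed: for each $k$ in $0,\dots,r-1$ the formulas for $p_{\tau_k+1},\dots,p_{\tau_{k+1}}$ only involve the quantities $q_{\{i,\tau_k+s+1,\dots,\tau_r\}}$ for $\tau_k+1 \le i \le \tau_k+s \le \tau_{k+1}$, i.e. divided differences indexed by an initial segment of block $k+1$ together with the full ``tail'' $\{\tau_{k+1}+1,\dots,\tau_r\}$. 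So the set of divided differences we must compute is of size $O(\ell^2)$, and each such $q_I$ can be obtained from the base polynomials $q_1,\dots,q_\ell$ by iterating the recurrence~\eqref{recur_dividediff}: building up $q_I$ for $|I|=t$ from two divided differences on $(t-1)$-subsets costs one subtraction of polynomials and one exact division by a binomial $z_{i_r}-z_{i_q}$. Since all intermediate $q_I$ have degree at most $d$ (by property (i) and the degree drop), each such step costs $\softO(\binom{\ell+d}{d})$ operations in $\KK$ — this is the cost of arithmetic (addition, and exact division by a linear form, which reduces to a triangular linear solve or a power-series-style division) on $\ell$-variate polynomials of degree $\le d$. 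Organizing the computation so that each needed $q_I$ is formed once, from previously computed ones, there are $O(\ell^2)$ such steps, for a subtotal of $\softO(\ell^2\binom{\ell+d}{d})$.

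Next I would bound the assembly of the $p_i$'s. By Proposition~\ref{lemma:vector_symme_divided}, each $p_{\tau_k+j}$ is a sum over $s=1,\dots,j$ of $\eta_{j-s}(z_{\tau_k+s+2},\dots,z_{\tau_{k+1}})$ times a partial sum $\sum_{i=\tau_k+1}^{\tau_k+s} q_{\{i,\tau_k+s+1,\dots,\tau_r\}}$. The elementary symmetric polynomials $\eta_{j-s}$ in at most $\ell$ variables can all be precomputed cheaply (degree $\le \ell$, hence $\softO(\binom{\ell+d}{d})$-size objects, with a negligible overall cost once they are built incrementally). The partial sums inside the parentheses are nested, so they can be accumulated incrementally in $s$; thus forming one $p_{\tau_k+j}$ requires $O(j) = O(\ell)$ polynomial multiplications and additions of degree-$\le d$ polynomials, i.e. $\softO(\ell\binom{\ell+d}{d})$ operations. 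There are $\ell$ indices $i$, giving $\softO(\ell^2\binom{\ell+d}{d})$ for all the $p_i$'s. Combining the two stages yields the claimed bound $\softO(\ell^3\binom{\ell+d}{d})$; in fact a slightly sharper analysis suggests $\softO(\ell^2\binom{\ell+d}{d})$ would already suffice, but $\softO(\ell^3\binom{\ell+d}{d})$ is a safe statement and I would state it as such to leave room for the bookkeeping.

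The main obstacle, I expect, is not any single hard step but the careful bookkeeping needed to argue that all intermediate quantities genuinely have degree at most $d$ (so that each arithmetic operation really costs $\softO(\binom{\ell+d}{d})$ and no coefficient blow-up occurs), and that the exact divisions by $z_{i_r}-z_{i_q}$ in~\eqref{recur_dividediff} can be performed within this cost — for the latter one uses property (i) from Section~\ref{divided-diffs}, guaranteeing divisibility, and performs the division monomial-by-monomial in, say, the variable $z_{i_r}$, at a cost linear in the number of monomials. A secondary subtlety is to make sure the $O(\ell^2)$ count of distinct needed divided differences, and the incremental scheme that computes each only once, are stated precisely; this is where one must use the specific shape of the index sets $\{i,\tau_k+s+1,\dots,\tau_r\}$ appearing in Proposition~\ref{lemma:vector_symme_divided} rather than the full collection of all $2^\ell$ divided differences. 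Once these points are pinned down, the runtime bound follows by summation.
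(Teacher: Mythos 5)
Your proposal takes a genuinely different route from the paper's for the divided-difference computations, and it has a real gap in the cost bookkeeping.

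The paper factors the work through a dedicated subroutine ${\sf Divided\_Difference}$ (Lemma~\ref{lem:dd}) which computes a \emph{single} $q_I$ in $\softO(\ell\binom{\ell+d}{d})$ operations by evaluation/interpolation at $\binom{\ell+d}{d}$ points (following~\cite{CaKaYa89}); it then applies this blindly to the $O(\ell^2)$ required index sets $\{\tau_k+i,\tau_k+s+1,\dots,\tau_r\}$ with no sharing, giving $\softO(\ell^3\binom{\ell+d}{d})$. You instead work directly on dense multivariate representations (subtraction, then exact division by $z_a-z_b$ done by one-variable long division), and propose to share partial divided differences across all $O(\ell^2)$ needed index sets. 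The per-step cost of $\softO(\binom{\ell+d}{d})$ for subtraction and exact division is plausible, and this is indeed an alternative to the evaluation/interpolation machinery.

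The gap is in your count of $O(\ell^2)$ total recurrence steps. Recurrence~\eqref{recur_dividediff} produces a $t$-set divided difference from \emph{two} $(t-1)$-set ones, and the second parent is typically not among the $O(\ell^2)$ ``needed'' sets of the form $\{\tau_k+i\}\cup\{\tau_k+s+1,\dots,\tau_r\}$: for instance, deleting $\tau_k+i$ yields the pure-tail set $\{\tau_k+s+1,\dots,\tau_r\}$, which is not of that shape. Absent an explicit scheme that bounds the collection of intermediates, the worst case is a full Newton table of $O(\ell^2)$ entries per needed $q_I$, giving $O(\ell^4)$ steps, not $O(\ell^2)$ or even $O(\ell^3)$. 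Stating $\ell^3$ ``to be safe'' does not fix this; $\ell^3$ only follows if you either exhibit the shared computation graph (e.g., first build the $O(\ell)$ pure-tail divided differences via one $O(\ell^2)$-entry table, then show each needed $q_{\{\tau_k+i,\tau_k+s+1,\dots,\tau_r\}}$ is one step from $q_{\{\tau_k+i,\tau_k+s+2,\dots,\tau_r\}}$ and that tail), or prove, as the paper's Lemma~\ref{lem:dd} claims, that each individual divided difference costs only $\softO(\ell\binom{\ell+d}{d})$. The assembly phase of your argument is fine and essentially matches the paper's.
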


The proof occupies the rest of this section. Write $\bq=(q_1,\dots,q_\ell)$, and recall the
expressions defining $\bp=(p_1,\dots,p_\ell)$:
for $k=0, \dots,r-1$, we have
$$p_{\tau_{k}+\ell_{k+1}} = \sum_{i=\tau_{k}+1}^{\tau_{k+1}} \,
q_{\{i, \tau_{k+1}+1, \dots, \tau_r\}}$$ and for $j = 1, \dots,
\ell_{k+1}-1$,
\[
p_{\tau_k + j} = \sum\limits_{s=1}^{j}
 \eta_{j - s}(z_{\tau_{k}  + s+2}, \dots, z_{\tau_{k+1}}) \big(
\sum\limits_{i=1}^{s} q_{\{\tau_k+i, \tau_k+s+1, \dots,
    \tau_r\}}\big).
\]
The main issue is to compute the divided differences
$q_{\{\tau_k+i, \tau_k+s+1, \dots,\tau_r\}}$ appearing in these
expressions, for $k=0,\dots,r-1$ and $1 \le i \le s \le
\ell_{k+1}$. Once this is done, the combinations necessary to obtain
$p_{\tau_k + j}$ are easily carried out. The main
ingredient in the proof is the following lemma which describes the
computation of a single divided difference.
\begin{lemma}\label{lem:dd}
  There exists an algorithm ${\sf Divided\_Difference}(\bq,I)$ that
  takes as input $\bq$ as in Proposition~\ref{prop:symmetrize} and a
  subset $I=\{i_1,\dots,i_k\}$ of $\{1,\dots,\ell\}$, and returns
  $q_{I}$.  For $\bq$ of degree at most $d$, the runtime is
  $\softO(\ell {\ell + d \choose d})$ operations in $\KK$.
\end{lemma}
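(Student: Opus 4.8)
The plan is to compute the divided difference $q_I$ for $I=\{i_1,\dots,i_k\}$ directly from the recursive definition~\eqref{recur_dividediff}, organized so that each of the $k-1$ levels of recursion costs a single pass over polynomials of degree at most $d$ in $\ell$ variables. First I would observe that, since $q_I$ does not depend on the order in which we peel off indices (a fact quoted from~\cite[Theorem~1]{FaugereJules12}), we may fix an ordering $i_1,\dots,i_k$ of $I$ and compute the triangular table of ``partial'' divided differences $q_{\{i_a,i_{a+1},\dots,i_b\}}$ for $1\le a\le b\le k$ by the Neville-type scheme: the diagonal entries $q_{\{i_a\}}=q_{i_a}$ are given, and each entry of length $m+1$ is obtained from two entries of length $m$ via one subtraction and one exact division by the linear form $z_{i_a}-z_{i_b}$. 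There are $O(k^2)=O(\ell^2)$ such entries.

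The cost analysis then goes as follows. Each intermediate quantity $q_{\{i_a,\dots,i_b\}}$ is a polynomial of degree at most $d$ in the $\ell$ variables $\bZ_1,\dots,\bZ_r$ (its degree actually drops with the length of the index set, by the same kind of inspection used in Proposition~\ref{lemma:vector_symme_divided}, but the bound $d$ suffices here), so it has at most $\binom{\ell+d}{d}$ monomials and can be stored in that many coefficients. A subtraction of two such polynomials costs $O(\binom{\ell+d}{d})$ operations in $\KK$. The exact division of a polynomial of degree $\le d$ by the binomial $z_{i_a}-z_{i_b}$ — which is guaranteed to be exact by property~(i), using the hypothesis that $z_i-z_j$ divides $q_i-q_j$ for all $i<j$, propagated through the recursion — can be carried out monomial by monomial, or via a univariate exact division after regarding the polynomial as univariate in $z_{i_a}$ with coefficients that are polynomials in the remaining variables; either way this costs $\softO(\binom{\ell+d}{d})$ operations in $\KK$. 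Summing over the $O(\ell^2)$ table entries would give $\softO(\ell^2\binom{\ell+d}{d})$, which is already acceptable but slightly worse than claimed; to reach the stated $\softO(\ell\binom{\ell+d}{d})$ one uses that we only need the single entry $q_{\{i_1,\dots,i_k\}}$, so rather than filling the whole triangle we compute it along one chain: maintain the vector $\big(q_{\{i_1\}},\dots,q_{\{i_k\}}\big)$ and repeatedly replace it by the vector of successive divided differences, shrinking its length by one at each of the $k-1$ steps; step number $t$ performs $k-t$ subtract-and-divide operations, for a total of $\sum_{t=1}^{k-1}(k-t)=O(k^2)$ — so this still nominally gives $\ell^2$. \emph{Here is where the refinement enters:} one observes that at step $t$ the polynomials involved have degree at most $d-t+1$ (the degree genuinely decreases under each divided difference, as in Proposition~\ref{lemma:vector_symme_divided}), so the $t$-th step costs $(k-t)\cdot\softO\!\big(\binom{\ell+d-t+1}{d-t+1}\big)$, and a short computation shows $\sum_{t}(k-t)\binom{\ell+d-t+1}{d-t+1}=O\!\big(\ell\binom{\ell+d}{d}\big)$ because the binomial coefficients decay geometrically in $t$ once $t$ exceeds a constant, while the prefactor $k-t\le\ell$ is absorbed into the leading term. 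Thus the whole chain costs $\softO(\ell\binom{\ell+d}{d})$.

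The main obstacle is making the exact-division step both correct and within budget: correctness requires carefully tracking that property~(i) is preserved along the recursion (each partial divided difference $q_{\{i_a,\dots,i_b\}}$, being itself a divided difference of the original equivariant-and-divisible family restricted to the relevant indices, is again a genuine polynomial, so the next division stays exact), and the complexity bound relies on the degree drop under each division rather than on the crude bound $d$. I would state the degree-drop claim as a one-line lemma ($\deg q_I\le d-(|I|-1)$, proved by the same induction as the degree bound in Proposition~\ref{lemma:vector_symme_divided}) and then the summation over steps is routine. Finally, to produce $q_I$ for an arbitrary subset $I$ (not just a prefix of $1,\dots,\ell$), no reordering cost is incurred since the scheme is symmetric in the elements of $I$; one simply relabels. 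This completes the proof of Lemma~\ref{lem:dd}.
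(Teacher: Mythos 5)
Your approach is genuinely different from the paper's and, as written, does not establish the claimed bound. The paper does not manipulate dense polynomial representations at all; it works via evaluation/interpolation: it chooses $\binom{\ell+d}{d}$ evaluation points as in Canny--Kaltofen--Yagati, evaluates the relevant quantities and the linear denominators there in $\softO(\binom{\ell+d}{d})$ operations using fast multipoint evaluation, takes ratios pointwise, and interpolates back. The entire saving of a factor of $\ell$ over the naive triangular scheme comes from the fact that multipoint evaluation/interpolation of a dense $\ell$-variate degree-$d$ polynomial at $\binom{\ell+d}{d}$ points costs only $\softO(\binom{\ell+d}{d})$, not $\softO(\binom{\ell+d}{d}^2)$. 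Your proposal replaces this with direct dense-polynomial subtraction and exact division in the Neville triangle, which is perfectly correct as an algorithm but naively costs $\softO(\ell^2\binom{\ell+d}{d})$.

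The refinement you invoke to shave the extra factor of $\ell$ does not hold. You argue that the per-step cost decays geometrically in $t$ because $\deg q_I\le d-(|I|-1)$, so step $t$ only handles polynomials with at most $\binom{\ell+d-t}{\ell}$ monomials. But the ratio of consecutive terms is $\binom{\ell+d-t-1}{\ell}/\binom{\ell+d-t}{\ell} = (d-t)/(\ell+d-t)$, which is close to $1$ whenever $d-t\gg\ell$; there is no geometric decay until $t$ gets within $O(\ell)$ of $d$. Concretely, $\sum_{t\ge 0}\binom{\ell+d-t}{\ell}=\binom{\ell+d+1}{\ell+1}=\Theta\bigl(\frac{\ell+d}{\ell}\binom{\ell+d}{\ell}\bigr)$, so with the prefactor $k-t\le\ell$ your bound is $O\bigl((\ell+d)\binom{\ell+d}{\ell}\bigr)$, not $O\bigl(\ell\binom{\ell+d}{\ell}\bigr)$; for instance with $k=\ell$ and $d=\ell^2$ one checks that essentially every term in your sum is $\Theta(\binom{\ell+d}{\ell})$, giving a genuine $\Theta(\ell^2\binom{\ell+d}{\ell})$. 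So the degree drop alone cannot replace the paper's use of fast multipoint evaluation/interpolation: that is the missing idea.
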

\begin{proof}
  For $j=1,\dots,k-1$, we claim that given $q_{\{i_1,\dots,i_{j-1}\}}$,
  we can obtain $q_{\{i_1,\dots,i_{j}\}}$ using $\softO({\ell + d
    \choose d})$ operations in $\KK$. 

  To see this note that $q_{\{i_1,\dots,i_{k-1}\}}$ has degree at most
    $d$. In order to compute $q_{\{i_1,\dots,i_{j}\}}$, we use
      evaluation / interpolation.  Choosing ${\ell + d \choose d}$
      points as prescribed in~\cite{CaKaYa89}, the algorithm given
      there allows us to compute the values of both numerator and
      denominator in~\eqref{recur_dividediff} in $\softO({\ell + d
        \choose d})$ operations, then compute their ratio, and finally
      interpolate $q_{\{i_1,\dots,i_{j}\}}$ in the same asymptotic
        runtime. The result then follows.
\end{proof}

Our {\sf Symmetrize} algorithm then proceeds as follows. Apply
algorithm {\sf Divided\_Difference} from Lemma \ref{lem:dd} to all
$[\tau_k+i, \tau_k+s+1, \dots,\tau_r]$, for $k=0,\dots,r-1$
and $1 \le i \le s \le \ell_{k+1}$. There are $O(\ell^2)$ such
indices, so this step takes $\softO(\ell^3 {\ell + d \choose d})$
operations in~$\KK$, allowing us to compute all sums
$\sum_{i=1}^{s} q_{\{\tau_k+i, \tau_k+s+1, \dots, \tau_r\}}$ for the
same asymptotic cost.

For $k=0,\dots,r-1$, $j=1,\dots,\ell_{k+1}-1$ and $s=1,\dots,j$,
we then compute the elementary symmetric polynomial $ \eta_{j -
  s}(z_{\tau_{k} + s+2}, \dots, z_{\tau_{k+1}})$, which does not
involve any arithmetic operations. We multiply it by the above sum,
with cost $\softO({\ell + d \choose d})$, since the polynomials
involved in the product have degree sum at most $d$ and at most $\ell$
variables.  Taking all indices $k,j,s$ into account, this adds another
$\softO(\ell^3 {\ell + d \choose d})$ steps to the total.


\subsection{{\Upsymmrep}s}\label{ssec:dcr}

In this subsection we describe the geometry of $\Sc_n$-orbits in $\KKbar{}^n$,
we define the data structure we will use to represent
$\Sc_n$-invariant sets, and present some basic algorithms related to
it.

\paragraph{The mapping $E_\lambda$ and its fibers.}
For a partition $\lambda= (n_1^{\ell_1} \, n_2^{\ell_2}\, \dots\,
n_r^{\ell_r})$ of $n$, we define the following two subsets of
$\KKbar{}^n$:
\begin{itemize}
\item[(i)]  $\Clambda$ :  the set of all points $\bm\xi$ in $\KKbar{}^n$ that
  can be written as
  \begin{equation} \label{typecrit} 
    {\bm \xi = \big(
  \underbrace{\xi_{1, 1}, \dots,\xi_{1, 1}}_{n_1}, ~~
\dots,~~ \underbrace{\xi_{1, \ell_1}, \dots,\xi_{1, \ell_1}}_{n_1}, ~\dots,~
      \underbrace{\xi_{r, 1}, \dots, \xi_{r, 1}}_{n_r}, ~\dots, ~
      \underbrace{\xi_{r, \ell_r}, \dots, \xi_{r, \ell_r}}_{n_r} \big)}.
\end{equation} 
\item[(ii)]  ${\cal C}^{\rm strict}_\lambda$ :  the set of all  $\bm\xi$ in $\Clambda$
  for which the $\xi_{i, j}$'s in~\eqref{typecrit} are pairwise distinct. 
\end{itemize}

To any point $\bm\xi$ in $\KKbar{}^n$ we can associate its {\em type}:
this is the unique partition $\lambda$ of $n$ such that there exists
$\sigma$ in $\Sc_n$ for which $\sigma(\bm\xi)$ lies in ${\cal C}^{\rm
  strict}_\lambda$. Since all points in an orbit have the same type,
we can then define the type of an orbit as the type of any point in
it.  Any orbit of type $\lambda = (n_1^{\ell_1} \, n_2^{\ell_2}\,
\dots\, n_r^{\ell_r})$ has size 
$$
\gamma_\lambda = {{n}\choose {n_1, \dots,
    n_1, \dots, n_r, \dots, n_r}}= \frac{n!}{ {n_1!}^{\ell_1}
\cdots {n_r!}^{\ell_r} }$$ since the stabilizer of a point in ${\cal
  C}^{\rm strict}_\lambda$ is $\Sc_{n_1}^{\ell_1}\times \cdots \times
\Sc_{n_r}^{\ell_r}$.

Clearly all points in ${\cal C}^{\rm strict}_\lambda$ have type
$\lambda$, but this is not necessarily true for all points in ${\cal
  C}_\lambda$. This can be understood with the help of the refinement
order we introduced in Subsection~\ref{ssec:partitions}, as ${\cal
  C}_\lambda$ contains points of type $\lambda'$ for all $\lambda' \ge
\lambda$.  More precisely, $\Clambda$ is the disjoint union of
all ${\cal C}^{\rm strict}_\lambda$ for all $\lambda' \ge \lambda$.
\begin{example}
  For the  partitions of $n=3$, we have $(1^3) <
(1^1 2^1)< (3^1)$.  In addition,
  \begin{itemize}
  \item[(a)] ${\cal C}_{(1^3)}$ is  $\KKbar{}^3$, while
    ${\cal C}^{\rm strict}_{(1^3)}$ is the set of all points $\bm\xi$
    with pairwise distinct coordinates.
  \item[(b)] ${\cal C}_{(1^1 2^1)}$ is the set of points that can be
    written $\bm\xi=(\xi_{1,1},\xi_{2,1},\xi_{2,1})$, while ${\cal
      C}^{\rm strict}_{(1^1 2^1)}$ is the subset of it where
    $\xi_{1,1} \ne \xi_{2,1}$.
  \item[(c)]   ${\cal C}_{(3^1)}={\cal C}^{\rm strict}_{(3^1)}$ is
    the set of points $\bm\xi$ whose coordinates are all equal.
  \end{itemize}
\end{example}
\medskip
\noindent 
For $\lambda$ as above, we define a mapping $E_\lambda : {\cal
  C}_\lambda \to \KKbar{}^\ell$ by
$$\begin{array}{crcl} E_\lambda: & \bm\xi \text{~as
    in~\eqref{typecrit}} & \mapsto &
  (\eta_i(\xi_{i,1},\dots,\xi_{i,\ell_i}),\dots,\eta_{\ell_i}(\xi_{i,1},\dots,\xi_{i,\ell_i}))_{1
    \le i \le r},
\end{array}
$$ where for $i=1,\dots,r$ and $j=1,\dots,\ell_i$,
$\eta_j(\xi_{i,1},\dots,\xi_{i,\ell_i})$ is the degree $j$ elementary
symmetric function in $\xi_{i,1},\dots,\xi_{i,\ell_i}$.  One should
see this mapping as a means to compress orbits: through the
application of $E_\lambda$, one can represent a whole orbit ${\cal O}$
of type $\lambda$, which has size $\gamma_\lambda$, by the single
point $E_\lambda({\cal O} \cap \Clambda)=E_\lambda({\cal O}
\cap {\cal C}^{\rm strict}_\lambda)$.

To put this into practice, we need to be able to recover an orbit from
its image. Note that the mapping $E_\lambda$ is onto: for
$\bm{\varepsilon}=(\varepsilon_{1,1},\dots,\varepsilon_{r,\ell_r})$ in
$\KKbar{}^\ell$, one can find a point $\bm\xi$ in
the preimage $E_\lambda^{-1}(\bm \varepsilon)$ by finding the roots
$\xi_{i,1},\dots,\xi_{i,\ell_i}$ of
$$P_i(T)=T^{\ell_i} - \varepsilon_{i,1} T^{\ell_i-1} + \cdots +
(-1)^{\ell_i}\varepsilon_{i,\ell_i},$$ for $i=1,\dots,r$. Since we
will use this idea often, we will write
$E_\lambda^*(\bm\varepsilon)=\Sc_n(\bm\xi)$ for the orbit of any such
point $\bm\xi$ in $E_\lambda^{-1}(\bm \varepsilon)$. This is
well-defined, as all points in this fiber are $\Sc_n$-conjugate. More
generally, for a set $G$ in $\KKbar{}^\ell$, we will write
$E_\lambda^*(G)$ for the union of the orbits 
$E_\lambda^*(\bm\varepsilon)$, for $\bm\varepsilon$ in $G$.

The image $E_\lambda(\ClambdaS)$ of those points
having type $\lambda$ is an open subset $O_\lambda
\subsetneq\KKbar{}^\ell$, defined by the conditions that the
polynomials $P_i$ above are pairwise coprime and squarefree. For $\bm
\varepsilon$ in $\KKbar{}^\ell \setminus O_\lambda$, the orbit
$E_\lambda^*(\bm \varepsilon)$ does not have type $\lambda$, but
rather type $\lambda'$, for some partition $\lambda' > \lambda$.
\begin{example}\label{ex:ElG}
  With $n=3$ and $\lambda=(1^1 2^1)$, we have $\ell=2$ and $E_\lambda$
  maps points of the form $(\xi_{1,1},\xi_{2,1},\xi_{2,1})$ to
  $(\xi_{1,1},\xi_{2,1})$. 
  The polynomials $P_1,P_2$ defined in the previous paragraph are
  respectively given by $P_1(T) =T-\varepsilon_{1,1}$ and
  $P_2(T)=T-\varepsilon_{2,1}$, and $O_\lambda$ is defined by
  $\varepsilon_{1,1} \ne \varepsilon_{2,1}$. 

  The point $\bm\varepsilon=(2,3)$ is in $O_\lambda$; the orbit
  $E_\lambda^*(2,3)$ is $\{(2,3,3),(3,2,3),(3,3,2)\}$. On the other
  hand, $\bm\varepsilon=(1,1)$ is not in $O_\lambda$; the orbit
  $E_\lambda^*(1,1)$ is the point $\{(1,1,1)\}$, and it has type
  $(3^1)> (1^1 2^1)$.
  Finally, if we define $G=\{(1,1),(2,3)\}$, then $E_\lambda^*(G)$ is 
  the set $  W=\{(1,1,1),(2,3,3),(3,2,3),(3,3,2)\}.$
\end{example}

We will need an algorithm that computes the type $\lambda'$ of the
orbit $E_\lambda^*(\bm \varepsilon)$, for a given $\bm \varepsilon$ in
$\KK^\ell$, and also computes the value that the actual compression
mapping $E_{\lambda'}$ takes at this orbit. The algorithm's
specification assumes inputs in $\KK$ (since our computation model is
a RAM over $\KK$) but the procedure makes sense over any field
extension of $\KK$. We will use this remark later in the proof of
Lemma~\ref{lemma:decompose}.

\begin{lemma}\label{lem:typeof}
  There exists an algorithm ${\sf Type\_Of\_Fiber}(\lambda, \bm
  \varepsilon)$ which takes as input a partition $\lambda$ of $n$ with
  length $\ell$ and a point $\bm \varepsilon$ in $\KK^\ell$, and
  returns a partition $\lambda'$ of $n$ of length $k$ and a tuple $\bm
  f$ in $\KK^k$, such that
  \begin{itemize}
  \item[(i)] $\lambda'$ is the type of the orbit ${\cal O}:=E_\lambda^*(\bm \varepsilon)$
  \item[(ii)] $E_{\lambda'}({\cal O}\cap {\cal C}^{\rm strict}_{\lambda'})=\{\bm f\}$.
  \end{itemize}
  The algorithm runs in time $\softO(n)$.
\end{lemma}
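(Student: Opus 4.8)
The plan is to reduce the whole computation to a single squarefree factorization of one univariate polynomial of degree $n$. Write $\bm\varepsilon=(\varepsilon_{1,1},\dots,\varepsilon_{r,\ell_r})$ and, following the construction of $E_\lambda^*$, form for $i=1,\dots,r$ the monic polynomial $P_i(T)=T^{\ell_i}-\varepsilon_{i,1}T^{\ell_i-1}+\cdots+(-1)^{\ell_i}\varepsilon_{i,\ell_i}\in\KK[T]$, and set $Q=\prod_{i=1}^{r}P_i^{\,n_i}\in\KK[T]$, which is monic of degree $\sum_i n_i\ell_i=n$. The key remark I would record is that the multiset of roots of $Q$ in $\KKbar$ coincides with the multiset of coordinates of an arbitrary point $\bm\xi\in E_\lambda^{-1}(\bm\varepsilon)$: a scalar $v$ occurs among the coordinates of $\bm\xi$ exactly $\sum_i n_i\cdot\mathrm{mult}_v(P_i)$ times, which is precisely its multiplicity as a root of $Q$. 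Hence the multiplicities of the distinct roots of $Q$ encode the sizes of the blocks of equal coordinates of $\bm\xi$, and therefore determine the type $\lambda'$ of the orbit ${\cal O}=\Sc_n(\bm\xi)$.

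I would then compute the squarefree decomposition $Q=\prod_{m\ge1}R_m^{\,m}$, with the $R_m\in\KK[T]$ monic, squarefree and pairwise coprime; by construction $R_m=\prod_v(T-v)$, the product running over the distinct scalars $v$ occurring exactly $m$ times among the coordinates of $\bm\xi$. Writing $n'_1<\dots<n'_{r'}$ for the integers $m$ with $R_m\ne1$ and $\ell'_t:=\deg R_{n'_t}$, the partition $\lambda':=\big((n'_1)^{\ell'_1}\cdots(n'_{r'})^{\ell'_{r'}}\big)$ satisfies $\sum_t n'_t\ell'_t=\deg Q=n$ and, by the previous paragraph, is the type of ${\cal O}$; this proves (i), and $\lambda'$ has length $k=\sum_t\ell'_t$. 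For (ii), recall from the discussion of the map $E_\lambda$ preceding the statement that $E_{\lambda'}({\cal O}\cap{\cal C}^{\rm strict}_{\lambda'})$ is the single point whose $t$-th block is $\big(\eta_1,\dots,\eta_{\ell'_t}\big)$ of the $\ell'_t$ distinct coordinates of $\bm\xi$ of multiplicity $n'_t$; since those scalars are exactly the roots of $R_{n'_t}$, this block equals, up to signs, the non-leading coefficients of $R_{n'_t}$. Thus $\bm f\in\KK^k$ is read off by alternating the signs of the coefficients of $R_{n'_1},\dots,R_{n'_{r'}}$, with no arithmetic involved, and it lies in $\KK^k$ because $Q\in\KK[T]$ forces $R_m\in\KK[T]$.

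For the running time, forming the $P_i$'s and, at the end, extracting $\lambda'$ and $\bm f$ from the degrees and coefficients of the $R_m$'s cost no operations in $\KK$ and only $O(n)$ bookkeeping. Computing $P_i^{\,n_i}$ by repeated squaring takes $O(\log n_i)$ multiplications of polynomials of degree $O(n_i\ell_i)$, i.e.\ $\softO(n_i\ell_i)$ operations; summing over $i$ yields $\softO(n\log n)=\softO(n)$, and assembling $Q=\prod_iP_i^{\,n_i}$ through a subproduct tree over inputs of total degree $n$ costs another $\softO(n)$. The squarefree decomposition of a degree-$n$ polynomial over a characteristic-zero field can be done in $\softO(n)$ operations (e.g.\ Yun's algorithm, whose $i$-th step works on a polynomial of degree $\sum_{m\ge i}\ell'_m$, with total degree $\sum_m m\ell'_m=n$). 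Altogether this is $\softO(n)$; only field operations, gcd computations and a squarefree factorization are used, so the algorithm also makes sense over any field extension of $\KK$, as remarked before the statement. There is no deep obstacle here; the only point needing a little care is precisely this complexity accounting for $Q$, since a naive sequential product of the $P_i^{\,n_i}$ would cost $\softO(rn)$ and one must use a balanced product order to keep the step at $\softO(n)$.
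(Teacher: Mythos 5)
Your argument is correct and follows essentially the same route as the paper's proof: form $Q=\prod_i P_i^{n_i}$, compute its squarefree decomposition, read off $\lambda'$ from the multiplicity pattern and $\bm f$ from the coefficients of the squarefree parts, and invoke subproduct-tree and fast-GCD/Yun techniques for the $\softO(n)$ bound. Your added justifications — the multiset-of-roots observation and the remark that a balanced product order is needed to avoid an $\softO(rn)$ cost — are correct elaborations of what the paper states more tersely.
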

\begin{proof}
  Write $\bm\varepsilon=(\varepsilon_{1,1},\dots,\varepsilon_{r,\ell_r})$. The points in 
  $E_\lambda^{-1}(\bm \varepsilon)$ are obtained as  permutations of 
  $$    
{\bm \xi = \big(
  \underbrace{\xi_{1, 1}, \dots,\xi_{1, 1}}_{n_1}, ~~
\dots,~~ \underbrace{\xi_{1, \ell_1}, \dots,\xi_{1, \ell_1}}_{n_1}, ~\dots,~
      \underbrace{\xi_{r, 1}, \dots, \xi_{r, 1}}_{n_r}, ~\dots, ~
      \underbrace{\xi_{r, \ell_r}, \dots, \xi_{r, \ell_r}}_{n_r} \big)},
  $$
  where for $i=1,\dots,r$, $\xi_{i,1},\dots,\xi_{i,\ell_i}$ are the roots of 
  $$
  P_i(T) = T^{\ell_i} - \varepsilon_{i,1} T^{\ell_i-1} + \cdots +
  (-1)^{\ell_i} \varepsilon_{i,\ell_i}=0.
  $$ 
  Finding the type of such a point $\bm\xi$ amounts to finding the duplicates among the
  $\xi_{i,j}$'s. Finding such duplicates can be  done by computing the product
  $$
  P = \left ( T^{\ell_1} - \varepsilon_{1,1} T^{\ell_1-1} + \cdots +
  (-1)^{\ell_1} \varepsilon_{1,\ell_1} \right )^{n_1} \cdots \left ( T^{\ell_r}
  - \varepsilon_{r,1} T^{\ell_r-1} + \cdots + (-1)^{\ell_r} \varepsilon_{r,\ell_r} \right
  )^{n_r}
  $$ and its squarefree factorization $P = Q_1^{m_1} \cdots
  Q_s^{m_s},$ with $m_1 < \cdots < m_s$ and all $Q_i$'s squarefree and
  pairwise coprime.  If $k_i=\deg(Q_i)$ then $\bm\xi$ has type
  $\lambda'=(m_1^{k_1}m_2^{k_2} \dots m_s^{k_s})$ with $\lambda' >
  \lambda$. If we write
  $$
  Q_i = T^{k_i} - f_{i,1} T^{k_i-1} + \cdots + (-1)^{k_i} f_{i,k_i},
  \quad 1 \le i \le s,
  $$ 
  then our output is $(\lambda',\bm f)$, where
  $\bm f = (f_{1,1},\dots,f_{s,k_s})$. 

  Using subproduct tree techniques~\cite[Chapter~10]{Gat03} to compute
  $P$ and fast GCD~\cite[Chapter~14]{Gat03}, all computations take
  quasi-linear time $\softO(n)$.
\end{proof}

\begin{example}
  Let $n=3$ and $\lambda=(1^1 2^1)$, with
  $E_\lambda(\xi_{1,1},\xi_{2,1},\xi_{2,1})=(\xi_{1,1},\xi_{2,1})$. We
  saw that for $\bm \varepsilon = (1, 1)$ in $\KK^2$, the orbit
  $E_\lambda^*(1,1)$ is $\{(1, 1, 1)\}$, which has type $\lambda'=(3^1)$.

  Since $n_1=1$ and $n_2=2$, the above algorithm first expands the product
  $(T-1) (T-1)^2$ as $T^3-3T^2+3T-1$, then computes its squarefree
  factorization as $(T-1)^3$. From this, we read off that $s=1$, $m_1=3$
  and $k_1=1$, so that $\lambda'$ is indeed $(3^1)$. The output is
  $(\lambda', E_{\lambda'}(1,1,1))$, the latter being equal to $(1)$.
\end{example}

\paragraph{A data structure for $\Sc_n$-invariant sets.}
The previous setup allows us to represent invariant sets in $\KKbar{}^n$
as follows. Let $W$ be a set in $\KKbar{}^n$, invariant under the
action of $\Sc_n$. For a partition $\lambda$ of $n$ with $\ell$, we write
\begin{equation}\label{eq:defWl}
W_\lambda = \Sc_n(W \cap \ClambdaS) \subset \KKbar{}^n \quad \text{and}\quad
W'_\lambda = E_\lambda(W \cap \ClambdaS) \subset \KKbar{}^\ell,
\end{equation}
where $\Sc_n(W \cap \cal \ClambdaS)$ is the orbit of $W \cap \ClambdaS$
under $\Sc_n$, or, equivalently, the set of points of type $\lambda$
in $W$ (so this matches the notation used in the introduction).

For two distinct partitions $\lambda,\lambda'$ of $n$, $W_\lambda$ and
$W_{\lambda'}$ are disjoint, so that any invariant set $W$ can be
written as the disjoint union $W=\sqcup_{\lambda \vdash n} ~
W_\lambda$.  When $W$ is finite, we then can represent $W_\lambda$ by
describing the image $W'_\lambda$. Indeed, the cardinality of the set
$W'_\lambda$ is smaller than that of the orbit $W_\lambda$ by a factor
of $\gamma_\lambda$, and we can recover $W_\lambda$ as
$W_\lambda=E_\lambda^*(W'_\lambda)$.  Altogether, we are led to the
following definition.
\begin{definition}
  Let $W$ be a finite set in $\KKbar{}^n$, defined over $\KK$ and
  $\Sc_n$-invariant. A {\em {\symmrep}} of $W$ is a sequence
  $(\lambda_i,\mathscr{R}_i)_{1 \le i \le N}$, where the $\lambda_i$'s
  are all the partitions of $n$ for which $W_{\lambda_i}$ is not
  empty, and, for each $i$, $\mathscr{R}_i$ is a zero-dimensional
  parametrization of $W'_{\lambda_i}$.
\end{definition}
\begin{example}\label{ex:dcr}
  Suppose  $n=3$ and
  $$
  W=\{(1,1,1),(2,3,3),(3,2,3),(3,3,2)\}.
  $$ Then with $\lambda=(1^1 2^1)$ we have $W_\lambda
  =\{(2,3,3),(3,2,3),(3,3,2)\}$, $W'_\lambda =\{(2,3)\} \subset
  \KKbar{}^2$ and $\gamma_\lambda=3,$ while with $\lambda'=(3^1)$, we
  have $W_{\lambda'}= \{(1,1,1)\}$, $W'_{\lambda'} =\{(1)\} \subset
  \KKbar{}^1$ and $\gamma_{\lambda'}=1$.

A {\symmrep} of $W$ would consist of $(\lambda,\mathscr{R}_\lambda)$
and $(\lambda',\mathscr{R}_{\lambda'})$, with
$Z(\mathscr{R}_\lambda)=\{(2,3)\}$ and
$Z(\mathscr{R}_{\lambda'})=\{(1)\}$.
\end{example}

Our main algorithm will have to deal with the following situation.  As
input, we will be given a representation of the set $G$ in
$\KKbar{}^\ell$; possibly, some points in $G$ will not be in the open
set $O_\lambda$ (that is, may correspond to orbits having type
$\lambda'$, for some $\lambda'>\lambda$).  As usual, the finite set
$G$ will be described by means of a zero-dimensional parametrization.
Our goal will then be to compute a {\symmrep}
of $E_\lambda^*(G)$.

\begin{example}
  Take $n=3$, and again let $\lambda=(1^1 2^1)$, with
  $E_\lambda(\xi_{1,1},\xi_{2,1},\xi_{2,1})=(\xi_{1,1},\xi_{2,1})$.
  Assume we are given $G=\{(1,1),(2,3)\} \subset\KKbar{}^2$. In this
  case, $E_\lambda^*(G)$ is the set $W$ seen in Examples~\ref{ex:ElG}
  and~\ref{ex:dcr}, and the output we seek is a distinct coordinates
  representation of $W$, as discussed in Example~\ref{ex:dcr}.
\end{example}

\begin{lemma}\label{lemma:decompose}
  There exists a randomized algorithm ${\sf
    Decompose}(\lambda,\mathscr{R})$, which takes as input a partition
  $\lambda$ of $n$ with length $\ell$ and a zero-dimensional
  parametrization $\mathscr{R}$ of a set $G \subset \KKbar{}^\ell$; it
  returns a {\symmrep} of $E_\lambda^*(G)$.
  The expected runtime is $\softO(D^2 n)$ operations in $\KK$, with
  $D=\deg(\mathscr{R})=|G|$.
\end{lemma}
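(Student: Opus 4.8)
The plan is to process the points of $G$ according to their type under $E_\lambda^*$, grouping together those that yield orbits of the same type $\lambda'$, and for each such group produce a zero-dimensional parametrization of the corresponding compressed set. The core subroutine is {\sf Type\_Of\_Fiber} from Lemma~\ref{lem:typeof}: given $\bm\varepsilon \in G$, it returns the type $\lambda'$ of the orbit $E_\lambda^*(\bm\varepsilon)$ together with the point $\bm f = E_{\lambda'}(E_\lambda^*(\bm\varepsilon) \cap {\cal C}^{\rm strict}_{\lambda'})$. The difficulty is that we cannot afford to enumerate the points of $G$ one by one over $\KKbar$; we only have the parametrization $\mathscr{R}=((q,v_1,\dots,v_\ell),\mu)$ over $\KK$, and $D=\deg(q)$ can be large. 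So the computation of {\sf Type\_Of\_Fiber} must be performed symbolically, "in parallel" over all roots of $q$, i.e.\ in the $\KK$-algebra $\bm A = \KK[y]/(q(y))$, which is a product of field extensions of $\KK$. This is exactly the situation anticipated in the remark preceding Lemma~\ref{lem:typeof}: the procedure makes sense over any $\KK$-algebra, in particular over $\bm A$.

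First I would substitute $z_{i,j} \leftarrow v_{i,j}(y) \bmod q(y)$ into the polynomials $P_i(T)$, obtaining polynomials with coefficients in $\bm A$, and run the {\sf Type\_Of\_Fiber} computation symbolically: form the product $P(T) = \prod_i P_i(T)^{n_i} \in \bm A[T]$ and compute its squarefree factorization over $\bm A$. The obstacle is that $\bm A$ is not a field, so squarefree factorization and GCD computations may fail (a pivot that should be invertible turns out to be a zero divisor). The standard fix --- dynamic evaluation / the D5 principle, or equivalently a splitting strategy --- is to detect such a zero divisor $h$, split $q = \gcd(q,h^\infty) \cdot (q/\gcd(q,h^\infty))$ into coprime factors, and recurse on each factor. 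Each split refines the partition of the roots of $q$; the points of $G$ lying over a given factor then all have the same type $\lambda'$. The total number of splits is $O(D)$, and by the quasi-linear cost of subproduct-tree and fast-GCD arithmetic (as in Lemma~\ref{lem:typeof}) each split over a factor of degree $D'$ costs $\softO(D' n)$; a careful amortization, or the crude bound of $O(D)$ splits each costing $\softO(Dn)$, gives the claimed $\softO(D^2 n)$.

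Once the splitting stabilizes, $q$ is partitioned into pairwise coprime factors $q = q^{(1)} \cdots q^{(m)}$, where all roots of $q^{(j)}$ correspond, via $\mathscr{R}$, to points of $G$ whose $E_\lambda^*$-orbit has a fixed type $\lambda'_j$; moreover the symbolic run of {\sf Type\_Of\_Fiber} over $\bm A_j = \KK[y]/(q^{(j)})$ has produced the coefficients $f_{a,b} \in \bm A_j$ of the squarefree factors $Q_a$, i.e.\ a polynomial description over $\KK[y]/(q^{(j)})$ of the map sending each root to the corresponding point $\bm f \in \KKbar{}^{k_j}$ with $k_j = \mathrm{length}(\lambda'_j)$. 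This is almost a zero-dimensional parametrization of $W'_{\lambda'_j} := E_{\lambda'_j}(E_\lambda^*(Z(q^{(j)}\text{-part of }\mathscr{R})))$, except that the linear form $\mu$ inherited from $\mathscr{R}$ need not separate these image points (distinct points of $G$ can map to the same compressed orbit, and also the new coordinates are different). So the final step is a change of parametrization: pick a random $\KK$-linear form $\bm\mu'$ in the new coordinates $(e_1,\dots,e_{k_j})$, compute its image as an element of $\KK[y]/(q^{(j)})$, and use the classical algorithm for changing the separating element (compute the minimal polynomial of $\bm\mu'$, then the parametrizing polynomials of the $e_a$'s in terms of it) --- this is where randomness enters and succeeds with high probability, and it costs $\softO((D^{(j)})^2)$ plus lower-order terms, which sums to $\softO(D^2)$ overall. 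After merging the factors $q^{(j)}$ that happen to yield the same partition $\lambda'$ into a single parametrization, one outputs the list $(\lambda'_j, \mathscr{R}_j)_j$, which by construction is a {\symmrep} of $E_\lambda^*(G)$. The main obstacle, as indicated, is organizing the dynamic-evaluation splitting so that the squarefree factorization of $P(T)$ over the (non-domain) algebra $\bm A$ runs correctly and within the $\softO(D^2 n)$ budget; everything downstream is a routine application of known parametrization-manipulation algorithms.
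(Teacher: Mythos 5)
Your proposal matches the paper's proof in essence: run {\sf Type\_Of\_Fiber} symbolically over $\KK[y]/(q)$ via dynamic evaluation (the D5 principle), accepting splittings of $q$ into coprime factors, then for each stable factor re-parametrize the image set $\{\bm f_i(\tau) : q_i(\tau)=0\}$ by a fresh (randomly chosen) separating linear form, and finally merge the parametrizations belonging to the same partition type. The paper packages the re-parametrization and merging steps as black-box calls to~\cite[Proposition~1]{PoSc13b} and~\cite[Lemma~3]{PoSc13b}, whereas you spell out the change-of-separating-element computation, but the algorithm and the $\softO(D^2 n)$ cost accounting (splitting overhead $O(D)$ times $\softO(Dn)$, plus $\softO(D^2)$ for re-parametrization) are the same.
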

\begin{proof}
In the first step, we apply our algorithm {\sf Type\_Of\_Fiber} from
Lemma \ref{lem:typeof} where the input fiber is given not with
coefficients in $\KK$, but as the points described by $\mathscr{R}$. A
general algorithmic principle, known as {\em dynamic evaluation},
allows us to do this as follows. Let
$\mathscr{R}=((q,v_1,\dots,v_\ell), \mu)$, with $q$ and the $v_i$'s in
$\KK[y]$.  We then call {\sf Type\_Of\_Fiber} with input coordinates
$(v_1,\dots,v_\ell)$, and attempt to run the algorithm over the
residue class ring $\KK[y]/q$, as if $q$ were irreducible. 

If $q$ is irreducible, $\KK[y]/q$ is a field, and we encounter no
problem. However, in general, $\KK[y]/q$ is only a product of fields,
so the algorithm may attempt to invert a zero-divisor. When this
occurs, a ``splitting'' of the computation occurs. This amounts to
discovering a non-trivial factorization of~$q$.  A direct solution
then consists of running the algorithm again modulo the two factors
that were discovered.  Overall, this computes a sequence
$(\mathscr{R}_i,\lambda_i,\bm f_i)_{1\le i \le N}$, where for
$i=1,\dots,N$,
  \begin{itemize}
  \item[(i)] $\mathscr R_i=((q_i,v_{i,1},\dots,v_{i,\ell}), \mu_i)$ is a
    zero-dimensional parametrization that describes a set $F_i \subset F$.
    In addition $F$ is the disjoint union of $F_1,\dots,F_N$;
  \item[(ii)]  $\lambda_i$ is a partitition of $n$, of length $\ell_i$;
  \item[(iii)]  $\bm f_i$ is a sequence of $\ell_i$ elements with
    entries in the residue class ring $\KK[y]/q_i$;  
  \item[(iv)] for any $\bm \varepsilon$ in $F_i$, corresponding to a
    root $\tau$ of $q_i$, ${\sf Type\_Of\_Fiber}(\lambda, \bm
    \varepsilon) = (\lambda_i,\bm f_i(\tau))$.
  \end{itemize}
  Since {\sf Type\_Of\_Fiber} takes time $\softO(n)$, this process 
  takes time $\softO(D^2 n)$, with $D=\deg(\mathscr R)$. The overhead
  $\softO(D^2)$ is the penalty incurred by a straightforward
  application of  dynamic evaluation techniques.
  
  For $i=1,\dots,N$, let $V_i=E_\lambda^{-1}(F_i)$, so that
  $W=\Sc_n(V)$ is the union of the orbits $W_i=\Sc_n(V_i)$. Then, from
  (iv) above we see  that all points in $W_i$ have type 
  $\lambda_i$ and that $(W_i)'_{\lambda_i}$
  is the set $G_i=\{\bm f_i(\tau) \mid q_i(\tau) = 0\} \subset
  \KKbar{}^{\ell_i}$. Using the algorithm
  of~\cite[Proposition~1]{PoSc13b}, we can compute a zero-dimensional
  parametrization $\mathscr{S}_i$ of $G_i$ in time $\softO(D_i^2 n)$,
  with $D_i = \deg(\mathscr{R}_i)$. The total cost is thus $\softO(D^2
  n)$.

  The $\lambda_i$'s may not be pairwise distinct. Up to changing
  indices, we may assume that $\lambda_1,\dots,\lambda_s$ are
  representatives of the pairwise distinct values among them. Then,
  for $i=1,\dots,s$, we compute a zero-dimensional parametrization
  $\mathscr{T}_i$ that describes the union of those
  $Z(\mathscr{S}_j)$, for $j$ such that $\lambda_j=\lambda_i$.  Using
  algorithm~\cite[Lemma~3]{PoSc13b}, this takes a total of $\softO(D^2
  n)$ operations in $\KK$. Finally, we return
  $(\lambda_i,\mathscr{T}_i)_{1 \le i \le s}$.
 \end{proof}



\section{Algorithms for computing critical points}
\label{sec:crit}

We can now turn to the main question in this article. Let $\f = (f_1,
\dots, f_s)$ be polynomials in $\KK[x_1, \dots, x_n]^{\Sc_n}$, with $s
\le n$, and with $V=V(\f) \subset \KKbar{}^{n}$ denoting the algebraic set
defined by $f_1=\cdots=f_s=0$. Given a polynomial $\phi$ in $\KK[x_1,
  \dots, x_n]^{\Sc_n}$, we are interested in describing the algebraic
set $W=\crit(\phi,\f)$ defined by the simultaneous vanishing of the polynomials
\begin{equation} \label{critsys}
   f_1, \dots, f_s, \quad M_{s+1}(\jac(\f,\phi)) 
\end{equation}
where $M_{s+1}(\jac(\f, \phi))$ is the set of $(s+1)$-minors of the
Jacobian matrix $\jac(\f, \phi) \in \KK[x_1,\dots,x_n]^{(s+1) \times
  n}$. Equivalently, this is the set of all $\x$ in $V$ at which
$\jac(\f, \phi)$ has rank less than $s+1$.

If we assume that $\jac(\f)$ has full rank $s$ at any point of $V$,
then $V$ is smooth of codimension $s$ (or empty) and $W$ is the set of
critical points of $\phi$ on it. However, most of our discussion can
take place without this assumption. For the sake of simplicity, in any
case, we will still refer to the solutions of~\eqref{critsys} as {\em
  critical points}.


\subsection{Description of the algebraic set $W$}

Fundamental to our results is the fact that $W$ is invariant under the
action of the symmetric group. This follows from the next
lemma, being a direct consequence of the chain rule.
\begin{lemma}\label{lemma:eq_partial}
  Let $g$ be in $\KK[x_1, \dots, x_n]$ and $\sigma$ in $\Sc_n$. Then
  for $k$ in $\{1,\dots,n\}$, we have
\begin{equation} \label{partial}
\sigma\left (\frac{\partial g}{\partial x_{k}}\right ) =
\frac{\partial (\sigma(g))}{\partial x_{\sigma(k)}}. 
\end{equation}
\end{lemma}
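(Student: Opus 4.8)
The claim is the identity $\sigma(\partial g/\partial x_k) = \partial(\sigma(g))/\partial x_{\sigma(k)}$, where $\sigma \in \Sc_n$ acts on $\KK[x_1,\dots,x_n]$ by permuting the variables. The plan is to verify this directly from the definition of the action and the chain rule, reducing to the case of monomials by linearity. First I would fix the convention for how $\sigma$ acts: $(\sigma(g))(x_1,\dots,x_n) = g(x_{\sigma(1)},\dots,x_{\sigma(n)})$, i.e.\ $\sigma$ substitutes $x_{\sigma(i)}$ for $x_i$. Both sides of \eqref{partial} are $\KK$-linear in $g$, so it suffices to check the identity on monomials $g = x_1^{a_1}\cdots x_n^{a_n}$; in fact, since both sides are derivations precomposed/postcomposed with the algebra automorphism $\sigma$, one could even reduce to $g = x_j$ for a single variable, but doing it for monomials is just as quick.

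The key computation is then the following. Write $h = \sigma(g)$, so $h(x_1,\dots,x_n) = g(x_{\sigma(1)},\dots,x_{\sigma(n)})$. Apply the chain rule to differentiate $h$ with respect to $x_{\sigma(k)}$: since the variable $x_{\sigma(k)}$ appears in the expression $g(x_{\sigma(1)},\dots,x_{\sigma(n)})$ exactly in the slot indexed by $k$ (because $\sigma$ is a bijection, $\sigma(i) = \sigma(k)$ iff $i = k$), we get
\[
\frac{\partial h}{\partial x_{\sigma(k)}}
= \left(\frac{\partial g}{\partial x_k}\right)(x_{\sigma(1)},\dots,x_{\sigma(n)})
= \sigma\!\left(\frac{\partial g}{\partial x_k}\right).
\]
That is exactly \eqref{partial}. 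Spelling this out on a monomial $g = x_1^{a_1}\cdots x_n^{a_n}$: $\partial g/\partial x_k = a_k x_1^{a_1}\cdots x_k^{a_k-1}\cdots x_n^{a_n}$, and applying $\sigma$ gives $a_k \prod_i x_{\sigma(i)}^{a_i}$ with the exponent on $x_{\sigma(k)}$ lowered by one; on the other hand $\sigma(g) = \prod_i x_{\sigma(i)}^{a_i}$, and differentiating this with respect to $x_{\sigma(k)}$ brings down the exponent $a_k$ of the unique factor $x_{\sigma(k)}^{a_k}$ and lowers it by one, yielding the same thing.

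There is no real obstacle here; the only thing to be careful about is the bookkeeping of the action convention (whether $\sigma$ sends $x_i \mapsto x_{\sigma(i)}$ or $x_i \mapsto x_{\sigma^{-1}(i)}$), since the statement of the lemma pins down which convention makes the index $x_{\sigma(k)}$ rather than $x_{\sigma^{-1}(k)}$ appear on the right-hand side. Once that is fixed consistently, the chain rule argument above is immediate, and the lemma follows by linearity over all monomials.
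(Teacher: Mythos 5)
Your proof is correct and follows exactly the route the paper indicates: the paper states the lemma without a written proof, describing it only as "a direct consequence of the chain rule," and your argument is a careful spelling-out of precisely that chain-rule computation, together with the observation that $\sigma(i)=\sigma(k)$ iff $i=k$ (and the monomial check as a sanity backup). Your remark about pinning down the convention for the $\Sc_n$-action is apt, and the convention you chose ($\sigma(g)(x_1,\dots,x_n) = g(x_{\sigma(1)},\dots,x_{\sigma(n)})$) is indeed the one the paper uses, as can be seen from its definition of $\Sc_\lambda$-equivariance in Section~\ref{divided-diffs}.
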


\begin{corollary}\label{crit_invariant}
 The algebraic set $W$ is $\Sc_n$-invariant.
\end{corollary}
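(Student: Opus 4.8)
The plan is to show that the ideal generated by the defining equations of $W$ in~\eqref{critsys} is stable under the action of $\Sc_n$, which immediately yields that its zero set $W$ is $\Sc_n$-invariant. Fix $\sigma\in\Sc_n$. The equations come in two groups: the polynomials $f_1,\dots,f_s$, and the $(s+1)$-minors of $\jac(\f,\phi)$. For the first group there is nothing to do: since each $f_i$ is $\Sc_n$-invariant by hypothesis, $\sigma(f_i)=f_i$, so $\langle f_1,\dots,f_s\rangle$ is clearly $\sigma$-stable.

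The heart of the matter is the second group. First I would use Lemma~\ref{lemma:eq_partial} to understand how $\sigma$ acts on the Jacobian matrix $\jac(\f,\phi)$, whose $(i,k)$ entry is $\partial g_i/\partial x_k$ with $g_1=f_1,\dots,g_s=f_s,g_{s+1}=\phi$. Applying $\sigma$ entrywise and using~\eqref{partial} together with the invariance $\sigma(g_i)=g_i$, we get
\[
\sigma\!\left(\frac{\partial g_i}{\partial x_k}\right)=\frac{\partial(\sigma(g_i))}{\partial x_{\sigma(k)}}=\frac{\partial g_i}{\partial x_{\sigma(k)}},
\]
so the matrix obtained by applying $\sigma$ to every entry of $\jac(\f,\phi)$ is exactly $\jac(\f,\phi)$ with its columns permuted by $\sigma$; that is, $\sigma(\jac(\f,\phi))=\jac(\f,\phi)\cdot P_\sigma$ for the permutation matrix $P_\sigma$. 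Since $\sigma$ acts as a ring homomorphism, it sends the $(s+1)$-minor of $\jac(\f,\phi)$ indexed by a column subset $J$ to the corresponding $(s+1)$-minor of $\jac(\f,\phi)\cdot P_\sigma$ indexed by $J$, which equals $\pm$ the $(s+1)$-minor of $\jac(\f,\phi)$ indexed by the column subset $\sigma^{-1}(J)$. Hence $\sigma$ permutes the set $M_{s+1}(\jac(\f,\phi))$ up to sign, so the ideal $\langle M_{s+1}(\jac(\f,\phi))\rangle$ is $\sigma$-stable. (This is precisely the ``globally invariant'' property described in the introduction.)

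Combining the two groups, the ideal in~\eqref{critsys} is stable under every $\sigma\in\Sc_n$, and therefore its vanishing locus $W$ satisfies $\sigma(W)=W$ for all $\sigma$, i.e.\ $W$ is $\Sc_n$-invariant. The only mildly delicate point — and the one I would be most careful about — is the bookkeeping with signs and column indices when passing from the action on the matrix to the action on its minors; the Cauchy–Binet-style identity for the minors of $A\cdot P_\sigma$ handles this cleanly, since right-multiplication by a permutation matrix merely permutes columns and contributes the sign of the induced permutation of the chosen index set, which does not affect vanishing.
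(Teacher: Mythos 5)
Your proof is correct, and it takes a genuinely (if modestly) different route from the paper's. The paper argues at the level of points and ranks: it takes $\bm\xi\in W$, uses Lemma~\ref{lemma:eq_partial} to show that $\jac(\f,\phi)$ evaluated at $\sigma(\bm\xi)$ equals $\jac(\f,\phi)(\bm\xi)$ multiplied on the right by the inverse of the permutation matrix of $\sigma$, and concludes that rank is preserved, never mentioning the individual minors. You instead argue at the level of the ring: you show that the ideal $\langle f_1,\dots,f_s\rangle + \langle M_{s+1}(\jac(\f,\phi))\rangle$ is $\Sc_n$-stable, by observing that $\sigma$ applied entrywise to $\jac(\f,\phi)$ permutes its columns, hence permutes the maximal minors up to sign. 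Both arguments hinge on the same key fact from Lemma~\ref{lemma:eq_partial}; the payoff of your version is that it directly proves the stronger ``globally invariant'' property of the defining system, which the paper only remarks on informally after its proof (``We remark that the proof of the corollary implies a slightly stronger property\dots''), whereas the paper's version is leaner because it bypasses the sign/index bookkeeping for minors entirely by working with rank. Two tiny quibbles: your index set should be $\sigma(J)$ rather than $\sigma^{-1}(J)$ under the convention you set up (column $k$ of $\jac\cdot P_\sigma$ is column $\sigma(k)$ of $\jac$), though as you note the sign and the precise permutation are irrelevant to ideal membership; and the final step deserves a half-sentence justifying that an $\Sc_n$-stable ideal has an $\Sc_n$-invariant zero set, which follows since $g(\sigma(\bm\xi))=(\sigma^{-1}g)(\bm\xi)$ and $\sigma^{-1}$ preserves the ideal.
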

\begin{proof} Let $\bm \xi$ be in $W$ and $\sigma$ be in ${\cal S}_n$. We need to show that $\sigma(\bm \xi)$ is in $W$, that is,
  $f_i(\sigma(\bm \xi)) = 0$ for all $i$ and $\jac(\f, \phi)$ has rank
  at most $s$ at $\sigma(\bm \xi)$.

  The first statement is clear, since $\bm \xi$ cancels $\f$ and $\f$ is
  $\Sc_n$-invariant. For the second claim, since all $f_i$'s and
  $\phi$ are $\Sc_n$-invariant, Lemma \ref{lemma:eq_partial} implies
  that the Jacobian matrix $\jac(\f, \phi)$ at $\sigma(\bm \xi)$ is
  equal to $(\jac(\f,\phi)(\bm \xi)) \bm{A}^{-1}$, where $\bm{A}$ is
  the matrix of $\sigma$.  Therefore, as with $\jac(\f,\phi)(\bm
  \xi)$, it has rank at most $s$ .
\end{proof}
We remark that the proof of the corollary implies a slightly stronger
property, which we already mentioned in the introduction: the system
$f_1, \dots, f_s, M_{s+1}(\jac(\f,\phi)) $ is globally invariant (that
is, applying any $\sigma \in \Sc_n$ permutes these equations, possibly
changing signs). However, our algorithm does not use this fact
directly.

The corollary above also implies that the discussion in
Section~\ref{ssec:dcr} applies to $W$. In particular, for a partition
$\lambda$ of $n$, the sets $W_\lambda$ and $W'_\lambda$
of~\eqref{eq:defWl} are well-defined. In what follows, we fix a
partition $\lambda = (n_1^{\ell_1} \, n_2^{\ell_2}\, \dots\,
n_r^{\ell_r})$ of $n$ and we let $\ell$ be its length; we explain how
to compute a description of $W'_\lambda$ along the lines of
Section~\ref{ssec:dcr}. For this, we let $\bZ_1,\dots,\bZ_r$ be the
indeterminates associated to $\lambda$, as defined in
Section~\ref{ssec:partitions}, with $\bZ_i = z_{i, 1}, \dots, z_{i,
  \ell_i}$. As in that section, we also write all indeterminates
$z_{1,1},\dots,z_{r,\ell_r}$ as $z_1,\dots,z_\ell$.

\begin{definition}\label{def:typecrit}
  With $\lambda$ and $\bZ_1,\dots,\bZ_r$ as above, we define
  $\Tb_\lambda$, the $\KK$-algebra homomorphism $\KK[x_1,\dots,x_n]\to
  \KK[\bZ_1, \dots, \bZ_r]$ mapping $x_1,\dots,x_n$ to
  \begin{equation} \label{typecrit_vars}
    \underbrace{z_{1, 1}, \dots, z_{1, 1}}_{n_1}, ~
    \dots,~ \underbrace{z_{1, \ell_1}, \dots, z_{1, \ell_1}}_{n_1}~
 \dots,~ \underbrace{z_{r, 1}, \dots, z_{r, 1}}_{n_r},~
    \dots,~ \underbrace{z_{r, \ell_r}, \dots, z_{r, \ell_r}}_{n_r}.
  \end{equation}
The operator $\Tb_\lambda$ extends to vectors or matrices of polynomials
entrywise.
\end{definition}

\medskip

We can now define
\begin{equation}\label{jac_lambda}
\f^{[\lambda]} =   \mathbb{T}_\lambda(\f) =  (f_{1}^{[\lambda]},
   \dots,  f_s^{[\lambda]}) \ \text{ and } \  
 \mat{J}^{[\lambda]} =\Tb_\lambda(\jac(\f,
   \phi)) = \big[J^{[\lambda]}_{i,
     j} \big]_{1 \leq i \leq s+1, 1 \leq j \leq n}.
\end{equation}
Notice that for $f$ in $\KK[x_1, \dots, x_n]^{\Sc_n}$, 
and for any indices $j,k$ in $\{1,\dots,n\}$ for 
which $\Tb_\lambda(x_j)=\Tb_\lambda(x_k)$, we have
\begin{equation}\label{eq:equi-partial}
  \mathbb{T}_\lambda\left(\frac{\partial f}{\partial x_j}\right) =
  \mathbb{T}_\lambda\left(\frac{\partial f}{\partial x_k}\right); 
\end{equation}
this follows by applying Lemma \ref{lemma:eq_partial} to $f$ and the
transposition 
$(j\, k)$. Thus
\begin{equation} \label{eq:div_vector}
\Tb_\lambda \left(\frac{\partial f}{\partial x_1}, \dots,
  \frac{\partial f}{\partial x_n} \right) = 
\big (  \underbrace{f_{1, 1}^{[\lambda]}, 
\dots, f_{1, 1}^{[\lambda]}}_{n_1},  \dots, \underbrace{f_{1,
  \ell_1}^{[\lambda]}, \dots, f_{1, \ell_1}^{[\lambda]}}_{n_1},  \dots, 
  \underbrace{f_{r, 1}^{[\lambda]}, 
\dots, f_{ r, 1}^{[\lambda]}}_{n_r},  \dots, \underbrace{f_{r,
  \ell_r}^{[\lambda]}, \dots, f_{r, \ell_r}^{[\lambda]}}_{n_r} \big), 
\end{equation} 
where $f_{i, j}^{[\lambda]}$ are polynomials in the variables $(\bZ_1,\dots,\bZ_r)$.

\begin{lemma} \label{lemma:columns}
  The columns of the transformed Jacobian matrix $\mat J^{[\lambda]}$ have the form:
  \begin{equation} \label{eq:div_jacobian}
    \mat J^{[\lambda]} =
\big (  \underbrace{J_{1, 1}^{[\lambda]}, 
\dots, J_{1, 1}^{[\lambda]}}_{n_1},  \dots, \underbrace{J_{1,
  \ell_1}^{[\lambda]}, \dots, J_{1, \ell_1}^{[\lambda]}}_{n_1},  \dots, 
    \underbrace{J_{r, 1}^{[\lambda]}, \dots, J_{ r, 1}^{[\lambda]}}_{n_r},
    \dots, \underbrace{J_{r, \ell_r}^{[\lambda]}, \dots,
      J_{r, \ell_r}^{[\lambda]}}_{n_r} \big),  
  \end{equation}
\end{lemma}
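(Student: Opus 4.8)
The plan is to reduce the statement to the single‑row computation already recorded in \eqref{eq:div_vector}. Recall that the $k$‑th column of $\jac(\f,\phi)$ is the vector of partial derivatives $\big(\partial f_1/\partial x_k,\dots,\partial f_s/\partial x_k,\partial \phi/\partial x_k\big)^{\mathsf{T}}$, so that, since $\Tb_\lambda$ acts entrywise (Definition~\ref{def:typecrit}), the $k$‑th column of $\mat J^{[\lambda]}=\Tb_\lambda(\jac(\f,\phi))$ is the image under $\Tb_\lambda$ of that vector of derivatives.

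First I would fix one of the polynomials $g\in\{f_1,\dots,f_s,\phi\}$; each is $\Sc_n$‑invariant, so \eqref{eq:equi-partial} (itself a consequence of Lemma~\ref{lemma:eq_partial} applied to the transposition $(j\,k)$) gives $\Tb_\lambda(\partial g/\partial x_j)=\Tb_\lambda(\partial g/\partial x_k)$ whenever $\Tb_\lambda(x_j)=\Tb_\lambda(x_k)$. By the definition \eqref{typecrit_vars} of $\Tb_\lambda$, the pairs of indices $j,k$ with $\Tb_\lambda(x_j)=\Tb_\lambda(x_k)$ are exactly those lying in a common block: the first $n_1$ indices, then the next $n_1$, and so on up to the last $n_r$. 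Hence, within each such block, all the corresponding entries of $\Tb_\lambda\big(\partial g/\partial x_1,\dots,\partial g/\partial x_n\big)$ agree — this is precisely \eqref{eq:div_vector}, which is the $g$‑th row of $\mat J^{[\lambda]}$.

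Then I would assemble these row‑wise statements across $g=f_1,\dots,f_s,\phi$: two columns of $\mat J^{[\lambda]}$ lying in the same block agree in every one of the $s+1$ rows, hence agree as vectors in $\KK[\bZ_1,\dots,\bZ_r]^{s+1}$. Defining $J^{[\lambda]}_{i,j}$ to be this common column attached to the block of $z_{i,j}$ yields exactly the form \eqref{eq:div_jacobian}. There is essentially no obstacle here; the only care needed is the bookkeeping that matches the block of $n_i$ repeated occurrences of $z_{i,j}$ in \eqref{typecrit_vars} with the block of $n_i$ equal columns $J^{[\lambda]}_{i,j}$ in \eqref{eq:div_jacobian}, using the same indexing already set up in \eqref{eq:div_vector}.
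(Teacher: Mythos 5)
Your proof is correct and takes essentially the same route as the paper: apply \eqref{eq:div_vector} to each of the $s+1$ rows (i.e., to each gradient of $f_1,\dots,f_s,\phi$) and then read off the repeated column blocks. The extra detail you supply — that the pairs $j,k$ with $\Tb_\lambda(x_j)=\Tb_\lambda(x_k)$ are exactly those in a common block, and that equality of all $s+1$ entries gives equality of the columns as vectors — is just making the paper's "this follows directly" explicit.
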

\begin{proof}
This follows directly from (\ref{eq:div_vector}), since 
  \[
    (J^{[\lambda]}_{s+1, 1}, \dots, J_{s+1,
      n}^{[\lambda]}) = \Tb_\lambda \left(\frac{\partial
        \phi}{\partial x_1}, \dots, \frac{\partial \phi}{\partial x_n}
    \right) \ \text{ and } \ (J^{[\lambda]}_{i, 1}, \dots,
    J_{i, n}^{[\lambda]}) = \Tb_\lambda \left(\frac{\partial
        f_i}{\partial x_1}, \dots, \frac{\partial f_i}{\partial x_n}
    \right) \] for $i=1, \dots, s$, and all polynomials $f_1,\dots,f_s,\phi$ are in 
$\KK[x_1, \dots, x_n]^{\Sc_n}$.
\end{proof}

We will then let $\mat G^{[\lambda]} = [G_{i,j}^{[\lambda]}]_{1 \leq i
  \leq s+1, 1 \leq j \leq \ell}$ be the matrix with entries in
$\KK[\bZ_1, \dots, \bZ_r]$ obtained from $\jac(\f, \phi)$ by first
applying $\Tb_\lambda$ and then keeping only one representative among
all repeated columns highlighted in the previous lemma.

\begin{example}
   Let $s =1$ and $n = 5$, so we consider two polynomials $f_1,\phi$
   in $\KK[x_1,\dots,x_5]$, and take $\lambda = (1^1 \, 2^2)$. Then \[ f_{
     1}^{[\lambda]} (z_{1, 1}, z_{2, 1}, z_{2, 2}) = \Tb_\lambda(f_1) = f_1(z_{1, 1},
   z_{2, 1}, z_{2, 1}, z_{2, 2}, z_{2, 2}),\] and
\[
  \mat G^{[\lambda]} = \begin{pmatrix}
    \Tb_\lambda(\frac{\partial f_1}{\partial x_1}) & \Tb_\lambda(\frac{\partial
      f_1}{\partial x_2}) 
    & \Tb_\lambda(\frac{\partial f_1}{\partial x_4}) \\
   & & \\
   \Tb_\lambda(\frac{\partial \phi}{\partial x_1}) & \Tb_\lambda(\frac{\partial
     \phi}{\partial x_2})
   & \Tb_\lambda(\frac{\partial \phi}{\partial x_4}) 
     \end{pmatrix} \in \KK[z_{1, 1}, z_{2, 1}, z_{2, 2}]^{2 \times 3}. \]
\end{example}

It is easy to see that the polynomials $\f^{[\lambda]}$ are
$\Sc_\lambda$-invariant, where $\Sc_\lambda$ is the permutation
group $\Sc_{\ell_1} \times \cdots \times \Sc_{\ell_r}$ introduced in
the previous section. However, this is  generally not the case for
the entries of $\mat G^{[\lambda]}$.

\begin{lemma} \label{lemma:g}
Let
  $\bm g^{[\lambda]} = (g^{[\lambda]}_{1},\dots,g^{[\lambda]}_{\ell})$
  be a row of $\G^{[\lambda]}$. Then
  \begin{itemize}
  \item[(i)] $z_i - z_j$ divides $g^{[\lambda]}_i - g^{[\lambda]}_j$
    for $1 \leq i < j   \leq \ell$; 
  \item[(ii)] $\bm g^{[\lambda]}$ is $\Sc_\lambda$-equivariant.  
  \end{itemize}
\end{lemma}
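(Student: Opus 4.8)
The statement concerns a single row $\bm g^{[\lambda]} = (g^{[\lambda]}_1,\dots,g^{[\lambda]}_\ell)$ of $\mat G^{[\lambda]}$; by Lemma~\ref{lemma:columns} and the construction of $\mat G^{[\lambda]}$, this row is precisely the ``compressed'' version of $\Tb_\lambda(\partial g/\partial x_1,\dots,\partial g/\partial x_n)$ for $g$ equal to one of $f_1,\dots,f_s,\phi$, where we keep one representative per block. So I would first reduce both claims to statements about a single $\Sc_n$-invariant polynomial $g$ and its transformed partial derivatives. Recall from~\eqref{eq:div_vector} that $\Tb_\lambda(\partial g/\partial x_1,\dots,\partial g/\partial x_n)$ has the block form $(f^{[\lambda]}_{1,1},\dots,f^{[\lambda]}_{1,1},\dots,f^{[\lambda]}_{r,\ell_r},\dots,f^{[\lambda]}_{r,\ell_r})$, and $\bm g^{[\lambda]}$ is the length-$\ell$ vector $(f^{[\lambda]}_{1,1},\dots,f^{[\lambda]}_{r,\ell_r})$, i.e.\ $g^{[\lambda]}_i$ for $i = \tau_{k-1}+u$ is $\Tb_\lambda(\partial g/\partial x_j)$ for any index $j$ whose image under $\Tb_\lambda$ is $z_{k,u} = z_i$.

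\textbf{Part (ii): equivariance.} Fix $\sigma \in \Sc_\lambda$, viewed as a permutation of $\{1,\dots,\ell\}$. I want $\sigma(g^{[\lambda]}_i) = g^{[\lambda]}_{\sigma(i)}$. Pick, for each $i\in\{1,\dots,\ell\}$, an index $a(i)\in\{1,\dots,n\}$ with $\Tb_\lambda(x_{a(i)}) = z_i$; then $g^{[\lambda]}_i = \Tb_\lambda(\partial g/\partial x_{a(i)})$. The permutation $\sigma$ of the $\ell$ compressed variables lifts to a permutation $\tilde\sigma\in\Sc_n$ of the $n$ original variables (permuting whole blocks of equal size, since $\sigma\in\Sc_{\ell_1}\times\cdots\times\Sc_{\ell_r}$ only permutes variables within each $n_k$-group), chosen so that $\tilde\sigma(a(i)) = a(\sigma(i))$ and $\Tb_\lambda\circ\tilde\sigma = \sigma\circ\Tb_\lambda$ as maps on $\KK[x_1,\dots,x_n]$. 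Since $g$ is $\Sc_n$-invariant, $\tilde\sigma(g)=g$, so by Lemma~\ref{lemma:eq_partial}, $\tilde\sigma(\partial g/\partial x_{a(i)}) = \partial(\tilde\sigma g)/\partial x_{\tilde\sigma(a(i))} = \partial g/\partial x_{a(\sigma(i))}$. Applying $\Tb_\lambda$ and using the intertwining relation gives $\sigma(\Tb_\lambda(\partial g/\partial x_{a(i)})) = \Tb_\lambda(\partial g/\partial x_{a(\sigma(i))})$, i.e.\ $\sigma(g^{[\lambda]}_i) = g^{[\lambda]}_{\sigma(i)}$, as desired.

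\textbf{Part (i): divisibility.} Fix $i<j$ in $\{1,\dots,\ell\}$; I want $z_i - z_j \mid g^{[\lambda]}_i - g^{[\lambda]}_j$ in $\KK[\bZ_1,\dots,\bZ_r]$. Equivalently, $g^{[\lambda]}_i$ and $g^{[\lambda]}_j$ agree after setting $z_i = z_j$. With $a(i),a(j)$ as above, this is the claim that $\Tb_\lambda(\partial g/\partial x_{a(i)}) - \Tb_\lambda(\partial g/\partial x_{a(j)})$ vanishes modulo $z_i - z_j$. The point is that specializing $z_i = z_j$ in $\KK[\bZ_1,\dots,\bZ_r]$ corresponds, at the level of the original variables, to a further identification of variables, i.e.\ to a composition $\Tb_{\lambda'}$ for a coarser partition-type substitution in which the block carrying $z_i$ and the block carrying $z_j$ are merged; under this substitution $x_{a(i)}$ and $x_{a(j)}$ become equal. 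So applying~\eqref{eq:equi-partial} (which is exactly Lemma~\ref{lemma:eq_partial} for the transposition $(a(i)\,a(j))$) to $g$ — noting that after the merge the variables $x_{a(i)}$ and $x_{a(j)}$ get the same image — yields that the two transformed partials coincide after setting $z_i = z_j$. Hence $z_i - z_j$ divides the difference. I would phrase this cleanly by factoring the specialization map: let $\pi$ be the $\KK$-algebra map $\KK[\bZ_1,\dots,\bZ_r]\to\KK[\bZ_1,\dots,\bZ_r]/(z_i-z_j)$; then $\pi\circ\Tb_\lambda$ is a substitution of the original variables that sends $x_{a(i)}$ and $x_{a(j)}$ to the same element, and~\eqref{eq:equi-partial}-type reasoning gives $\pi(g^{[\lambda]}_i) = \pi(g^{[\lambda]}_j)$.

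\textbf{Main obstacle.} The routine calculations are all trivial; the one place requiring care is bookkeeping the index translation between the $n$ original variables and the $\ell$ compressed ones — specifically, making the lift $\sigma\mapsto\tilde\sigma$ in part (ii) and the variable-merge interpretation of ``$z_i = z_j$'' in part (i) precise, and checking that $\Tb_\lambda$ intertwines the relevant group actions/substitutions. Once the notation for blocks ($\tau_k$, the map $i\mapsto(k,u)$) is set up as in Section~\ref{ssec:partitions}, both parts are short. I expect the write-up to lean entirely on Lemma~\ref{lemma:eq_partial} and equation~\eqref{eq:equi-partial}, with no new ideas needed.
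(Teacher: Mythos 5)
Your proof is correct and follows essentially the same route as the paper's: both reduce claims (i) and (ii) to Lemma~\ref{lemma:eq_partial} together with the fact that $\Tb_\lambda$ intertwines a suitable permutation of $\{x_1,\dots,x_n\}$ with a permutation of the $z$-variables (for (ii)) or with the specialization $z_i\mapsto z_j$ (for (i)). The only stylistic difference is that the paper reduces part (ii) to within-group transpositions and exhibits an explicit block-swap $\tau\in\Sc_n$, whereas you build the lift $\tilde\sigma$ for a general $\sigma\in\Sc_\lambda$ at once, which is a minor cleanup rather than a new idea.
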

\begin{proof} 
  For the sake of definiteness, let us assume that $\bm g^{[\lambda]}$
  is the row corresponding to the gradient of $f_1$, with the other
  cases treated similarly.

  For statement $(i)$, we start from indices $i,j$ as in the lemma and
  let $S$ be the $\KK$-algebra homomorphism $\KK[\bZ_1, \dots, \bZ_r]
  \to \KK[\bZ_1, \dots, \bZ_r]$ that maps $z_i$ to $z_j$, leaving all
  other variables unchanged.
  Let  $u,v$ in $\{1,\dots,n\}$ be  indices such that
  $g_i^{[\lambda]} = \Tb_\lambda(\partial f_1/\partial x_u)$ and
  $g_j^{[\lambda]} = \Tb_\lambda(\partial f_1/\partial x_v)$ and 
  $\sigma \in \Sc_n$  the transposition $(u\,v)$. From
  Lemma~\ref{lemma:eq_partial}, we have that $\sigma(\partial f_1/\partial
  x_u) = \partial f_1/\partial x_v$ and applying $S \circ \Tb_\lambda$ gives
  $S(\Tb_\lambda(\sigma(\partial f_1/\partial x_u))) =
  S(\Tb_\lambda(\partial f_1/\partial x_v)) $.  For any $h \in
  \KK[x_1,\dots,x_n]$ we have, by construction,
  $S(\Tb_\lambda(\sigma(h)))=S(\Tb_\lambda(h))$. Applying this on
  the left-hand side of the previous equality gives
  $S(g_i^{[\lambda]}) = S(g_j^{[\lambda]})$. As a result, $z_i-z_j$
  divides $g_i^{[\lambda]} - g_j^{[\lambda]}$, as claimed.

  For statement $(ii)$, we take indices $k$ in $\{1,\dots,r\}$ and
  $j,j'$ in $\{1,\dots,\ell_k\}$. We let $\sigma \in \Sc_\lambda$ be
  the transposition that maps $(k,j)$ to $(k,j')$ and prove that
  $\sigma(g_{k,j}^{[\lambda]})= g_{k,j'}^{[\lambda]}$. As before,
  there exist indices $u,v$ in $\{1,\dots,n\}$ such that $g_{k,
    j}^{[\lambda]} = \Tb_\lambda(\partial f_1/\partial x_u)$ and
  $g_{k, j'}^{[\lambda]} = \Tb_\lambda(\partial f_1/\partial
  x_v)$. Without loss of generality, assume that $u$ and $v$ are the
  smallest such indices. Then $\Tb_\lambda$ maps
  $x_u,\dots,x_{u+\ell_k-1}$ to $z_{k,j}$ and
  $x_v,\dots,x_{v+\ell_k-1}$ to $z_{k,j'}$.

  Let $\tau \in \Sc_n$ be permutation that permutes
  $(u,\dots,u+\ell_k-1)$ with $(v,\dots,v+\ell_k-1)$. From
  Lemma~\ref{lemma:eq_partial}, we get $\tau(\partial f_1/\partial x_v)
  = \partial f_1/\partial x_u$.  Then
  $\Tb_\lambda(\tau(\partial f_1/\partial x_u)) = \Tb_\lambda(\partial
  f_1/\partial x_v) = g_{k, j'}^{[\lambda]}$.  By construction, the
  left-hand side is equal to $\sigma(\Tb_\lambda(\partial f_1/\partial
  x_u))$, that is, $\sigma(g_{k, j}^{[\lambda]})$. 
\end{proof}

Lemma \ref{lemma:g} implies that we can apply Algorithm {\sf Symmetrize} from
Section~\ref{divided-diffs} to each row of $\mat G^{[\lambda]}$. The result is 
a polynomial matrix $\mat H^{[\lambda]}$ in
$\KK[\bZ_1,\dots,\bZ_r]$,  whose rows are all
$\Sc_\lambda$-equivariant, and such that $\mat H^{[\lambda]} = \mat
G^{[\lambda]} \mat U^{[\lambda]}$, for some polynomial matrix $\mat
U^{[\lambda]}$ in $\KK[\bZ_1,\dots,\bZ_r]^{\ell \times \ell}$.  
Applying Algorithm {\sf Symmetric\_Coordinates} from
Lemma~\ref{elem_slp} to the entries of both $\f^{[\lambda]}$ and $\mat
H^{[\lambda]}$ gives polynomials $\bar \f^{[\lambda]}$ and a
matrix $\bar{ \mat H}^{[\lambda]}$, all with entries in
$\KK[\e_1,\dots,\e_r]$, with variables $\e_i =
e_{i,1},\dots,e_{i,\ell_1}$ for all $i$, and such that $\bm
f^{[\lambda]} = \bar {\bm f}^{[\lambda]}(\b_eta_1,\dots,\b_eta_r)$ and
$\H^{[\lambda]} = \bar {\H}^{[\lambda]}(\b_eta_1,\dots,\b_eta_r)$.

The following summarizes the main properties of this construction. For
the definitions of the sets ${\cal C}_\lambda$, ${\cal C}_\lambda^{\rm
  strict}$, the mapping $E_\lambda$ and the open set $O_\lambda
\subset \KKbar{}^\ell$, see Section~\ref{ssec:dcr}.
\begin{proposition} \label{prop:some_properties} 
  Let $\lambda$  be a partition of $n$ of length $\ell$. 
  \begin{itemize}
  \item[(i)] If $\ell \le s$, then $E_\lambda(W \cap {\cal
    C}_\lambda)$ is the zero-set of $\bar {\bm f}^{[\lambda]}$ in
    $\KKbar{}^\ell$.
  \item[(ii)] If $\ell > s$, then $W'_\lambda=E_\lambda(W \cap {\cal
    C}_\lambda^{\rm strict})$ is the zero-set of $\bar {\bm
    f}^{[\lambda]}$ and all $(s+1)$-minors of $\bar {\H}^{[\lambda]}$
    in $O_\lambda \subset \KKbar{}^\ell$.
  \end{itemize}
\end{proposition}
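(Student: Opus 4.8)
The plan is to unwind the construction defining $\bar{\bm f}^{[\lambda]}$ and $\bar{\H}^{[\lambda]}$ and match it against the geometric description of $W'_\lambda$ provided in Section~\ref{ssec:dcr}. First I would record the core commutation fact: for any point $\bm\xi \in {\cal C}_\lambda$ written as in~\eqref{typecrit} with distinct-coordinate tuple $\bm\zeta=(\xi_{1,1},\dots,\xi_{r,\ell_r})$, and any $g \in \KK[x_1,\dots,x_n]$, one has $g(\bm\xi)=\Tb_\lambda(g)(\bm\zeta)$; moreover $E_\lambda(\bm\xi)=(\bm\eta_1(\bm\zeta_1),\dots,\bm\eta_r(\bm\zeta_r))$, where $\bm\zeta_i=(\xi_{i,1},\dots,\xi_{i,\ell_i})$. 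Combining this with~\eqref{eqdef:fbar}, for any $\Sc_\lambda$-invariant $h$ we get $h(\bm\zeta)=\bar h(E_\lambda(\bm\xi))$. So evaluating $\Tb_\lambda$-images at $\bm\zeta$ is the same as evaluating their $\bar{\cdot}$-versions at the compressed point $\bm\varepsilon=E_\lambda(\bm\xi)$.

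For part (i), when $\ell\le s$ the minor conditions are vacuous (the Jacobian has only $s+1$ rows but there are at most $\ell\le s<s+1$ columns, so there are no $(s+1)$-minors), hence $W\cap{\cal C}_\lambda$ is cut out inside ${\cal C}_\lambda$ simply by $f_1=\cdots=f_s=0$. Given $\bm\xi\in{\cal C}_\lambda$ with tuple $\bm\zeta$, we have $f_j(\bm\xi)=f_j^{[\lambda]}(\bm\zeta)=\bar f_j^{[\lambda]}(E_\lambda(\bm\xi))$ using the commutation fact and $\Sc_\lambda$-invariance of $\f^{[\lambda]}$. Thus $\bm\xi\in W$ iff $E_\lambda(\bm\xi)$ is a zero of $\bar{\bm f}^{[\lambda]}$. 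Surjectivity of $E_\lambda\colon{\cal C}_\lambda\to\KKbar{}^\ell$ (noted in Section~\ref{ssec:dcr}) then gives that $E_\lambda(W\cap{\cal C}_\lambda)$ is exactly $V(\bar{\bm f}^{[\lambda]})$.

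For part (ii), when $\ell>s$ I would argue in three moves. \textbf{(a) Rank of the transformed Jacobian.} Fix $\bm\xi\in{\cal C}_\lambda^{\rm strict}$ with distinct-coordinate tuple $\bm\zeta$. By Lemma~\ref{lemma:columns} the matrix $\jac(\f,\phi)(\bm\xi)$ has $n_i$ copies of each column $J_{i,k}^{[\lambda]}(\bm\zeta)$, so it has the same column span — hence the same rank — as $\mat G^{[\lambda]}(\bm\zeta)$, the $(s+1)\times\ell$ matrix of distinct columns. Therefore $\bm\xi\in W$ iff $f_j^{[\lambda]}(\bm\zeta)=0$ for all $j$ and $\rank\mat G^{[\lambda]}(\bm\zeta)\le s$, i.e. all $(s+1)$-minors of $\mat G^{[\lambda]}(\bm\zeta)$ vanish. \textbf{(b) Passing to $\H^{[\lambda]}$.} We have $\H^{[\lambda]}=\mat G^{[\lambda]}\,\mat U^{[\lambda]}$ with $\det\mat U^{[\lambda]}$ a unit in $\KK[\bZ_1,\dots,\bZ_r,1/\Delta]$ by Proposition~\ref{lemma:vector_symme_divided2} (applied row by row, the $\mat U$'s agree since they depend only on $\lambda$). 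On ${\cal C}_\lambda^{\rm strict}$ the $z_i$ are pairwise distinct, so $\Delta(\bm\zeta)\ne0$ and $\mat U^{[\lambda]}(\bm\zeta)$ is invertible; hence $\mat G^{[\lambda]}(\bm\zeta)$ and $\H^{[\lambda]}(\bm\zeta)$ have equal rank, and the $(s+1)$-minors of one vanish at $\bm\zeta$ iff those of the other do. \textbf{(c) Passing to elementary-symmetric coordinates.} Both $\f^{[\lambda]}$ and (by Lemma~\ref{lemma:g} plus the output guarantee of {\sf Symmetrize}) the entries of $\H^{[\lambda]}$ are $\Sc_\lambda$-invariant; the $(s+1)$-minors of $\H^{[\lambda]}$ are then $\Sc_\lambda$-invariant as well, being polynomials in invariant entries, but we do not even need that — we only need that $\H^{[\lambda]}=\bar\H^{[\lambda]}(\bm\eta_1,\dots,\bm\eta_r)$ entrywise, so every $(s+1)$-minor of $\H^{[\lambda]}$ equals the corresponding minor of $\bar\H^{[\lambda]}$ evaluated at $(\bm\eta_1,\dots,\bm\eta_r)$, and likewise $f_j^{[\lambda]}=\bar f_j^{[\lambda]}(\bm\eta_1,\dots,\bm\eta_r)$. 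Evaluating at $\bm\zeta$ and using $E_\lambda(\bm\xi)=(\bm\eta_1(\bm\zeta_1),\dots,\bm\eta_r(\bm\zeta_r))$, we conclude $\bm\xi\in W$ iff $E_\lambda(\bm\xi)$ is a common zero of $\bar{\bm f}^{[\lambda]}$ and all $(s+1)$-minors of $\bar\H^{[\lambda]}$. Finally, $E_\lambda$ restricts to a surjection ${\cal C}_\lambda^{\rm strict}\to O_\lambda$ with fibers the $\Sc_{n_1}^{\ell_1}\times\cdots\times\Sc_{n_r}^{\ell_r}$-orbits (Section~\ref{ssec:dcr}), and a point of $\KKbar{}^\ell$ lies in $O_\lambda$ exactly when the associated $P_i$ are squarefree and pairwise coprime, which is what makes its $E_\lambda$-preimage land in ${\cal C}_\lambda^{\rm strict}$; so taking the image of the equivalence over all $\bm\xi\in{\cal C}_\lambda^{\rm strict}$ yields $W'_\lambda=E_\lambda(W\cap{\cal C}_\lambda^{\rm strict})$ equal to the zero set of $\bar{\bm f}^{[\lambda]}$ and the $(s+1)$-minors of $\bar\H^{[\lambda]}$ inside $O_\lambda$.

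The main obstacle I anticipate is step (b): making precise that the single matrix $\mat U^{[\lambda]}$ (and the columns-to-$\mat G^{[\lambda]}$ reduction) behaves uniformly across all rows of the Jacobian and that $\det\mat U^{[\lambda]}$ is invertible exactly on ${\cal C}_\lambda^{\rm strict}$, i.e. that the only denominators introduced are powers of $\Delta$. This is where one must lean carefully on Proposition~\ref{lemma:vector_symme_divided2} and on the fact that the construction of $\mat U$ there depends only on $\lambda$, not on the particular equivariant row $\bm q$. The rest is a bookkeeping chain of ``evaluate $\Tb_\lambda$-image at $\bm\zeta$ $=$ evaluate $\bar{\cdot}$-image at $E_\lambda(\bm\xi)$'' identities together with the surjectivity/fiber structure of $E_\lambda$ already established in Section~\ref{ssec:dcr}.
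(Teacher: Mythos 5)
Your proof is correct and follows essentially the same line as the paper's: translate the equations to the $\bZ$-variables via $\Tb_\lambda$, compress the Jacobian to $\G^{[\lambda]}$ using repeated columns, pass to $\H^{[\lambda]}$ via the unit matrix $\U^{[\lambda]}$ on the strict locus, and then descend to elementary-symmetric coordinates using surjectivity of $E_\lambda$ (onto $\KKbar{}^\ell$ for (i), onto $O_\lambda$ for (ii)). One phrasing in part (i) deserves tightening: the full Jacobian $\jac(\f,\phi)$ has $n$ columns, so it does have $(s+1)$-minors; what makes them vanish on $\Clambda$ is that $\jac(\f,\phi)(\bm\xi)=\mat J^{[\lambda]}(\bm\zeta)$ has at most $\ell\le s$ \emph{distinct} columns (equivalently, the same rank as the $\ell$-column matrix $\mat G^{[\lambda]}(\bm\zeta)$), so its rank is at most $s$ automatically — this is what the paper says, and it is what you mean, but as written ``there are no $(s+1)$-minors'' could mislead. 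Your explicit remark in step (b) that $\mat U^{[\lambda]}$ from Proposition~\ref{lemma:vector_symme_divided2} depends only on $\lambda$ and not on the particular row $\bq$, so the same $\mat U^{[\lambda]}$ works simultaneously for all rows of $\G^{[\lambda]}$, is a point the paper uses implicitly and is worth flagging as you did.
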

\begin{proof}
  Let $\bm \xi$ be in the set ${\cal C}_\lambda$ defined in
  Section~\ref{ssec:dcr}, and write 
$$\bm \xi = \big( \underbrace{\xi_{1, 1}, \dots,\xi_{1, 1}}_{n_1}, \dots,
  \underbrace{\xi_{1, \ell_1}, \dots,\xi_{1, \ell_1}}_{n_1}, \dots,
  \underbrace{\xi_{r, 1}, \dots, \xi_{r, 1}}_{n_r}, \dots,
  \underbrace{\xi_{r, \ell_r}, \dots, \xi_{r, \ell_r}}_{n_r} \big).$$
  Set $\bm \zeta =(\xi_{1,1},\xi_{1,2},\dots,\xi_{r,\ell_r})\in
  \KKbar{}^\ell$ and $\bm \varepsilon = E_\lambda(\bm \xi) \in
  \KKbar{}^\ell$. By definition, we have $\f(\bm \xi) = 
  \f^{[\lambda]}(\bm \zeta)$ and $\jac(\f, \phi)(\bm \xi) = \mat
  J^{[\lambda]}(\bm \zeta)$.
  Thus, $\bm \xi$ is in $W \cap {\cal C}_\lambda$ if and only if it
  cancels $\bm f$ and $\jac(\bm f, \phi)$ has rank at most $s$ at $\bm
  \xi$, that is, if $ \f^{[\lambda]}(\bm \zeta)=0$ and $\mat
  J^{[\lambda]}(\bm \zeta)$ has rank at most $s$. The point $\bm \xi$
  is in $W \cap {\cal C}_\lambda^{\rm strict}$ if all the entries of
  $\bm \zeta$ are also pairwise distinct.

   In addition, we have $\f^{[\lambda]}(\bm \zeta)=\bar {\f}^{[\lambda]}(\bm
  \varepsilon)$ and, by construction, ${\rm rank}( \mat J^{[\lambda]}(\bm
  \zeta))={\rm rank}(\mat G^{[\lambda]}(\bm \zeta))$. If $\ell \le s$ then,  since
  $\mat G^{[\lambda]}$ has $\ell$ columns, we see that $\bm \xi$ is in
  $W \cap {\cal C}_\lambda$ if and only if $\bm
  \varepsilon=E_\lambda(\bm \xi)$ cancels
  $\bar{\f}^{[\lambda]}$. Since $E_\lambda: {\cal C}_\lambda \to 
  \KKbar{}^\ell$ is onto, this implies our first claim.

  Suppose further that $\bm \xi$ is in $ {\cal C}_\lambda^{\rm strict}$,
  so that $\bm \varepsilon$ is in $O_\lambda$. From
  Proposition~\ref{lemma:vector_symme_divided}, we have $\mat
  H^{[\lambda]} = \G^{[\lambda]} \U^{[\lambda]}$. Our assumption on
  $\bm \xi$ implies that $\mat U^{[\lambda]}(\bm \zeta)$ is invertible, so
  that $\mat G^{[\lambda]}$ and $\mat H^{[\lambda]}$ have the same
  rank at $\bm \zeta$. Finally, we have $\H^{[\lambda]}(\bm \zeta)=
  \bar {\H}^{[\lambda]}(\bm \varepsilon)$. All this 
  combined shows that $\bm \xi$ is in $W'_\lambda=E_\lambda(W \cap
  {\cal C}^{\rm strict}_\lambda)$ if and only if $\bm
  \varepsilon=E_\lambda(\bm \xi)$ cancels $\bar 
  {\f}^{[\lambda]}$ and all $(s+1)$-minors of $ \bar
  {\H}^{[\lambda]}$.  Since the restriction $E_\lambda: {\cal C}^{\rm 
    strict}_\lambda \to O_\lambda$ is onto, this implies the
  second claim.
\end{proof}

\subsection{The  ${\sf Critical\_Points\_Per\_Orbit}$ algorithm}\label{sec:CPperO}

The main algorithm of this paper is ${\sf
  Critical\_Points\_Per\_Orbit}$ which takes as input symmetric $\f = (f_1,
\ldots, f_s)$ and $\phi$ in $\KK[x_1, \ldots, x_n]$ and, if finite,
outputs a {\symmrep} of the critical point set $W=\crit(\phi, V(\f))$.
Using our notation from Section~\ref{sec:invariant}, this means that
we want to compute zero-dimensional parametrizations of
$W'_\lambda=E_\lambda(W \cap \ClambdaS)$, for all partitions $\lambda$
of $n$ for which this set is not empty. The algorithm is based on
Proposition~\ref{prop:some_properties}, with a minor modification, as
we will see that it is enough to consider partitions of $n$ of length
$\ell$ either exactly equal to $s$, or at least $s+1$.

For any partition $\lambda$, we first 
need to transform $\f$ and $\phi$, in order to obtain the polynomials
in Proposition~\ref{prop:some_properties}.
\begin{lemma} \label{lemma:prepare}
  There exists an algorithm ${\sf Prepare\_F}(\f,\lambda)$ which takes
  as input $\f$ as above and a partition $\lambda$, and returns $\bar
  \f^{[\lambda]}$. If $\f$ has degree at most $d$, the algorithm takes
 $\softO(n {n+d \choose d}{}^2)$ operations in $\KK$.
 Similarly, there exists an algorithm ${\sf Prepare\_F\_H}(\f,\phi,\lambda)$
  which takes as input $\f,\phi$ as above and a partition $\lambda$,
  and returns $\bar \f^{[\lambda]}$ and $\bar {\H}^{[\lambda]}$.
  If $\f$ and $\phi$ have degree at most $d$, then the algorithm takes
  $\softO(n^4 {n+d \choose d}{}^2)$ operations in $\KK$.\end{lemma}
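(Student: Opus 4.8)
The plan is to build both procedures out of primitives already developed: the variable-identification homomorphism $\Tb_\lambda$ of Definition~\ref{def:typecrit}, algorithm ${\sf Symmetric\_Coordinates}$ of Lemma~\ref{elem_slp}, and, for the second procedure, algorithm ${\sf Symmetrize}$ of Proposition~\ref{prop:symmetrize}. For ${\sf Prepare\_F}$, I would first apply $\Tb_\lambda$ to each $f_i$ to get $f_i^{[\lambda]}=\Tb_\lambda(f_i)\in\KK[\bZ_1,\dots,\bZ_r]^{\Sc_\lambda}$; since $f_i$ has at most $\binom{n+d}{d}$ monomials and each image exponent is obtained with $O(n)$ additions (collisions being resolved by accumulating into an array indexed by the monomials in $\ell\le n$ variables of degree $\le d$), this costs $\softO(n\binom{n+d}{d})$ per polynomial. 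Then I would run ${\sf Symmetric\_Coordinates}(\lambda,f_i^{[\lambda]})$ on each $f_i^{[\lambda]}$; by Lemma~\ref{elem_slp} this costs $\softO(\binom{\ell+d}{d}^2)\le\softO(\binom{n+d}{d}^2)$ per polynomial and produces $\bar\f^{[\lambda]}$. Since $s\le n$ and $n\le\binom{n+d}{d}$ for $d\ge 2$, both contributions fit in $\softO(n\binom{n+d}{d}^2)$.

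For ${\sf Prepare\_F\_H}$, I would first call ${\sf Prepare\_F}(\f,\lambda)$ to obtain $\bar\f^{[\lambda]}$, then build $\bar{\H}^{[\lambda]}$ in four steps. (i) Form the rows of $\jac(\f,\phi)$ restricted to the $\ell$ columns indexed by the distinct images of $\Tb_\lambda$ (read off directly from $\lambda$): this is $(s+1)\ell=O(n^2)$ partial derivatives, each of degree $\le d-1$ and computed in $O(\binom{n+d}{d})$ operations. (ii) Apply $\Tb_\lambda$ entrywise to obtain $\mat G^{[\lambda]}$, at $\softO(n\binom{n+d}{d})$ per entry. (iii) By Lemma~\ref{lemma:g}, each row of $\mat G^{[\lambda]}$ is $\Sc_\lambda$-equivariant and satisfies the divisibility hypothesis of Proposition~\ref{prop:symmetrize}, so apply ${\sf Symmetrize}(\lambda,\cdot)$ to each of the $s+1$ rows, at $\softO(\ell^3\binom{\ell+d}{d})$ per row, yielding the rows of $\mat H^{[\lambda]}$. (iv) Apply ${\sf Symmetric\_Coordinates}(\lambda,\cdot)$ to the $(s+1)\ell=O(n^2)$ entries of $\mat H^{[\lambda]}$ --- legitimate inputs since, by the degree bound in Proposition~\ref{lemma:vector_symme_divided}, the entry $p_i$ of each transformed row satisfies $\deg(p_i)\le d-\ell+i\le d$ --- to obtain $\bar{\H}^{[\lambda]}$.

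Summing the costs: step (i) is $\softO(n^2\binom{n+d}{d})$, step (ii) is $\softO(n^3\binom{n+d}{d})$, step (iii) is $\softO(n\,\ell^3\binom{\ell+d}{d})\le\softO(n^4\binom{n+d}{d})$, step (iv) is $\softO(n^2\binom{n+d}{d}^2)$, and the call to ${\sf Prepare\_F}$ adds $\softO(n\binom{n+d}{d}^2)$; using $\ell\le n$, $s\le n$ and $n\le\binom{n+d}{d}$, the total is $\softO(n^4\binom{n+d}{d}^2)$. I do not expect a real obstacle, since the algorithm is a direct composition of already-analysed subroutines; the one point requiring care is the degree bookkeeping --- one must track that the entries of $\mat G^{[\lambda]}$ have degree $\le d-1$ and those of $\mat H^{[\lambda]}$ degree $\le d$ (Proposition~\ref{lemma:vector_symme_divided}), so that the cost estimates of Lemma~\ref{elem_slp} and Proposition~\ref{prop:symmetrize} apply as stated, and that the input hypotheses of ${\sf Symmetrize}$ are precisely those furnished by Lemma~\ref{lemma:g}.
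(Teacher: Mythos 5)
Your proof is correct and follows essentially the same route as the paper's: apply $\Tb_\lambda$, then ${\sf Symmetric\_Coordinates}$ for the first procedure; and for the second, additionally form $\mat G^{[\lambda]}$ from the Jacobian, symmetrize each row via Lemma~\ref{lemma:g} and Proposition~\ref{prop:symmetrize}, then apply ${\sf Symmetric\_Coordinates}$ entrywise, with the same cost accounting. Your version is slightly more economical in only forming the $\ell$ distinct Jacobian columns directly and more explicit about degree tracking (though note the entries of $\mat H^{[\lambda]}$ actually have degree at most $d-1$, since the inputs $q_i$ to ${\sf Symmetrize}$ here have degree at most $d-1$; your bound of $d$ is harmless but loose), but neither difference changes the argument or the asymptotics.
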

\begin{proof}
  In the first case, applying $\Tb_\lambda$ to $\f$
  takes linear time in the number of monomials $O(n {n+d \choose d})$
  and gives us $\f^{[\lambda]}$.  We then invoke ${\sf
    Symmetric\_Coordinates}(\lambda, \f^{[\lambda]})$, using
  Lemma~\ref{elem_slp}, in order to obtain $\bar \f^{[\lambda]}$  with the
  cost being $\softO(n {n+d \choose d}){}^2$ operations in $\KK$.

  In the second case, we obtain $\f^{[\lambda]}$ as above. We also
  compute the matrix $\jac(\f, \phi)$, which takes $O(n^2 {n+d \choose
    d})$ operations. For the same cost, we apply $\Tb_\lambda$ to all
  its entries and remove redundant columns, as specified in
  Lemma~\ref{lemma:columns}, so as to yield the matrix $
  \G^{[\lambda]}$.  We then apply Algorithm {\sf Symmetrize} from
  Proposition~\ref{prop:symmetrize} to all rows of $
 \G^{[\lambda]}$, which takes $\softO(n^4 {n + d \choose d})$
  operations, and returns $\H^{[\lambda]}$. Finally, we apply
  ${\sf Symmetric\_Coordinates}$ to all entries of this matrix which
  gives $\bar {\H}^{[\lambda]}$ and takes $\softO(n^2 {n + d
    \choose d}{}^2)$ operations in $\KK$.
\end{proof}

At the core of the algorithm, we need a procedure for finding isolated
solutions of certain polynomial systems. In our main algorithm, we
solve such systems using procedures called ${\sf Isolated\_Points}(\bm
g)$ and ${\sf Isolated\_Points}(\bm g, \H, k)$. Given polynomials
$\bm g$, the former returns a zero-dimensional parametrization of the
isolated points of $V(\bm g)$. The latter takes as input polynomials
$\bm g$, a polynomial matrix $\H$ and an integer $k$, and returns a
zero-dimensional parametrization of the isolated points of $V(\bm g,
M_k(\H))$, where $M_k(\H)$ denotes the set of $k$-minors of $\bm
\H$ (note that the former procedure can be seen as a particular case 
of the latter, where we take $\H$ to be a matrix with no row and $k=-1$).
 To establish correctness of the main algorithm, any
implementation of these procedures is suitable. 

Apart from the subroutines discussed above and the function {\sf
  Decompose} from Lemma~\ref{lemma:decompose}, our algorithm also
requires a procedure ${\sf Remove\_Duplicates}(S)$.  This inputs a
list $S=(\lambda_i,\mathscr{R}_i)_{1 \le i \le N}$, where each
$\lambda_i$ is a partition of $n$ and $\mathscr{R}_i$ a
zero-dimensional parametrization.  As all $\lambda_i$'s may not be
distinct in this list, this procedure removes pairs
$(\lambda_i,\mathscr{R}_i)$ from $S$ so as to ensure that all
resulting partitions are pairwise distinct (the choice of which
entries to remove is arbitrary; it does not affect correctness of the
overall algorithm).
	  	 
\begin{algorithm}[h] 	 
  \caption{${\sf Critical\_Points\_Per\_Orbit}(\f, \phi)$}

~\\

  {\bf Input:}  $\f = (f_1, \dots, f_s)$ and $\phi$ in $\KK[x_1,
    \dots, x_n]^{\Sc_n}$ such that $W=\crit(\phi, V(\f))$ is
  finite. \\

  {\bf Output:} A {\symmrep} of $W$.
  \begin{enumerate}
  \item $S = [\ ]$
  \item For $\lambda \vdash n$ of length $s$
    \begin{enumerate}
    \item\label{step:2a} $\bar \f^{[\lambda]} = {\sf Prepare\_F}(\f,\lambda)$
    \item\label{step:2b} $\mathscr{R}_\lambda = {\sf
        Isolated\_Points}(\bar \f^{[\lambda]})$ 
    \item\label{step:2c} append the output of ${\sf Decompose}(\lambda, \mathscr{R}_\lambda)$ to $S$
    \end{enumerate}
  \item For $\lambda \vdash n$ of length in $\{s+1,\dots,n\}$
    \begin{enumerate}
    \item\label{step:3a} $\bar \f^{[\lambda]}, \bar {\H}^{[\lambda]} = {\sf
        Prepare\_F\_H}(\f, \phi,\lambda)$ 
    \item \label{step:3b}$\mathscr{R}_\lambda = {\sf Isolated\_Points}(\bar
      \f^{[\lambda]}, \bar {\H}^{[\lambda]}, s+1)$ 
    \item \label{step:3c}$(\lambda_i,\mathscr{R}_i)_{1\le i \le N} = {\sf
        Decompose}(\lambda, \mathscr{R}_\lambda)$ 
    \item append  $(\lambda_{i_0},\mathscr{R}_{i_0})$ to $S$, where
      $i_0$ is such that 
      $\lambda_{i_0}=\lambda$, if such an $i_0$ exists
    \end{enumerate}
  \item Return ${\sf Remove\_Duplicates}(S)$
  \end{enumerate} 	 
  \label{alg:critperorbit} 	 
\end{algorithm}

\begin{proposition}
  Algorithm {\sf Critical\_Points\_Per\_Orbit} is correct.
\end{proposition}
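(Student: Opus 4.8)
The plan is to show that for every partition $\lambda \vdash n$, the portion of $W$ of type $\lambda$ is correctly captured by the pair appended to $S$ (or, in the length-$s$ case, by the pairs coming out of {\sf Decompose}), and that the final call to {\sf Remove\_Duplicates} assembles these into a genuine {\symmrep}. The key structural fact is Proposition~\ref{prop:some_properties}: it identifies $W'_\lambda = E_\lambda(W\cap\ClambdaS)$ (when $\ell>s$) with the zero-set in $O_\lambda$ of $\bar\f^{[\lambda]}$ together with all $(s+1)$-minors of $\bar\H^{[\lambda]}$, and it identifies $E_\lambda(W\cap\Clambda)$ (when $\ell\le s$, in particular $\ell=s$) with the zero-set of $\bar\f^{[\lambda]}$. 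So the first thing I would do is record that the polynomials produced by {\sf Prepare\_F} and {\sf Prepare\_F\_H} in steps~\ref{step:2a} and~\ref{step:3a} are exactly the $\bar\f^{[\lambda]}$ and $\bar\H^{[\lambda]}$ of that proposition, which is immediate from Lemma~\ref{lemma:prepare}.

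\smallskip

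Next I would argue, partition class by partition class, that the finiteness of $W$ forces the relevant algebraic sets to consist only of isolated points, so that the calls to {\sf Isolated\_Points} return parametrizations of the \emph{entire} sets in question, not proper subsets. Concretely: since $W$ is finite, each $W_\lambda$ and hence each $W'_\lambda$ is finite. For a partition $\lambda$ of length $\ell=s$, Proposition~\ref{prop:some_properties}(i) says $V(\bar\f^{[\lambda]}) = E_\lambda(W\cap\Clambda)$; because $E_\lambda$ has finite fibers and $W\cap\Clambda\subseteq W$ is finite, this variety is finite, so every point is isolated and ${\sf Isolated\_Points}(\bar\f^{[\lambda]})$ returns a parametrization $\mathscr{R}_\lambda$ of it. Then ${\sf Decompose}(\lambda,\mathscr{R}_\lambda)$, by Lemma~\ref{lemma:decompose}, returns a {\symmrep} of $E_\lambda^*(V(\bar\f^{[\lambda]})) = E_\lambda^*(E_\lambda(W\cap\Clambda)) = \Sc_n(W\cap\Clambda)$, and one checks that $\Sc_n(W\cap\Clambda)$ contains exactly the points of $W$ whose type refines-or-equals $\lambda$, i.e.\ all $W_{\lambda'}$ with $\lambda'\ge\lambda$ and $\mathrm{length}(\lambda')$ whatever it is. In particular this sweep over length-$s$ partitions already captures, via {\sf Decompose}, every $W_{\lambda'}$ with $\mathrm{length}(\lambda')\le s$ — this is the ``minor modification'' alluded to before the algorithm, and it is why step~2 does not need to run over partitions of length $<s$. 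For a partition $\lambda$ of length $\ell\ge s+1$, Proposition~\ref{prop:some_properties}(ii) identifies $W'_\lambda$ with the zero-set of $\bar\f^{[\lambda]}$ and the $(s+1)$-minors of $\bar\H^{[\lambda]}$ \emph{inside the open set} $O_\lambda$; since $W'_\lambda$ is finite, ${\sf Isolated\_Points}(\bar\f^{[\lambda]},\bar\H^{[\lambda]},s+1)$ returns a parametrization of $V(\bar\f^{[\lambda]},M_{s+1}(\bar\H^{[\lambda]}))$, which contains $W'_\lambda$ as its intersection with $O_\lambda$ — the points outside $O_\lambda$ correspond to orbits of type $\lambda'>\lambda$, and those are exactly the points that {\sf Decompose} re-labels with the correct larger partition; step~\ref{step:3c}--(d) then keeps only the component with $\lambda_{i_0}=\lambda$, which by construction is the parametrization of $W'_\lambda$ itself.

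\smallskip

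Finally I would assemble the pieces: running steps~2 and~3 over all partitions of length $s$ and of length $\ge s+1$ produces, for every $\lambda\vdash n$ with $W_\lambda\ne\emptyset$, at least one pair $(\lambda,\mathscr{R})$ in $S$ with $Z(\mathscr{R})=W'_\lambda$ — from step~3 when $\mathrm{length}(\lambda)\ge s+1$, and from the {\sf Decompose} call inside step~2 when $\mathrm{length}(\lambda)\le s$ (here one uses that $W=\bigsqcup_{\lambda\vdash n}W_\lambda$ and that a length-$\le s$ type $\lambda'$ refines some length-$s$ partition, or simpler: every point of $W$ whose distinct-coordinate count is $\le s$ lies in $\Clambda$ for some length-$s$ $\lambda$). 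Conversely every pair placed in $S$ is of this form, so {\sf Remove\_Duplicates} merely discards redundant copies and the result is exactly a {\symmrep} of $W$ as in the Definition. The only point requiring real care — and the step I expect to be the main obstacle in writing the proof cleanly — is the bookkeeping in the length-$s$ sweep: one must verify that the union, over all length-$s$ partitions $\lambda$, of the {\symmrep}s returned by ${\sf Decompose}(\lambda,\cdot)$ covers precisely the low-type part of $W$ with no omissions and no spurious points, and that after {\sf Remove\_Duplicates} each surviving $\lambda$ carries the parametrization of the full $W'_\lambda$ rather than a partial one coming from a single coarser sweep. This is a finite combinatorial argument on the refinement order, but it is where the proof must be most careful.
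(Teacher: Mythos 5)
Your proof is correct and follows essentially the same route as the paper: Proposition~\ref{prop:some_properties} is the key geometric input, the length-$s$ sweep captures all orbit types of length $\le s$ via refinement, the length-$\ge s+1$ sweep gives each $W'_\lambda$ directly after {\sf Decompose} filters out the points of strictly larger type, and {\sf Remove\_Duplicates} discards redundant copies. Two small clarifications. First, when you argue that $V(\bar\f^{[\lambda]})$ is finite for $\ell=s$, the ``finite fibers'' remark is unnecessary and slightly misleading: the set equals $E_\lambda(W\cap\Clambda)$, which is the \emph{image} of a finite set, hence finite; finite fibers would only be needed to go the other way. Second, the worry you flag at the end is not in fact a gap: each call to {\sf Decompose}$(\lambda,\cdot)$ returns, for every $\lambda'\ge\lambda$ occurring, a zero-dimensional parametrization of the \emph{entire} set $W'_{\lambda'}$ (not a partial one), because the input $Z(\mathscr{R}_\lambda)$ is the full set $E_\lambda(W\cap\Clambda)$ and the decomposition is by exact type. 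So two distinct length-$s$ sweeps that both hit a given $\lambda'$ simply produce two copies of the same parametrization of $W'_{\lambda'}$, and {\sf Remove\_Duplicates} may keep either; the only point that needs an argument is the existence, for every $\lambda'$ of length $\le s$, of some length-$s$ refinement $\lambda\le\lambda'$, which is immediate since one can split parts of $\lambda'$ one unit at a time until reaching $s$ parts (possible because $s\le n$).
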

\begin{proof}
  The goal of the algorithm is to compute zero-dimensional
  representations of $W'_\lambda=E_\lambda(W \cap {\cal C}^{\rm
    strict}_\lambda)$ for all partitions $\lambda$ of $ n$ for which
  this set is not empty.

  To understand the first loop, recall first that $W$ is assumed to be
  finite. Hence this also holds for all $W \cap {\cal C}_\lambda$,
  and thus for all $E_\lambda(W \cap {\cal C}_\lambda)$. As a result,
  for $\lambda$ of length $s$,
  Proposition~\ref{prop:some_properties}(i) implies that at Step~\ref{step:2b}
  , ${\sf Isolated\_Points}(\bar \f_\lambda)$ returns a
  zero-dimensional parametrization of $G:=E_\lambda(W \cap {\cal
    C}_\lambda)$.  Then, we recall from Lemma~\ref{lemma:decompose}
  that the output of ${\sf Decompose}(\lambda, \mathscr{R}_\lambda)$ is a
  {\symmrep} of $E_\lambda^*(G)$. Note that the latter set is the
  orbit of $W \cap \, {\cal C}_\lambda$, that is, the set of all orbits
  contained in $W$ whose type $\lambda'$ satisfies $\lambda' \ge
  \lambda$.
  Taking into account all partitions $\lambda$ of length $s$, the set
  of partitions $\lambda' \ge \lambda$ covers all partitions of length
  $\ell \in\{1,\dots,s\}$, so that at the end of Step 2, we have
  zero-dimensional parametrizations of $W'_\lambda$ for all partitions
  of length $\ell \in\{1,\dots,s\}$ (with possible repetitions). Calling ${\sf
    Remove\_Duplicates}(S)$ will remove any duplicates among this list.

  The second loop deals with partitions $\lambda$ of length at least
  $s+1$.  Since we assume that $W$ is finite, $W'_\lambda$ is finite
  for any such $\lambda$.  Proposition~\ref{prop:some_properties}(ii)
  then implies that the points in $W'_\lambda$ are isolated points of
  the zero-set of $\bar \f^{[\lambda]}$ and of the $(s+1)$-minors of
  $\bar {\H}^{[\lambda]}$. As a result, $W'_\lambda$ is a subset
  of $Z(\mathscr{R}_\lambda)$, for $\mathscr{R}_\lambda$ computed in
  Step~\ref{step:3b} with all other points in $Z(\mathscr{R}_\lambda)$
  corresponding to points in $W$ with type $\lambda' > \lambda$. In
  particular, after the call to {\sf Decompose}, it suffices to keep
  the entry in the list corresponding to the partition $\lambda$, to
  obtain a description of $W'_\lambda$.
\end{proof}

\section{Cost  of the {\sf Critical\_Points\_Per\_Orbit} Algorithm} \label{sec:weightedhomotopy} 

In this section we provide a complexity analysis of our {\sf
  Critical\_Points\_Per\_Orbit} algorithm and hence also complete the
proof of Theorem~\ref{thm:one}.

At the core of the {\sf Critical\_Points\_Per\_Orbit} algorithm is a
procedure, {\sf Isolated\_Points}. Recall that on input polynomials
$\bm g$, a polynomial matrix $\H$ and an integer $k$, it returns a
zero-dimensional parametrization of the isolated points of $V(\bm g,
M_k(\H))$, where $M_k(\H)$ denotes the set of $k$-minors of $
\H$. We apply this procedure to polynomials with entries in
$\KK[\be_1,\dots,\be_r]=\KK[e_{1, 1}, \ldots, e_{1,\ell_1},e_{2, 1},
  \ldots, e_{2,\ell_2}, \dots, e_{r, 1}, \ldots, e_{r,\ell_r}]$.

Rather than using classical methods for solving these polynomial
systems, we use the {\em symbolic homotopy method for weighted domains} given in
\cite{sparse-homotopy}, 
as this algorithm is the best suited to handle
a weighted-degree structure exhibited by  such systems. Indeed, the
polynomial ring arising from an orbit parameter $\lambda$,
$\KK[\be_1,\dots,\be_r]$, is obtained through a correspondence between
the variable $e_{i,k}$ and the elementary symmetric polynomial
$\eta_{i, k}(x_{j_1}, \dots, x_{j_m})$, for certain indices
$j_1,\dots,j_m$. More precisely, for any $f$ in
$\KK[\bZ_1, \dots, \bZ_r]^{\Sc_\lambda}$, let $\bar f$ be the
polynomial in $\KK[\bm e_1, \dots, \bm e_r]$ satisfying
\begin{equation*}
  f(\bZ_1,\dots,\bZ_r) = \bar f(\bm{\eta}_1,\dots,\bm{\eta}_r),
\end{equation*}
with $\bm\eta_i=(\eta_{i,1},\dots,\eta_{i,\ell_i})$ for all $i$.
Since each $\eta_{i, k}$ has degree $k$, it is natural to assign a
weight $k$ to variable $e_{i,k}$, so that the weighted degree of $\bar
f$ equals the degree of $f$. Our vector of variable weights is then
is $\bm w= (1,\dots,\ell_1,1,\dots,\ell_2,\dots,1,\dots,\ell_r)$.


\subsection{Solving  weighted determinantal systems}  
  
In this section, we briefly review the algorithm for solving
determinantal systems over a ring of weighted polynomials.

Suppose we work with polynomials in $\KK[\bm Y]=\KK[y_1, \dots, y_m]$,
where each variable $y_i$ has weight $w_i \geq 1$ (denoted by
$\wdeg(y_i)=w_i$). The weighted degree of a monomial
$y_1^{\alpha_1}\cdots y_m^{\alpha_m}$ is then $\sum_{i=1}^m
w_i\alpha_i $, and the weighted degree of a polynomial is the maximum
of the weighted degree of its terms with non-zero coefficients. The
{\em weighted column degrees} of a polynomial matrix is the sequence of
the weighted degrees of its columns, where the weighted degree of a
column is simply the maximum of the weighted degrees of its entries.
    
Let $\EQS = (\eq_1, \dots, \eq_\tau)$ be a sequence of polynomials in
$\KK[\bm Y]$ and $\MAT = [\gmat_{i, j}] \in \KK[\bm Y]^{p \times q} $
a matrix of polynomials such that $p\leq q$ and $m = q-p+\tau+1$, and let
$V_p(\MAT,\EQS)$ denote the set of points in $\KKbar$ at which all
polynomials in $\EQS$ and all $p$-minors of $\MAT$ vanish.  In
\cite{sparse-homotopy}, a symbolic homotopy algorithm for weighted domains
is presented which constructs a symbolic homotopy from a generic start
system to the system defining $V_p(\MAT,\EQS)$ and then uses this to efficiently determine the isolated  points of $V_p(\MAT,\EQS)$.

The main theorem of \cite{sparse-homotopy}, in the special case of
weighted polynomial rings, is given in terms of a number of
parameters.  Let $(\gamma_1, \ldots, \gamma_\tau)$ be the weighted
degrees of $(\eq_1, \ldots, \eq_\tau)$, let $(\delta_1, \ldots,
\delta_q)$ be the weighted column degrees of $\MAT$, let $d$ be the
maximum of the degrees (in the usual sense) of all $\EQS,\MAT$ and set
$$\Gamma = m^2 \binom{m+d}{m} + m^4 {q \choose p}.$$ 

The following quantities are all related to the degrees of some
geometric objects present in the algorithm. We define \[ c =
\frac{\gamma_1 \cdots \gamma_\tau \, \cdot \eta_{m-\tau}(\delta_1, \dots,
  \delta_q)}{w_1\cdots w_m} ~ \text{ and } ~~e = \frac{(\gamma_1+1)
  \cdots (\gamma_\tau+1) \cdot \eta_{m-\tau}(\delta_1+1, \ldots,
  \delta_q+1)}{w_1\cdots w_m}, \] where where $\eta_{n-s}$ is the
elementary symmetric polynomial of degree $n-s$.  For a subset $\bi =
\{i_1, \ldots, i_{m-\tau}\}\subset \{1, \ldots, q\}$, we further let
$(d_{\bi,1},\dots,d_{\bi,m})$ denote the sequence obtained by sorting
$(\gamma_1, \ldots, \gamma_\tau, \delta_{i_1},\dots,
\delta_{i_{m-\tau}})$ in non-decreasing order, and we write
\begin{equation} \label{eq:kappa}
  \kappa_\bi = \max_{1\leq k \leq
    m}(d_{\bi,1} \cdots d_{\bi,k} w_{k+1} \cdots w_m)\quad \text{ and
  }\quad \kappa = \sum_{\bi = \{i_1, \ldots, i_{m-\tau}\}\subset \{1,
    \ldots, q\}} \kappa_\bi.
\end{equation}
Note that without loss of generality, in these equations, we may also
assume that the weights $w_1,\dots,w_m$ are reordered to form a non-decreasing
sequence.

\begin{theorem}\cite[Theorem
  5.3]{sparse-homotopy} \label{thm:homotopy} Let $\G$ be a matrix in
  $\KK[\Y]^{p \times q}$ and $\f = (f_1, \dots, f_\tau)$ be polynomials in
  $\KK[\Y]$ such that $p \leq q$ and $m = q-p+\tau+1$. 
  There exists a randomized algorithm which takes $\G$ and $\f$ as
  input and computes a zero-dimensional parametrization of these
  isolated solutions using
  \[\softO\Big( \big(c(e + c^5)+  \, d^2 \,
  \big(\frac{\kappa}{ w_1\cdots w_m}\big)^2\big)m^4 \Gamma\Big)\]
  operations in $\KK$. Moreover, the number of solutions in the output
  is at most $c$. 
\end{theorem}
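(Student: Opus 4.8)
The statement is quoted from \cite{sparse-homotopy}, so the plan is to reconstruct its proof: a symbolic homotopy continuation argument of geometric-resolution type, carried out over a \emph{weighted} graded polynomial ring. The first step is to eliminate the minors. A point $\y\in\KKbar^m$ lies in $V_p(\G,\f)$ precisely when $\f(\y)=\bm{0}$ and the $p$ rows of $\G(\y)$ are linearly dependent, i.e.\ there is a nonzero row vector $\bm{v}\in\KKbar^{p}$ with $\bm{v}\,\G(\y)=\bm{0}$. Normalizing $\bm{v}$ by a generic affine equation $\ell(\bm{v})=1$ exhibits $V_p(\G,\f)$ as the image, under the coordinate projection $\KKbar^{m}\times\KKbar^{p}\to\KKbar^m$, of the \emph{incidence variety} cut out by the $\tau$ equations $\f$, the $q$ equations $\bm{v}\,\G=\bm{0}$ (bilinear: linear in $\bm{v}$, of weighted degree $\delta_j$ in $\y$ in the $j$-th column), and the chart equation $\ell(\bm{v})=1$. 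Since $m=q-p+\tau+1$, this is a system of $\tau+q+1=m+p$ equations in $m+p$ unknowns, and for generic $\ell$ the projection is generically one-to-one onto $V_p(\G,\f)$, the genuine isolated points being those at which $\G$ has rank exactly $p-1$.

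Next I would build the homotopy. Fix a \emph{start system} $(\f^{\circ},\G^{\circ})$ of the same weighted-degree profile --- generic $\f^{\circ}$ of weighted degrees $\gamma_1,\dots,\gamma_\tau$ and a generic $\G^{\circ}$ with weighted column degrees $\delta_1,\dots,\delta_q$ --- chosen so that its incidence variety can be resolved directly (recursively, by univariate root-finding, following \cite{sparse-homotopy}), and deform along the segment $t(\f,\G)+(1-t)(\f^{\circ},\G^{\circ})$ inside the weighted graded ring, forming the incidence system $H(\y,\bm{v},t)$. The three degree quantities then acquire geometric meaning. The number $c=\gamma_1\cdots\gamma_\tau\,\eta_{m-\tau}(\delta_1,\dots,\delta_q)/(w_1\cdots w_m)$ is the cardinality of the start fiber $\{H(\cdot,\cdot,0)=0\}$: combine weighted B\'ezout for the $\f^{\circ}_i$ with the Giambelli--Thom--Porteous degree formula for the generic determinantal locus of $\G^{\circ}$, which has codimension $q-p+1=m-\tau$ and produces the $\eta_{m-\tau}$ factor, then divide by $w_1\cdots w_m$ because the variables live on a weighted projective space. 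The number $e$ is the same count with every degree raised by $1$ and bounds the degree of the homotopy curve $\mathcal{C}=\{H=0\}\subset\KKbar^{m+p}\times\KKbar$ after a generic linear section, the $+1$'s absorbing the extra parameter $t$. And $\kappa/(w_1\cdots w_m)$ bounds the degrees of the intermediate objects --- successive lifting fibers and generalized polar varieties --- built while resolving the start system and tracking. A Bertini-type transversality argument shows that for generic choices $\mathcal{C}$ is a reduced curve dominating the $t$-line with finite degeneracy locus $\Sigma$, and a generic shift ensures $1\notin\Sigma$, so every isolated point of $V_p(\G,\f)$ appears in $\mathcal{C}\cap\{t=1\}$.

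The algorithmic core is then a Newton--Hensel global lifting: from the explicitly computed start fiber $\mathcal{C}\cap\{t=0\}$, represent $\mathcal{C}$ near $t=0$ as a parametrization over $\KK[[t]]$, lift to $t$-precision $O(e)$, recover $\mathcal{C}$ as a curve over $\KK[t]$, specialize at $t=1$, take the squarefree part, and filter so as to retain exactly the isolated points of $V_p(\G,\f)$ (discarding points where $\G$ drops rank by more than one). Each Newton step manipulates objects of degree $O(e)$ in $t$ and $O(c)$ horizontally, and each evaluation of $H$ costs $\softO(\Gamma)$ with $\Gamma=m^2\binom{m+d}{m}+m^4\binom{q}{p}$ --- the first term for evaluating the $\tau$ polynomials through their $\binom{m+d}{m}$ monomials, the second for forming and evaluating the $\binom{q}{p}$ size-$p$ minors together with the $m\times m$ linear algebra. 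Multiplying the per-step cost by the polylogarithmic number of steps and by the $m^4$ overhead of the $(m+p)$-equation incidence system, then adding $\softO(c(e+c^5)m^4\Gamma)$ for resolving the start fiber and for the geometric-resolution linear algebra at degree $c$, plus $\softO(d^2(\kappa/(w_1\cdots w_m))^2m^4\Gamma)$ for the intermediate polar varieties, gives the stated bound. The output has at most $c$ points because degree cannot increase along a generic homotopy: the fiber over $t=1$ is a specialization of the fiber over $t=0$.

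The hard part is not the path tracking but the two inputs on which it rests. First, one must establish the exact degree bounds $c$, $e$, $\kappa$ for the incidence variety of a \emph{weighted} determinantal system, which requires an intersection computation on weighted projective space combining Thom--Porteous with weighted B\'ezout. Second, one must prove that independent random choices of $\ell$, of the start system $(\f^{\circ},\G^{\circ})$, and of the linearizing and separating forms force all the needed genericity conditions --- reducedness of $\mathcal{C}$, $1\notin\Sigma$, one-to-oneness of the projection onto $V_p(\G,\f)$, absence of embedded components introduced by the chart equation --- to hold outside a hypersurface of controlled degree, so that the algorithm succeeds with the claimed Monte Carlo probability. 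With those two ingredients in place, the cost accounting above is routine bookkeeping.
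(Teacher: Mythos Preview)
The paper does not prove this theorem at all: it is quoted verbatim as \cite[Theorem~5.3]{sparse-homotopy} and used as a black box in the complexity analysis of Section~\ref{sec:weightedhomotopy}. There is therefore no proof in the paper to compare your proposal against.

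Your reconstruction is a reasonable high-level sketch of how symbolic homotopy algorithms for determinantal systems are built --- incidence-variety linearization of the rank condition, a generic start system of matching weighted-degree profile, Newton--Hensel lifting along the deformation curve, and degree bookkeeping via weighted B\'ezout and Thom--Porteous --- and you are right to flag that the genuinely hard parts are (i) the exact weighted degree bounds $c,e,\kappa$ and (ii) the genericity/transversality analysis. But since the present paper offers no argument for the theorem, assessing whether your sketch matches \cite{sparse-homotopy} would require consulting that reference directly; within the scope of this paper, the statement is simply imported.
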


\medskip 

\noindent When there is no matrix $\MAT$, so $\tau=m$, then the
runtimes reported above remain the same with the term $\Gamma$
becoming $\Gamma = m^2\binom{m+d}{m}$. In this case, the term $\kappa$
is simply equal to $\kappa= \max_{1\leq k \leq m}(\gamma_{1} \cdots
\gamma_{k} w_{k+1} \cdots w_m)$, assuming that the degrees
$\gamma_1,\dots,\gamma_k$ are given in non-decreasing order.

\bigskip

We finish this subsection with an observation in those cases with
$m> q-p+\tau+1$.

\begin{remark} \label{rk:empty}
  Note that when $m > q-p+\tau+1$, then there are no isolated points
  in $V_p(\G, \f)$.  Indeed if we let $I \subset \KKbar[\Y]$ be the
  ideal generated by the $p$-minors of $\G$ then a result due to Eagon
  and Northcott~\cite[Section~6]{Eagon188} implies that all
  irreducible components of $V(I)$ have codimension at most $q-p+1$.
  By Krull's theorem the irreducible components of $V_p(\G, \f) = V(I
  + \langle f_1, \dots, f_\tau \rangle )$ then have codimension at
  most $q-p+1+\tau$. This implies that the irreducible components of
  $V_p(\G, \f)$ in $\KKbar^m$ have dimension at least
  $m-(q-p+\tau+1)$, which is positive when $m > q-p+\tau+1$.
\end{remark}


\subsection{The complexity of the ${\sf Isolated\_Points}$ procedure}  \label{ssec:complexity}

Estimating the runtimes for the ${\sf Isolated\_Points}$ algorithms
follows from Theorem \ref{thm:homotopy}, for the weighted domains
associated to various partitions of $n$. Thus we let $\lambda =
(n_1^{\ell_1} \, n_2^{\ell_2}\, \dots\, n_r^{\ell_r})$ be a partition
of length $\ell$, with $\ell \geq s$.

The parameters that appear in Theorem~\ref{thm:homotopy} can be
determined as follows. The weights of variables $(\e_1, \dots, \e_r)$
are $\bw = (1, \dots, \ell_1, \dots, 1, \dots, \ell_r)$. For $i=1,
\dots, s$, the weighted degree of $\bar{f}_i^{[\lambda]}$ is the same
as the degree of $f_i^{[\lambda]}$ and so  is at most $d$. 

For $j=1, \dots, \ell$, the weighted column degree of the $j$th column
of $\bar{\mat H}^{[\lambda]}$ is at most $\delta_j=d-1-\ell+j$ (note
that all entries of the Jacobian matrix of $\bm f,\phi$ have degree at
most $d-1$; then apply
Proposition~\ref{lemma:vector_symme_divided}). In particular, if $\ell
> d$, then all entries on the $j$-th column of $\bar{\mat H}^{[\lambda]}$
equal zero for $j=1,\dots,\ell-d$. Finally, in what follows, we
let $$\Gamma= n^2 {{n+d} \choose d}+ n^4{n \choose {s+1}}.$$

\paragraph{Partitions of length $s$.}
\label{subsubsec:equal}
We recall that when the length $\ell$ of the partition $\lambda$
equals $s$, we do not need to deal with a matrix $\bar{\mat
  H}^{[\lambda]}$. In this situation, one only needs to compute the
isolated points of $V(\bar{\f}^{[\lambda]})$.

Consider such a partition $\lambda = (n_1^{\ell_1} \, n_2^{\ell_2}\,
\dots\, n_r^{\ell_r})$ and the corresponding variables $(\e_1, \dots,
\e_r)$, with $\wdeg(e_{i, k}) = k$ for all $i=1,\dots,r$ and
$k=1,\dots,\ell_i$. We make the following claim: {\em if there exists
  $i$ such that $\ell_i > d$, then there is no isolated point in
  $V(\bar{\f}^{[\lambda]})$.}  Indeed, in such a case, variable
$e_{i,\ell_i}$ does not appear in $\bar{\f}^{[\lambda]}$, for weighted
degree reasons, so that the zero-set of this system is invariant with
respect to translations along the $e_{i,\ell_i}$ axis. In particular,
it admits no isolated solution.
  
Therefore we can suppose that all $\ell_i$'s are at most $d$.  In this
case, the quantities $c,e,\kappa$ used in Theorem~\ref{thm:homotopy}
become respectively
$$\mathfrak{c}_\lambda = \frac{d^s}{w_\lambda},\quad
\mathfrak{e}_\lambda = \frac{n(d+1)^s}{w_\lambda},\quad
\mathfrak{\kappa}_\lambda = d^s = w_\lambda \mathfrak{c}_\lambda,$$
with $w_\lambda = \ell_1!\cdots\ell_r!$. In this case Theorem~\ref{thm:homotopy} implies 
that $V(\bar{\f}^{[\lambda]})$ contains at most
$\mathfrak{c}_\lambda$ isolated points, and that and one can compute
all of them using
\[
\softO\left (\big(
\mathfrak{c}_\lambda(\mathfrak{e}_\lambda+\mathfrak{c}_\lambda^5)+  d^2 \mathfrak{c}_\lambda^2\big)n^4 \Gamma_\lambda
\right )  
\subset 
\softO\left ( d^2 \mathfrak{c}_\lambda(\mathfrak{e}_\lambda + \mathfrak{c}_\lambda^5) n^4\Gamma\right )  \] 
operations in $\KK$.

\paragraph{Partitions of length greater than $s$.}\label{subsubsec:greater}
For a partition $\lambda$ of length $\ell$ greater than $s$, we have
to take into account the minors of the matrix $\bar{\H}^{[\lambda]}$.
Note that the assumptions of Theorem~\ref{thm:homotopy} are satisfied:
the matrix $\bar{\H}^{[\lambda]}$ is in $\KK[\e_1, \dots, \e_r]^{(s+1)
  \times \ell}$, with $\ell \ge s+1$, and we have $s$ equations
$\bar{\f}^{[\lambda]}$ in $\KK[\e_1, \dots, \e_r]$, so the number of
variables $\ell$ does indeed satisfy $\ell = \ell-(s+1) + s+ 1$.

We claim that if $\ell > d$, then the algebraic set
$V_{s+1}(\bar{\H}^{[\lambda]}, \bar{\f}^{[\lambda]})$ does not have
any isolated point. Indeed, in this case, we pointed out above that
the columns of indices $1$ to $\ell-d$ in $\bar{\H}^{[\lambda]}$
are identically zero. After discarding these zero-columns from
$\bar{\H}^{[\lambda]}$, we obtain a matrix $\L^{[\lambda]}$ of
dimension $(s+1) \times d$ such that $V_{s+1}(\bar{\H}^{[\lambda]},
\bar{\f}^{[\lambda]}) = V_{s+1}({\L}^{[\lambda]},
\bar{\f}^{[\lambda]})$, and using Remark~\ref{rk:empty} with $p=s+1,
q=d, \tau=s$ and $m \geq \ell$ shows that this algebraic set has no
isolated points.  

Thus, let us now assume that  $\ell \leq d$.
The matrix $\bar{\H}^{[\lambda]}$ has weighted column degrees
$(\delta_1, \dots, \delta_\ell) = (d-\ell, \dots, d-1)$, whereas the
weighted degrees of all polynomials in $\bar{\f}^{[\lambda]}$ is at
most $d$.  To estimate the runtime of
${\sf Isolated\_Points}(\bar{\H}^{[\lambda]}, \bar{\f}^{[\lambda]})$,
we will need the following property.

\begin{lemma} Let $\kappa$ be defined as in~\eqref{eq:kappa} with
  $m=\ell$,
  $\tau=s$, $p=s+1$, $q = \ell$, $(\delta_1, \dots, \delta_\ell) =
  (d-1-\ell, \dots, d-1)$, and $(\gamma_1, \dots, \gamma_s) = (d,
  \dots, d)$. Then, for partitions of length $\ell$ at most $d$,
  one has 
  \[
  \kappa = d^s \eta_{\ell -  s}(d-1, \dots, d-\ell).
  \]
\end{lemma}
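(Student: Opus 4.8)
The plan is to evaluate each summand $\kappa_\bi$ of~\eqref{eq:kappa} separately and show $\kappa_\bi = d^s\prod_{j\in\bi}\delta_j$ for every subset $\bi=\{i_1<\cdots<i_{\ell-s}\}$ of $\{1,\dots,\ell\}$ of size $m-\tau=\ell-s$; summing over $\bi$ then gives $\kappa = d^s\sum_{\bi}\prod_{j\in\bi}\delta_j = d^s\,\eta_{\ell-s}(\delta_1,\dots,\delta_\ell)$, and since $\delta_j=d-1-\ell+j$ the values $\{\delta_1,\dots,\delta_\ell\}$ are exactly $\{d-1,d-2,\dots,d-\ell\}$, so this equals $d^s\,\eta_{\ell-s}(d-1,\dots,d-\ell)$ by symmetry of $\eta_{\ell-s}$. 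Fix $\bi$. By definition $(d_{\bi,1},\dots,d_{\bi,\ell})$ is the non-decreasing rearrangement of the $s$ values $\gamma_1=\cdots=\gamma_s=d$ together with $\delta_{i_1}<\cdots<\delta_{i_{\ell-s}}$; since every $\delta_j<d$ we get $d_{\bi,u}=\delta_{i_u}$ for $u\le\ell-s$ and $d_{\bi,u}=d$ for $u>\ell-s$. In particular $d_{\bi,1}\cdots d_{\bi,\ell}=d^s\prod_{j\in\bi}\delta_j$, which is precisely the term obtained at $k=\ell$ in the formula $\kappa_\bi=\max_{1\le k\le\ell}(d_{\bi,1}\cdots d_{\bi,k}\,w_{k+1}\cdots w_\ell)$. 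Hence it suffices to prove that this term realizes the maximum, i.e.\ that $d_{\bi,1}\cdots d_{\bi,k}\,w_{k+1}\cdots w_\ell\le d_{\bi,1}\cdots d_{\bi,\ell}$ for all $1\le k\le\ell-1$. Here I may and do assume $w_1\le\cdots\le w_\ell$, as permitted by the remark following Theorem~\ref{thm:homotopy}, and I use the standing hypothesis $\ell\le d$.

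The first step is the elementary observation that $w_u\le u$ for all $u$. The multiset of weights $\bw$ is the disjoint union, over the blocks $i=1,\dots,r$, of $\{1,2,\dots,\ell_i\}$, so the set of values it takes is the full integer interval $\{1,2,\dots,\max_i\ell_i\}$; therefore successive entries of the sorted sequence differ by $0$ or $1$, and since $w_1=1$ the sequence $u\mapsto w_u-u$ is non-increasing, giving $w_u-u\le w_1-1=0$. In particular each weight is at most $\ell\le d$. The second step is the claim: \emph{if $d_{\bi,1}>0$, then $w_u\le d_{\bi,u}$ for all $u\in\{1,\dots,\ell\}$.} For $u>\ell-s$ this is immediate, as $d_{\bi,u}=d\ge\ell\ge u\ge w_u$. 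For $u\le\ell-s$, $d_{\bi,u}=\delta_{i_u}=d-1-\ell+i_u$ with $i_u\ge u$; if $\ell\le d-1$ then $d-1-\ell\ge0$, so $\delta_{i_u}\ge d-1-\ell+u\ge u\ge w_u$, while if $\ell=d$ then $\delta_{i_u}=i_u-1$ and the hypothesis $d_{\bi,1}=\delta_{i_1}=i_1-1>0$ forces $i_1\ge2$, hence $i_u\ge i_1+(u-1)\ge u+1$ and $\delta_{i_u}=i_u-1\ge u\ge w_u$. Note that $d_{\bi,1}=\delta_{i_1}\ge d-\ell\ge0$ whenever $\bi\ne\emptyset$, so $d_{\bi,1}=0$ can only happen when $\ell=d$ and $1\in\bi$.

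To finish, I would split on whether $d_{\bi,1}=0$. If $d_{\bi,1}>0$ all $d_{\bi,u}$ and all $w_u$ are positive, so multiplying the inequalities $w_u\le d_{\bi,u}$ for $u=k+1,\dots,\ell$ gives $w_{k+1}\cdots w_\ell\le d_{\bi,k+1}\cdots d_{\bi,\ell}$, and multiplying through by $d_{\bi,1}\cdots d_{\bi,k}$ yields the desired inequality; thus the maximum in $\kappa_\bi$ is attained at $k=\ell$ and $\kappa_\bi=d_{\bi,1}\cdots d_{\bi,\ell}=d^s\prod_{j\in\bi}\delta_j$. If $d_{\bi,1}=0$, then for every $k\ge1$ the product $d_{\bi,1}\cdots d_{\bi,k}$ vanishes, so $\kappa_\bi=0$; but $\delta_{i_1}=0$ with $i_1\in\bi$ also forces $\prod_{j\in\bi}\delta_j=0$, so again $\kappa_\bi=d^s\prod_{j\in\bi}\delta_j$. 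Summing over all $\bi$ completes the proof. The one point requiring care is the boundary case $\ell=d$: there some $\delta_j$ equals $0$, and the pointwise bound $w_u\le d_{\bi,u}$ breaks down exactly at $u=1$ when $1\in\bi$; but this is precisely the degenerate case $d_{\bi,1}=0$ treated above, and the bookkeeping of the sorted sequences $(d_{\bi,u})$ and $(w_u)$ is otherwise routine.
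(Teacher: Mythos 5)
Your proof is correct and follows essentially the same route as the paper's: evaluate each $\kappa_\bi$, show $w_u\le d_{\bi,u}$ pointwise so the maximum in~\eqref{eq:kappa} is attained at $k=\ell$, dispose separately of the degenerate case $d_{\bi,1}=0$ (possible only when $\ell=d$), and sum to recover $d^s\,\eta_{\ell-s}$. The only difference is presentational — you make the chain $w_u\le u\le d_{\bi,u}$ fully explicit (including the boundary $\ell=d$ case), whereas the paper deduces $w_k'\le d_{\bi,k}$ more briefly from the observations that both sorted sequences start at $1$ and that $d_\bi$ increases by at least $1$ per step while $w'$ increases by at most $1$.
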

\begin{proof}
  Without loss of generality, we reorder the weights $\bw$ as $\bw' =
  (w_1', \dots, w_\ell')$ such that $w_1' \leq \cdots \leq w_\ell'$.

  Take $\bi= (i_1, \dots, i_{\ell-s}) \subset \{1, \dots, \ell\}$, and
  let $d_\bi = (d_{\bi, 1}, \dots, d_{\bi, \ell})$ be the sequence
  obtained by reordering $(d, \dots, d,\delta_{i_1}, \dots,
  \delta_{i_{\ell-s}})$ in non-decreasing order; we first compute the
  value of $\kappa_\bi$ from~\eqref{eq:kappa}. If $d_{\bi, 1}=0$
  (which can happen only if $\ell=d$), then $\kappa_\bi=0$. Otherwise,
  the sequence $d_\bi$ starts with $d_{\bi,1} \ge 1$ and increases
  until index $\ell-s$, after which it keeps the value $d$. On the
  other hand, the ordered sequence of weights never increases by more
  than $1$, so that for all $k=1, \dots, \ell$, we have $w_k' \leq
  d_{\bi, k}$. In this case,
  \[
  \kappa_\bi = \max_{1\leq k \leq \ell}(d_{\bi, 1}\cdots d_{\bi, k}
  w_{k+1} \cdots w_m) = d_{\bi, 1}\cdots d_{\bi, \ell} = d^s
  \delta_{i_1} \cdots \delta_{i_{\ell-s}};
  \]
  note that this equality also holds if $d_{\bi,1}=0$,
  since then both sides vanish.
  Since $\kappa = \sum_{\bi = \{i_1, \ldots, i_{\ell-s}\}\subset
    \{1,\ldots, q\}} \kappa_\bi$, we get 
  \begin{equation} \label{eq:kappa_sp}
    \kappa = \sum_{\bi = \{i_1, \ldots, i_{\ell-s}\}\subset \{1,
      \ldots, \ell\}}  d^s \delta_{i_1} \cdots \delta_{i_{\ell-s}} = d^s
    \eta_{\ell -  s}(d-1, \dots, d-\ell).
      \end{equation} as claimed.
\end{proof}

The procedure ${\sf Isolated\_Points}\big(\bar{\f}^{[\lambda]}, \bar
{\H}^{[\lambda]}\big)$ then uses the algorithm in
Theorem~\ref{thm:homotopy} with input $\big(\bar{\f}^{[\lambda]}, \bar
{\H}^{[\lambda]}\big)$.  
Writing as before $w_\lambda = {\ell_1!  \cdots \ell_r!}$, the
quantities used in the theorem become
\begin{align*}
  \mathfrak{c}_\lambda &=\frac{d^s \eta_{\ell - s}(d-1, \dots, d-\ell)}{w_\lambda},\\
  \mathfrak{e}_\lambda &= \frac{n(d+1)^s\eta_{\ell - s}(d, \dots, d-\ell+1)}{w_\lambda},\\
  \kappa_\lambda&=d^s\eta_{\ell - s}(d-1, \dots, d-\ell) = w_\lambda \mathfrak{c}_\lambda.
\end{align*}
This implies that running ${\sf
  Isolated\_Points}\big(\bar{\f}^{[\lambda]}, \bar
{\H}^{[\lambda]}\big)$ uses
\[
\softO\left ( \big(
\mathfrak{c}_\lambda(\mathfrak{e}_\lambda +
\mathfrak{c}_\lambda^5) + d^2\mathfrak{c}_\lambda^2 
\big) n^4\Gamma
\right )
\]  operations which is again in
\[
\softO\left (
d^2 \mathfrak{c}_\lambda(\mathfrak{e}_\lambda +
\mathfrak{c}_\lambda^5)
n^4\Gamma
\right ).
\]
As before, the number of solutions in the output is at most
$\mathfrak{c}_\lambda$.  


\subsection{Finishing the proof of Theorem~\ref{thm:one}}

We can now finish estimating the runtime of the {\sf
  Critical\_Points\_Per\_Orbit} Algorithm. For partitions of length
$s$, at Step~\ref{step:2a}, we only need to compute $\bar{\f}^{[\lambda]}$
which takes $\softO(n {n+d \choose d}{}^2)$ operations in $\KK$ as per
Lemma~\ref{lemma:prepare}. At Step~\ref{step:2b}, the procedure ${\sf
  Isolated\_Points}(\bar \f^{[\lambda]})$ takes at most $\softO\left (
d^2 \mathfrak{c}_\lambda(\mathfrak{e}_\lambda +
\mathfrak{c}_\lambda^5) n^4\Gamma \right )$ operations in $\KK$, as we
saw in Subsection~\ref{subsubsec:equal}. The output of this procedure
contains at most $\frak{c}_\lambda$ points; then, by
Lemma~\ref{lemma:decompose}, the cost of the call to {\sf Decompose}
at Step~\ref{step:2c} is $\softO(\mathfrak{c}_\lambda^2 \, n)$, which is
negligible compared to the previous costs.

For partitions of length greater than $s$, computing
$\bar{\f}^{[\lambda]}$ and $\bar {\H}^{[\lambda]}$ at Step~\ref{step:3a} takes
$\softO(n^4 {n+d \choose d}{}^2)$ operations in $\KK$, by
Lemma~\ref{lemma:prepare}. The procedure ${\sf
  Isolated\_Points}\big(\bar{\f}^{[\lambda]}, \bar
{\H}^{[\lambda]}\big)$ at Step~\ref{step:3b} requires at most $\softO\left (
d^2 \mathfrak{c}_\lambda(\mathfrak{e}_\lambda +
\mathfrak{c}_\lambda^5) n^4\Gamma \right )$ operations in $\KK$, as we
saw in Subsection~\ref{subsubsec:greater}. Again, since the
number of solutions in the output is at most $\mathfrak{c}_\lambda$,
the cost of {\sf Decompose} at Step~\ref{step:3c} is still
$\softO(\mathfrak{c}_\lambda^2 \, n)$ which, as before, is negligible
in comparison to the other costs. To complete our analysis, we need the
following lemma.
   \begin{lemma}\label{isolatedbound}
     With all notation being as above, the following holds
     \[
     \sum_{\lambda \vdash n, \ell_\lambda \geq s}\mathfrak{c}_\lambda
     \leq \mathfrak{c} \quad \text{ and }\quad \sum_{\lambda \vdash n,
       \ell_\lambda \geq s}\mathfrak{e}_\lambda \leq \mathfrak{e},
     \] where $\mathfrak{c} = d^s \, {{n+d-1} \choose n}$ and
     $\mathfrak{e} = n \, (d+1)^s \, {{n+d} \choose n}$. 
   \end{lemma}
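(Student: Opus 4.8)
The statement is purely combinatorial: we must bound two sums over partitions of $n$ of length at least $s$. Recall that for a partition $\lambda = (n_1^{\ell_1} \cdots n_r^{\ell_r})$ of length $\ell = \ell_1 + \cdots + \ell_r \ge s$, we have (by the analysis of the two cases, length $=s$ and length $>s$, carried out in Subsections~\ref{subsubsec:equal} and~\ref{subsubsec:greater})
\[
  \mathfrak{c}_\lambda = \frac{d^s\,\eta_{\ell-s}(d-1,\dots,d-\ell)}{w_\lambda},
  \qquad
  \mathfrak{e}_\lambda = \frac{n\,(d+1)^s\,\eta_{\ell-s}(d,\dots,d-\ell+1)}{w_\lambda},
\]
with $w_\lambda = \ell_1!\cdots\ell_r!$, and with the convention that $\mathfrak c_\lambda=\mathfrak e_\lambda=0$ whenever some $\ell_i > d$ (so effectively the sum ranges only over $\lambda$ with all $\ell_i \le d$ and $\ell \le d$; for length exactly $s$ the factor $\eta_{\ell-s}(\cdots) = \eta_0 = 1$ matches the formula $\mathfrak c_\lambda = d^s/w_\lambda$). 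Since $\mathfrak c_\lambda$ and $\mathfrak e_\lambda$ differ only by the replacement $d \leftarrow d+1$ (and a harmless extra factor $n$), it suffices to prove the first inequality $\sum_\lambda \mathfrak c_\lambda \le d^s\binom{n+d-1}{n}$ with $d$ kept as a free parameter; applying it with $d$ replaced by $d+1$ and multiplying by $n$ then yields $\sum_\lambda \mathfrak e_\lambda \le n(d+1)^s\binom{n+d}{n}$, which is exactly $\mathfrak e$. So the whole lemma reduces to a single inequality.

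**The single inequality.** Since $d^s$ factors out of every $\mathfrak c_\lambda$, what we must show is
\[
  \sum_{\lambda \vdash n,\ \ell(\lambda)\ge s} \frac{\eta_{\ell(\lambda)-s}(d-1,\dots,d-\ell(\lambda))}{w_\lambda}
  \ \le\ \binom{n+d-1}{n}.
\]
Here $\eta_{\ell-s}(d-1,\dots,d-\ell)$ is the elementary symmetric function of degree $\ell-s$ in the $\ell-1$ values $d-1,\dots,d-\ell$ (note $d-1$ down to $d-\ell$ is $\ell$ values, but the top term $\eta$ uses $\ell-1$ of them after the degree-$s$ truncation — one should double-check the exact index range against the Lemma just above, which writes $\eta_{\ell-s}(d-1,\dots,d-\ell)$, i.e.\ $\ell$ arguments). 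The plan is to interpret the left-hand side as a weighted count. Recall $\binom{n+d-1}{n}$ counts monomials of degree $\le d-1$ in $n$ variables, equivalently the number of weak compositions, equivalently $\#\{(\alpha_1,\dots,\alpha_n)\in\mathbb{Z}_{\ge0}^n : \sum \alpha_i \le d-1\}$; it also equals $h_n(1^d)$ or a coefficient of $1/(1-t)^{?}$. The cleanest route is probably a generating-function identity: group partitions of $n$ by their associated multiset $\{(n_1,\ell_1),\dots\}$, and use the exponential-formula / symmetric-function machinery. Specifically, a partition $\lambda$ of $n$ of length $\ell$ with blocks recorded as ``$\ell_i$ parts each equal to $n_i$'' contributes $1/w_\lambda = \prod 1/\ell_i!$, which is precisely the weight appearing when one expands $\exp\bigl(\sum_{m\ge1} x_m/1\bigr)$-type products. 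So I would write
\[
  \sum_{\lambda} \frac{z^{|\lambda|}\,u^{\ell(\lambda)}}{w_\lambda}\,\eta_{\ell-s}(\cdots)
\]
as a coefficient extraction from a product over block-sizes $m\ge1$ of factors like $\exp(u\,z^m)$, suitably modified to carry the $\eta$-weights and the $u^s$-truncation, then compare with the generating function $\sum_n \binom{n+d-1}{n} z^n = 1/(1-z)^d$.

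**Where the difficulty lies.** The genuine obstacle is handling the factor $\eta_{\ell-s}(d-1,\dots,d-\ell)$, which depends on $\ell = \ell(\lambda)$ in a way that does not factor over the blocks of $\lambda$ — it is a ``global'' quantity, not a product of local contributions, so the naive Euler-product/exponential-formula argument does not apply directly. I see two possible fixes. (a) Use the identity $\eta_{\ell-s}(d-1,\dots,d-\ell) = \sum_{S} \prod_{i\in S}(d-i)$ over $(\ell-s)$-subsets $S$ of $\{1,\dots,\ell\}$, and interpret choosing such an $S$ as distributing the $\ell$ parts of $\lambda$ into ``selected'' ($\ell-s$ of them) and ``unselected'' ($s$ of them) — each selected part of index $i$ carrying weight $(d-i)$ — and then recognize the result as counting lattice points / monomials. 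This is essentially the same bookkeeping that underlies the weighted B\'ezout-type bound $\prod\gamma_i \cdot \eta_{m-\tau}(\delta_j)/(w_1\cdots w_m)$ counting points on a weighted-homogeneous complete intersection, so the inequality should come from a geometric degree comparison: the union over all $\lambda$ of the relevant varieties sits inside something of degree $\binom{n+d-1}{n}$. (b) Bound crudely: $\eta_{\ell-s}(d-1,\dots,d-\ell) \le \eta_{\ell-s}(d,\dots,d) = \binom{\ell-1}{\ell-s}d^{\ell-s}$ (or $\binom{\ell}{\ell-s}$ depending on the argument count) and then sum the simpler expression $\sum_\lambda \binom{\ell-1}{\ell-s}d^{\ell-s}/w_\lambda$ against $\binom{n+d-1}{n}$; this loses a bit but may still land under the target since $\binom{n+d-1}{n}$ is quite generous. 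I would try route (b) first for a quick proof, falling back to the exact generating-function computation of route (a) if the crude bound overshoots. In either case the remaining work is a routine (if fiddly) manipulation of binomial coefficients and the identity $\sum_{\lambda\vdash n} z^{n}/w_\lambda(\text{stuff}) = [\text{product formula}]$, which I will not grind through here.
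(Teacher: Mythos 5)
Your reduction of the second inequality to the first (replace $d$ by $d+1$ and multiply by $n$) is valid and arguably slicker than the paper, which simply states that the second inequality is proved ``similarly.'' But the core inequality
\[
\sum_{\lambda\vdash n,\ \ell_\lambda\ge s}\frac{\eta_{\ell_\lambda-s}(d-1,\dots,d-\ell_\lambda)}{w_\lambda}\ \le\ \binom{n+d-1}{n}
\]
is not actually established in your write-up, and both of your proposed routes fall short. Route (b), the crude bound $\eta_{\ell-s}(d-1,\dots,d-\ell)\le\binom{\ell}{\ell-s}d^{\ell-s}$, overshoots the target: already for $n=3$, $s=1$, $d=3$ the three contributions are $1$ from $(3^1)$, $6$ from $(1^1\,2^1)$, and $27/6=4.5$ from $(1^3)$, totaling $11.5>10=\binom{5}{3}$, so the inequality you would need to finish route (b) is false. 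Route (a) is a plan rather than a proof; you correctly identify the obstacle -- the factor $\eta_{\ell-s}(d-1,\dots,d-\ell)$ depends on the total length $\ell$ and does not factor over the blocks of $\lambda$, so the exponential-formula product does not apply directly -- but you do not resolve it, and explicitly defer the computation.

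The two ingredients you are missing are the following. First, the bound $\eta_{\ell-s}(d-1,\dots,d-\ell)\le d(d-1)\cdots(d-\ell+1)$, obtained by observing that the left side is one of the (nonnegative, since $\ell\le d$) coefficients of the polynomial $f(t)=(t+d-1)\cdots(t+d-\ell)$ and their sum is $f(1)=d(d-1)\cdots(d-\ell+1)$. Dividing by $\ell_\lambda!$ yields $\eta_{\ell_\lambda-s}(d-1,\dots,d-\ell_\lambda)/\ell_\lambda!\le\binom{d}{\ell_\lambda}$, a quantity depending on $\lambda$ \emph{only} through its length; this is exactly the decoupling your route (a) needed and is strictly sharper than your route-(b) bound after the division by $w_\lambda=\ell_\lambda!/h(\lambda)$. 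Second, the polynomial-coefficient identity of Fahssi~\cite{Fah12}, which gives $\sum_{\lambda\vdash n} h(\lambda)\binom{d}{\ell_\lambda}=[t^n](1+t+\cdots+t^n)^d=\binom{n+d-1}{n}$, where $h(\lambda)=\ell_\lambda!/(\ell_1!\cdots\ell_r!)$; this sums the decoupled bound to exactly the claimed value. Without these two steps (or a replacement for them) the lemma is not proved.
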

   \begin{proof} The proof relies on  the
     combinatorics of integer partitions and properties of elementary
     symmetric functions. Details are given in Appendix~\ref{app:bound}.
   \end{proof}

   As a result, the total cost incurred by our calls to ${\sf
     Isolated\_Points}$ and ${\sf Decompose}$ is 
   \[
   \softO\left (
     \mathfrak{c}(\mathfrak{e}+\mathfrak{c}^5)n^9d^2 \left(
       \binom{n+d}{d} + {n \choose {s+1}} \right ) 
   \right ).
   \]
   Since $\binom{n+d}{d} \le (n+1) \binom{n+d-1}{d}$, we will simplify
   this further, by noticing that for $d\ge 2$ we have $\mathfrak{e}=  n \, (d+1)^s \,
   {{n+d} \choose n} \le  n(n+1) d^{5s}  \, {{n+d-1} \choose n}^5 =
   n(n+1) \mathfrak{c}^5$  so this is 
   \[
   \softO\left (
     \mathfrak{c}^6n^{11}d^2 \left( \binom{n+d}{d} + {n \choose {s+1}}
     \right ) 
   \right ).
    \]
    For the remaining operations, the total cost of ${\sf Prepare\_F}$
    and ${\sf Prepare\_F\_H}$ is 
    $$
    n^4 \sum_{\lambda \vdash n, \ell_\lambda \geq s} \binom{n+d}{d}^2.
    $$
    Since $\binom{n+d}{d} \le (n+1) \binom{n+d-1}{d}$, the binomial
    term in the sum is in $O(n^2 \mathfrak{c}^2)$, so the total is
    $O(n^5 \mathfrak{c}^3)$, and can be neglected. Similarly, the cost
    of ${\sf Remove\_Duplicates}$ is negligible. Therefore, the total
    complexity of {\sf Critical\_Points\_Per\_\-Orbit} is then in  
    \[
    \softO\left (
      n^{11}d^{6s+2}  \binom{n+d}{d}^6\left( \binom{n+d}{d} + {n
          \choose {s+1}} \right ) 
    \right )
    \subset
    \left (d^s \binom{n+d}{d}  {n \choose {s+1}} \right)^{O(1)}.
    \]
    Finally, the total number of solutions reported by our algorithm
    is at most $\sum_{\lambda \vdash n, \ell_\lambda \geq
      s}\mathfrak{c}_\lambda$, which itself is at most $
    \mathfrak{c}$.

\section{Experimental results}\label{sec:experiments}

In this section, we report on an implementation and set of
experimental runs supporting the results in this paper. We compare our
${\sf Critical\_Points\_Per\_Orbit}$ algorithm from Section
\ref{sec:CPperO} with a naive algorithm which computes a
zero-dimensional parametrization of $V(I)$, where $I$ is the ideal
generated by $\f$ and the $(s+1)$-minors of $\jac(\f, \phi)$. Since no
implementation of the weighted sparse determinantal homotopy algorithm
is available at the moment, both algorithms use Gr\"obner bases
computations to solve polynomial systems. Furthermore, using
  Gr\"obner bases computations is sufficient to see the advantage of
  our algorithm when the symmetric structure is exploited in our
  algorihm. 

Our experiments are run using the Maple computer algebra system
running on a computer with 16 GB RAM; the Gr\"obner basis computation
in Maple uses the implementation of the $F_4$ and FGLM algorithms from
the FGb package~\cite{FGb}.  The symmetric polynomials $\f$ and $\phi$ are
chosen uniformly at random in $\KK[x_1, \dots, x_n]$, with $\KK = {\rm
  GF}(65521)$, and have the same degree $n$ as the number of
variables, that is, $\deg(f_1) = \cdots = \deg(f_s) = \deg(\phi) =
n$; the number  $s$ of equations $\f$ ranges from $2$ to $n-1$.

Our experimental results support the theoretical advantage gained by
exploiting the symmetric structure of the input polynomials. In
Table~\ref{table:2}, we first report the number of points, denoted by
$D$, that we compute using our algorithm; that is, $D$ is the sum of
the degrees $\deg(\scrR_\lambda)$ that we obtain for all partitions
$\lambda$ of length at least $s$.  The next column is $\big \lceil
\sum_{\ell_\lambda \geq s} \mathfrak{c}_\lambda \big \rceil $, which
is an upper bound on $D$ (here, $\mathfrak{c}_\lambda$ is as
in Subsection~\ref{ssec:complexity}); as we can see, this bound is quite sharp in
general. We next give the upper bound $\mathfrak{c}$
from~\eqref{intro:two}, which we proved in
Lemma~\ref{isolatedbound}. While this bound is sufficient to prove
asymptotic results (for fixed input degree, for instance, see the
discussion in the introduction), we see that it is far from
sharp. 

Finally, we give the number of points $\deg(I)$ computed by the naive
algorithm, together with the upper bound $\tilde{\mathfrak{c}}$
from~\eqref{eq:naive}; in some cases, we did not complete computations
with the naive algorithm, so $\deg(I)$ was unavailable. We see that in 
all cases, the output of our algorithm is significantly smaller than the 
one from the direct approach.

\begin{table}[H]
\begin{center}
    \begin{tabular}{|c|c||c|c|c||c|c|}
  \hline
$n$ &$ s$ & $D$  & $\big \lceil \sum_{\ell_\lambda \geq s} \mathfrak{c}_\lambda \big \rceil  $& $\mathfrak{c}$ & $\deg(I)$ & $\tilde{\mathfrak{c}}$ \\ 
\hline
4 & 2 & 79 & 80 & 560 & 856 & 864  \\
\hline
4 & 3 & 47 & 48 & 2240 & 744 & 768 \\
\hline
5 & 2 & 425 & 432 & 3150 & 15575 & 16000 \\
\hline
5 & 3 & 357 & 370 & 15750 & 18760 & 20000 \\
\hline
5 & 4 & 143 & 157 & 78750 & 11160&  12500 \\
\hline
6 & 2 & 2222  & 2227 & 16632 & - & 337500 \\
\hline
6 & 3 & 2439  & 2453 & 99792 & - & 540000 \\
\hline
6 & 4 & 1482 &  1503 & 598752 & -&  486000 \\
\hline
6 & 5 & 470 & 486 & 3592512 & - & 233280 \\
  \hline
\end{tabular} 
\caption{Degrees and bounds}
\label{table:2}
\end{center}
\end{table}

In Table~\ref{table:1}, we report on our timings in a detailed
fashion. Here, we give the time needed to compute the zero-dimensional
representations $\deg(\scrR_\lambda)$ obtained by our algorithm,
together with their degrees; Time({\sf total}) denotes the total time
spent in our algorithm. On the other hand, Time({\sf naive}) is the
time to compute a zero-dimensional parametrization for the algebraic
set $V(I)$ using the naive algorithm. Experiments are stopped once the
computation has gone past $24$ hours, with the corresponding time
marked with a dash.

In our experiments, the output $\scrR_\lambda$ was always empty for
partitions of length less than $s$. Indeed, for any partition
$\lambda$ of length at most $s-1$, $Z(\scrR_\lambda) =
V(\bar{f}_1^{[\lambda]}, \dots, \bar{f}_s^{[\lambda]})$, where the
$\bar{f}_i^{[\lambda]}$ are $s$ polynomials in less than $s$ variables
derived from the input $\f$. Since the polynomials $\f$ are chosen at
random, the evaluated block symmetric polynomials $f_1^{[\lambda]},
\dots, f_s^{[\lambda]}$ are generic.
\textcolor{black}{ Using~\cite[Proposition 2.1.(ii)]{sparse-homotopy} or
modifiying slightly the proof of~\cite[Proposition 4.5]{sparse-homotopy}}, 
we indeed expect $Z(\scrR_\lambda)$ to be empty for such  partitions $\lambda$ 
of length less than $s$. 
However, we point out 
that this output can be non-trivial in the general, non-generic case.

\begin{table}[htp]
\small
  \begin{center}
 \begin{tabular}{|c|c||c|c|c|c|c||c|c|}
  \hline
$n$ &$ s$ & ~~ Partition($\lambda$)~~ & Time($\scrR_\lambda$)  & deg($\scrR_\lambda$)  & $ \lceil \mathfrak{c}_\lambda \rceil$
   & Time({\sf total}) & Time({\sf naive}) & $\deg(I)$
   \\ 
 \hline
4 & 2   &\begin{tabular}{lr} 
$~\lambda = (1^4)$  \\
$~\lambda = (1^2 \, 2^1 )$  \\
$~\lambda = (2^2)$  \\
$~\lambda = (1^1 3^1)$ 
 \end{tabular} & \begin{tabular}{lr} 
1.524s   \\ 0.684s   \\ 0.200s  \\  0.380s 
 \end{tabular}  & \begin{tabular}{lr} 
$7$  \\ $48$ \\ $8$  \\ $16$
 \end{tabular}  & \begin{tabular}{lr} 
$8$  \\ $48$ \\ $8$  \\ $16$
 \end{tabular} & 3.136s
& 0.905s~~ &  \begin{tabular}{lr} 
$856$
 \end{tabular}  
 \\ \hline
4 &  3 &   \begin{tabular}{lr} 
$\lambda = (1^4)$  \\
$\lambda = (1^2 \, 2^1 )$
 \end{tabular} &  \begin{tabular}{lr} 
2.497s  \\ 0.772s  
 \end{tabular} & \begin{tabular}{lr} 
$15$  \\ $32$
 \end{tabular}& \begin{tabular}{lr} 
$16$  \\ $32$
 \end{tabular} &  4.468s  & 0.577s~~
                                      &  \begin{tabular}{lr}
                                           $744$  \end{tabular}  
\\ \hline 
5 & 2 &  \begin{tabular}{lr} 
$\lambda = (1^5)$ \\
$\lambda = (1^3 \, 2^1 )$ \\
$\lambda = (1^2 \, 3)$ \\
$\lambda = (1^1 \, 2^2)$ \\
$\lambda = (1^1 \, 4^1)$\\
$\lambda = (2^1 \, 3^1)$
 \end{tabular} & 
\begin{tabular}{lr} 
9.236s \\ 6.832s \\ 2.128s \\ 2.816s \\ 0.316s\\ 0.392s
 \end{tabular} & 
 \begin{tabular}{lr} 
$9$ \\ $142$\\ 112 \\  112 \\ 25 \\ 25
\end{tabular}& 
 \begin{tabular}{lr} 
$11$ \\ $146$\\ 113 \\  113 \\ 25 \\ 25
\end{tabular} & 34.944s    &
2143.144s
              & 15575 
\\ \hline 
5 & 3 &  \begin{tabular}{lr} 
$\lambda = (1^5)$ \\
$\lambda = (1^3 \, 2^1 )$ \\
$\lambda = (1^2 \, 3)$ \\
$\lambda = (1^1 \, 2^2)$
 \end{tabular}
& \begin{tabular}{lr} 
18.829s \\ 18.120s \\ 4.607s \\ 5.316s
 \end{tabular}
 & \begin{tabular}{lr} 
31  \\ 202 \\ 62 \\ 62
 \end{tabular}& \begin{tabular}{lr} 
37  \\ 209 \\ 63 \\ 63
 \end{tabular} & 48.019s  & 3423.660s
              & 18760 
\\ \hline 
5 & 4 &  \begin{tabular}{lr} 
$\lambda = (1^5)$ \\
$\lambda = (1^3 \, 2^1 )$ 
 \end{tabular} & \begin{tabular}{lr} 
 17.080s \\ 12.024s  
 \end{tabular}  &
\begin{tabular}{lr} 
44 \\ 99
 \end{tabular} &
\begin{tabular}{lr} 
53 \\ 105
 \end{tabular} & 37.372s  & 969.396s~ & 11160                  
\\ \hline 
6 & 2 &  \begin{tabular}{lr}$\lambda = (1^6)$\\
$\lambda = (1^4 \, 2^1 )$ \\
$\lambda = (1^3 \, 3)$ \\
$\lambda = (1^2 \, 2^2)$ \\
$\lambda = (2^3)$\\
$\lambda = (1^2 \, 4^1)$ \\
$\lambda = (1^1 \, 2^1 \, 3^1)$ \\
$\lambda = (1^1 \, 5^1)$\\
$\lambda = (2^1\, 4^1)$\\
$\lambda = (3^2)$
 \end{tabular}  & 
\begin{tabular}{lr}
44.979s \\ 94.240s \\ 110.615s \\  413.351s \\ 7.241s\\
15.208s \\  92.589s \\ 0.756s\\ 1.072s\\ 0.956s
 \end{tabular}  & 
\begin{tabular}{lr}13 \\  334 \\  426 \\ 639 \\ 72 \\  216\\ 432 \\
  36 \\  36 \\ 18
 \end{tabular} &  \begin{tabular}{lr}14 \\  338 \\  426 \\ 639 \\ 72 \\  216\\ 432 \\
  36 \\  36 \\ 18
 \end{tabular} &  861.888s & - & -~~~
\\ \hline 
6 & 3 &  \begin{tabular}{lr}$\lambda = (1^6)$ \\
$\lambda = (1^4 \, 2^1 )$\\
$\lambda = (1^3 \, 3)$ \\
$\lambda = (1^2 \, 2^2)$ \\
$\lambda = (2^3)$\\
$\lambda = (1^2 \, 4^1)$ \\
$\lambda = (1^1 \, 2^1 \, 3^1)$
 \end{tabular}&
 \begin{tabular}{lr}
 92.881s \\  773.924s \\ 114.064s \\ 495.432s \\
 7.356s\\ 9.236s \\  17.908s
 \end{tabular} & \begin{tabular}{lr}
 63 \\ 756\\  504 \\ 756 \\  36 \\ 108 \\  216
 \end{tabular} & \begin{tabular}{lr}
 68 \\ 765\\  504 \\ 756 \\  36 \\ 108 \\  216
 \end{tabular} & 1658.071s & - & -~~~ 
\\ \hline 
6 & 4 &  \begin{tabular}{lr}$\lambda = (1^6)$ \\
$\lambda = (1^4 \, 2^1 )$ \\
$\lambda = (1^3 \, 3)$ \\
$\lambda = (1^2 \, 2^2)$
 \end{tabular} & 
\begin{tabular}{lr}
98.312s \\ 591.78s \\ 26.196s \\  46.420s
 \end{tabular} & 
\begin{tabular}{lr}
 142 \\ 800 \\ 216 \\  324
 \end{tabular}&\begin{tabular}{lr}
 153 \\ 810 \\ 216 \\  324
 \end{tabular}  & 842.256s  & - & -~~~ 
\\ \hline 
6 & 5 &  \begin{tabular}{lr}$\lambda = (1^6)$ \\
$\lambda = (1^4\, 2^1)$ \end{tabular}&
\begin{tabular}{lr}154.808s \\
121.768s \end{tabular}&
\begin{tabular}{lr}150 \\ 320\end{tabular}& \begin{tabular}{lr}162 \\
                                              324\end{tabular} & 251.752s 
          & - & -~~~  
\\ \hline 
\end{tabular} 
    \end{center}
\caption{Algorithm Timings}
\label{table:1}
\end{table}

\section{Conclusion and topics for future research}

In this paper we have provided a new algorithm for efficiently
describing the critical point set of a function $\phi$ a variety
$V(\f)$ with $\phi$ and the defining functions of the variety all
symmetric. The algorithm takes advantage of the symmetries and lower
bounds for describing the generators of the set of critical points and
as a result is more efficient than previous approaches.

When $\f=(f_1, \ldots, f_s)\subset {\bf R}[x_1, \ldots, x_n]$, with
${\bf R}$ is a real field, then computing the critical points of
polynomial maps restricted to $V(\f)$ finds numerous applications in
computational real algebraic geometry. In particular such computations
provide an effective Morse-theoretic approach to many problems such as
real root finding, quantifier elimination or answering connectivity
queries (see \cite{Basu2006}).  We view the complexity estimates in
this paper as a possible first step towards better algorithms for
studying real algebraic sets defined by ${\Sc_n}$-invariant
polynomials.

For instance, let $d$ be the maximum degree of the entries in
$\f=(f_1,\dots,f_s)$ and assume that $\f$ generates an
$(n-s)$-equidimensional ideal whose associated algebraic set is
smooth.  Then under these assumptions, we observe that the set
$W(\phi_u, V(\f))$ with
$$\phi_u: (x_1, \ldots, x_n) \to (x_1-u)^2+\cdots+(x_n-u)^2$$ and $u
\in {\bf R}$, 
has a non-empty intersection with all connected components of
$V(\f)\cap {\bf   R}^n$. Hence, when $W(\phi_u, \f)$ is finite for a
generic choice of $u$, then one can use our algorithm to decide
whenever $V(\f)\cap {\bf R}^n$ is empty. This is done in time
polynomial in $d^s, \binom{n+d}{d}, \binom{n}{s+1}$.  

In such cases, for $d,s$ fixed, we end up with a runtime which is
polynomial in $n$ as in \cite{Timofte2003, Riener2012,
  Riener2016}. These latter references are restricted to situations
when $d < n$ is fixed. If now, one takes families of systems where
$d=n$ and $s$ is fixed, we obtain a runtime which is polynomial in
$2^n$. This is an exponential speed-up with the best previous possible
alternatives which run in time $2^{O(n\log(n))}$ as in for example
\cite[Chap.  13]{Basu2006} (but note that these algorithms are
designed for general real algebraic sets).

Obtaining an algorithm to decide whether $V(\f)\cap {\bf R}^n$ is
empty in time polynomial in $d^s, \binom{n+d}{d}, \binom{n}{s+1}$,
without assuming that $W(\phi_u, \f)$ is finite for a generic
$u\in {\bf R}$, is still an open problem.

\paragraph*{Acknowledgements.} G. Labahn is supported by the Natural Sciences
and Engineering Research Council of Canada (NSERC), grant number
RGPIN-2020-04276. \'E. Schost is supported by an NSERC Discovery Grant. T.X. Vu
is supported by a labex CalsimLab fellowship/scholarship. The labex CalsimLab,
reference ANR-11-LABX-0037-01, is funded by the program ``Investissements
d'avenir'' of the Agence Nationale de la Recherche, reference
ANR-11-IDEX-0004-02. M. Safey El Din and T.X. Vu are supported by the ANR grants
ANR-18-CE33-0011 \textsc{Sesame}, ANR-19-CE40-0018 \textsc{De Rerum Natura} and
ANR-19-CE48-0015 \textsc{ECARP}, the PGMO grant \textsc{CAMiSAdo} and the
European Union's Horizon 2020 research and innovation programme under the Marie
Sklodowska-Curie grant agreement N. 813211 (\textsc{POEMA}).


\appendix

\section{Proof of Proposition \ref{lemma:vector_symme_divided}}\label{sec:proof}

The proof of Proposition \ref{lemma:vector_symme_divided} will be done
in stages. We start with some rather straightforward lemmas.

\begin{lemma}\label{lemma:equivariant}
  Consider an $\Sc_\lambda$-equivariant sequence $\bm{q} =
  (q_1,\dots,q_\ell)$ in $\KK[\bZ_1,\dots,\bZ_r]$. Then, for any $I
  \subset \{1,\dots,\ell\}$ and any $\sigma$ in $\Sc_\lambda$,
  we have $\sigma(q_{I})= q_{\sigma(I)}$.
\end{lemma}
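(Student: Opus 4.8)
The plan is to argue by induction on the cardinality $k=|I|$, mirroring the recursive definition~\eqref{recur_dividediff} of the divided differences. Throughout, I regard each $q_I$ as an element of the fraction field $\KK(\bZ_1,\dots,\bZ_r)$ (so that the quotients in~\eqref{recur_dividediff} always make sense), and I view $\sigma\in\Sc_\lambda\subset\Sc_\ell$ as the $\KK$-algebra automorphism of that field determined by $\sigma(z_j)=z_{\sigma(j)}$ for $j=1,\dots,\ell$; in particular $\sigma$ commutes with sums, differences and quotients. For the base case $k=1$, write $I=\{i\}$; then $q_I=q_i$, so $\sigma(q_I)=\sigma(q_i)=q_{\sigma(i)}=q_{\{\sigma(i)\}}=q_{\sigma(I)}$ directly from the $\Sc_\lambda$-equivariance of $\bm q$.

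For the inductive step, fix $I=\{i_1,\dots,i_k\}$ with $k\ge 2$ and assume the claim holds for all subsets of $\{1,\dots,\ell\}$ of size $k-1$. Choose two distinct indices $i_r,i_q\in I$ and use~\eqref{recur_dividediff} to write
\[
q_I=\frac{q_{I\setminus\{i_r\}}-q_{I\setminus\{i_q\}}}{z_{i_r}-z_{i_q}}.
\]
Applying $\sigma$ and using that it is a field automorphism fixing $\KK$, together with $\sigma(z_j)=z_{\sigma(j)}$ and the inductive hypothesis applied to the two $(k-1)$-element sets $I\setminus\{i_r\}$ and $I\setminus\{i_q\}$, gives
\[
\sigma(q_I)=\frac{q_{\sigma(I\setminus\{i_r\})}-q_{\sigma(I\setminus\{i_q\})}}{z_{\sigma(i_r)}-z_{\sigma(i_q)}}
=\frac{q_{\sigma(I)\setminus\{\sigma(i_r)\}}-q_{\sigma(I)\setminus\{\sigma(i_q)\}}}{z_{\sigma(i_r)}-z_{\sigma(i_q)}},
\]
where the last equality uses that $\sigma$ is a bijection, so $\sigma(I\setminus\{i\})=\sigma(I)\setminus\{\sigma(i)\}$ and $\sigma(i_r)\ne\sigma(i_q)$. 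The right-hand side is exactly the expression produced by~\eqref{recur_dividediff} for $q_{\sigma(I)}$ relative to the pair of indices $\sigma(i_r),\sigma(i_q)\in\sigma(I)$; since the divided difference $q_{\sigma(I)}$ is independent of the chosen pair (as recalled just after~\eqref{recur_dividediff}), we conclude $\sigma(q_I)=q_{\sigma(I)}$, completing the induction.

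I do not expect a genuine obstacle here: the argument is a bookkeeping induction whose only delicate points are (a) making sure $\sigma$ is treated as acting on the right objects --- namely as a field automorphism of $\KK(\bZ_1,\dots,\bZ_r)$ with $\sigma(z_j)=z_{\sigma(j)}$, so that it passes through the quotient in~\eqref{recur_dividediff} --- and (b) invoking the already-established well-definedness of divided differences to identify $\sigma(q_I)$ with $q_{\sigma(I)}$ regardless of which pair of indices is used. If one prefers to stay inside the polynomial ring, one can instead clear the denominator and prove $(z_{\sigma(i_r)}-z_{\sigma(i_q)})\,\sigma(q_I)=q_{\sigma(I)\setminus\{\sigma(i_r)\}}-q_{\sigma(I)\setminus\{\sigma(i_q)\}}$, then use that $z_{\sigma(i_r)}-z_{\sigma(i_q)}$ is a non-zero-divisor.
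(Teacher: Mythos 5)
Your proof is correct and takes the same approach the paper indicates — the paper's entire proof is the phrase ``by induction on the size of $I$,'' and your argument is exactly that induction, with the base case handled by equivariance, the inductive step handled by pushing $\sigma$ through the recursion~\eqref{recur_dividediff}, and the conclusion identified via the independence of the divided difference from the choice of pair. Your observation that one should work in the fraction field (since Lemma~\ref{lemma:equivariant} assumes only equivariance, not the divisibility hypothesis) is a correct and welcome precision that the paper glosses over.
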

\begin{proof}
  By induction on the size of $I$.
\end{proof}

\begin{lemma} \label{smallsum}
  Consider a sequence $\bm{q} = (q_1,\dots,q_\ell)$ in
  $\KK[\bZ_1,\dots,\bZ_r]$, and suppose that
  \begin{itemize}
  \item[(i)] $z_i - z_j$ divides $q_i - q_j$ for $1 \leq i < j   \leq \ell$,
  \item[(ii)]  $\bq$ is $\Sc_\lambda$-equivariant.  
  \end{itemize}
  Then, for $k$ in $\{1,\dots,r\}$ and $s$ in $\{1,\dots,\ell_k\}$,
  the polynomial $\sum_{i=\tau_k+1}^{\tau_k+s} \, q_{\{i, \tau_k+s+1,
      \dots, \ell\}}$ is invariant under any permutation of
  $\{z_{\tau_k+1},\dots,z_{\tau_k+s}\}$.
\end{lemma}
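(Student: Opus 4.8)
The plan is to show directly that $\sigma(S_{k,s}) = S_{k,s}$ for a generating set of transpositions $\sigma$, where $S_{k,s} := \sum_{i=\tau_k+1}^{\tau_k+s} q_{\{i,\tau_k+s+1,\dots,\ell\}}$. Since the symmetric group on $\{z_{\tau_k+1},\dots,z_{\tau_k+s}\}$ is generated by adjacent transpositions, it suffices to treat $\sigma = (a\;b)$ with $a,b \in \{\tau_k+1,\dots,\tau_k+s\}$ distinct; and in fact any such transposition will do. The first thing to check is that $\sigma$ indeed lies in $\Sc_\lambda$: since $\tau_k+1,\dots,\tau_k+s$ all lie inside the block $\{\tau_k+1,\dots,\tau_{k+1}\}$ on which the factor $\Sc_{\ell_{k+1}}$ acts (using $s \le \ell_k$, reading indices as in Section~\ref{ssec:partitions}), the transposition $(a\;b)$ belongs to $\Sc_\lambda$, so hypothesis (ii) and Lemma~\ref{lemma:equivariant} are applicable. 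Note also that by property~(i) and fact~(i) on divided differences, every $q_I$ appearing is a genuine polynomial, so all manipulations stay in $\KK[\bZ_1,\dots,\bZ_r]$.

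Next I would apply Lemma~\ref{lemma:equivariant}: for each index set $I = \{i,\tau_k+s+1,\dots,\ell\}$ appearing in the sum, we have $\sigma(q_I) = q_{\sigma(I)}$. Now observe how $\sigma = (a\;b)$ acts on the family of index sets $\{I_i := \{i,\tau_k+s+1,\dots,\ell\} : \tau_k+1 \le i \le \tau_k+s\}$. The ``tail'' $\{\tau_k+s+1,\dots,\ell\}$ is fixed pointwise by $\sigma$ (as $a,b \le \tau_k+s$), so $\sigma(I_i) = \{\sigma(i),\tau_k+s+1,\dots,\ell\} = I_{\sigma(i)}$. Since $\sigma$ permutes $\{\tau_k+1,\dots,\tau_k+s\}$, the map $i \mapsto \sigma(i)$ is a bijection of the index range of the sum onto itself. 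Therefore
\[
\sigma(S_{k,s}) = \sum_{i=\tau_k+1}^{\tau_k+s} \sigma(q_{I_i}) = \sum_{i=\tau_k+1}^{\tau_k+s} q_{I_{\sigma(i)}} = \sum_{i=\tau_k+1}^{\tau_k+s} q_{I_i} = S_{k,s},
\]
the third equality being a reindexing of the sum. As $\sigma$ ranges over all transpositions of $\{z_{\tau_k+1},\dots,z_{\tau_k+s}\}$, these generate the full permutation group of that set, giving the claim.

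I do not anticipate a genuine obstacle here; the lemma is essentially bookkeeping once Lemma~\ref{lemma:equivariant} is in hand. The only point requiring a little care is the index translation between the ``flat'' numbering $z_1,\dots,z_\ell$ and the block structure: one must confirm that a transposition of two of the variables $z_{\tau_k+1},\dots,z_{\tau_k+s}$ really is an element of $\Sc_\lambda = \Sc_{\ell_1}\times\cdots\times\Sc_{\ell_r}$ and not merely of the larger group $\Sc_\ell$. This is exactly where the hypothesis $s \le \ell_k$ is used, and it is what makes Lemma~\ref{lemma:equivariant} applicable. Everything else is the reindexing displayed above.
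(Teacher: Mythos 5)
Your proposal is correct and follows essentially the same route as the paper: apply Lemma~\ref{lemma:equivariant} to each term $q_{\{i,\tau_k+s+1,\dots,\ell\}}$, note that the tail is fixed pointwise, and reindex the sum. The paper argues directly for an arbitrary $\sigma \in \Sc_\lambda$ permuting only $\{z_{\tau_k+1},\dots,z_{\tau_k+s}\}$ rather than reducing to transpositions, but that is a cosmetic difference.
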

\begin{proof}
For any $\sigma \in \Sc_\lambda$ permuting only
$\{z_{\tau_k+1},\dots,z_{\tau_k+s}\}$, we have, using the previous lemma,
\[
     \sigma\big( \sum_{i=1}^{\tau_k+s} \, q_{\{i, \tau_k+s+1, \dots, \ell\}}\big) =
      \sum_{i=\tau_k+1}^{\tau_k+s} \, \sigma\big( q_{\{i, \tau_k+s+1, \dots, \ell\}}\big) = 
      \sum_{i=\tau_k+1}^{\tau_k+s} \, q_{\{\sigma(i), \tau_k+s+1, \dots, \ell\}}.
\] 
Since $\sigma$ permutes $\{z_{\tau_k+1},\dots,z_{\tau_k+s}\}$ and the
last sum runs over all $i=\tau_k+1, \dots,\tau_k+s$, it equals
$\sum_{i=\tau_k+1}^{\tau_k+s} \, q_{\{i, \tau_k+s+1, \dots, \ell\}}$.
\end{proof}

\medskip \noindent We can now prove the proposition.  The fact that
all entries of $\bp$ are polynomials follows from our first
assumption.  Proving that they are $\Sc_\lambda$-invariant requires
more work, as we have to deal with numerous cases. While most are
straightforward, the last case does involve nontrivial calculations.

  Fix $k \in \{0,\dots,r-1\}$.  We first prove that for $s$ in
  $\{1,\dots,\ell_{k+1}\}$, $i$ in $\{\tau_k +1,\dots,\tau_{k}+s\}$,
  and  $m$ in $\{0,\dots,r-1\}$, with $m \ne k$, then  the term $q_{\{i,
      \tau_{k} + s+1, \dots, \tau_r\}}$ is symmetric in
  $\{z_{\tau_m+1}, \dots, z_{\tau_{m+1}}\}$.
  Indeed, consider a permutation $\sigma \in \Sc_\lambda$ that acts on
  $\{z_{\tau_m+1}, \dots, z_{\tau_{m+1}}\}$ only. By
  Lemma~\ref{lemma:equivariant}, $\sigma(q_{\{i, \tau_{k}+s+1, \dots,
      \tau_r\}})$ is equal to $q_{\{\sigma(i), \sigma(\tau_{k}+s+1),
      \dots, \sigma(\tau_r)\}}$. If $m < k$, then all indices
  $i,\tau_{k}+s+1,\dots,\tau_r$ are left invariant by $\sigma$ while for $m
  > k$, $[\sigma(i), \sigma(\tau_{k}+s+1), \dots, \sigma(\tau_r)]$ is
  a permutation of $[i,\tau_{k}+s+1,\dots,\tau_r]$. In both cases,
  $q_{\{\sigma(i), \sigma(\tau_{k}+s+1), \dots, \sigma(\tau_r)\}}=q_{\{i,
      \tau_{k}+s+1, \dots, \tau_r\}}$, as claimed.

  Consider first the invariance of  $p_{\tau_{k+1}}$.  By
  Lemma~\ref{smallsum}, the sum $\sum_{i=\tau_{k}+1}^{\tau_{k+1}} \,
  q_{\{i, \tau_{k+1}+1, \dots, \tau_r\}}$ is symmetric in
  $\{z_{\tau_k+1}, \dots, z_{\tau_{k+1}}\}$. Next, for $i$ in
  $\{\tau_k +1,\dots,\tau_{k+1}\}$ and $m$ in $\{0,\dots,r-1\}$, with
  $m \ne k$, each term $q_{\{i, \tau_{k+1}+1, \dots, \tau_r\}}$ is
  symmetric in $\{z_{\tau_m+1}, \dots, z_{\tau_{m+1}}\}$, making use of the previous paragraph
   with $s=\ell_{k+1}$.  As a result,
  $p_{\tau_{k+1}}$ is $\Sc_\lambda$-invariant.

  Next, for $j$ in $\{1,\dots, \ell_{k+1}-1\}$ and $\sigma$
  in $\Sc_\lambda$, we prove that $\sigma(p_{\tau_k + j}) =
  p_{\tau_k + j}$.  Assume first that $\sigma$ acts only on
  $\{z_{\tau_m+1},\dots,z_{\tau_{m+1}}\}$, for some $m$ in
  $\{0,\dots,r-1\}$ with $m \ne k$. For $s$ in $\{1, \dots,
  j\}$, the polynomial $\eta_{j - s}(z_{\tau_{k}
    + s+2}, \dots, z_{\tau_{k+1}})$ depends only on $\{z_{\tau_{k}+1},
  \dots, z_{\tau_{k+1}}\}$ and so is $\sigma$-invariant. Using our earlier argument
  we see that for $i$ in $\{\tau_k+1,\dots,\tau_k+s\}$
  the divided difference $q_{\{i, \tau_k+s+1, \dots, \tau_r\}}$ is
  $\sigma$-invariant. As a result, $p_{\tau_k + j}$ itself
  is $\sigma$-invariant.

  It remains to prove that $p_{\tau_k + j}$ is
  $\sigma$-invariant for a permutation $\sigma$ of $\{{\tau_k+1},
  \dots, {\tau_{k+1}}\}$. We do this first for $\sigma =
  (\tau_k+1, \tau_k+2)$, by proving that all summands in the
  definition of $p_{\tau_k + j}$ are
  $\sigma$-invariant. For any $s$ in $\{2,\dots,
 j\}$, $\eta_{j-s} (z_{\tau_k+s+2}, \dots,
  z_{\tau_k+1})$ does not depend on $(z_{\tau_k+1}, z_{\tau_k+2})$, so
  it is $\sigma$-invariant.  For $s$ in $\{2,\dots, j\}$,
  the sum $\sum_{i=\tau_k+1}^{\tau_k+s} q_{\{i, \tau_k+s+1, \dots,
      \tau_r\}}$ is symmetric in $({\tau_k+1}, {\tau_k+2})$, since
  $\sigma$ just permutes two terms in the sum while for $s = 1$,
  $q_{\{\tau_k+1, \tau_k+2, \dots, \tau_r\}}$ is symmetric in
  $(z_{\tau_k+1}, z_{\tau_k+2})$ by
  Lemma~\ref{lemma:equivariant}. Thus, our claim is proved for $\sigma =
  (\tau_k+1, \tau_k+2)$.  
  
  It remains to prove that  $p_{\tau_k + j}$ is
  invariant in $(z_{\tau_k+2}, \dots, z_{\tau_{k+1}})$.
  For any $t = 1, \ldots, j$, set
  \begin{equation}
    {p_{\tau_k + j, t} = \sum_{s=t}^{j} \,
       \, \eta_{j - s
      }(z_{\tau_k+t+2}, \dots, z_{\tau_{k+1}}) \, \big(
      \sum_{i=\tau_k+1}^{\tau_k+s} \, q_{\{i, \tau_k+s+1, \dots,
          \tau_r\}}\big)}. 
  \end{equation} 
  Then $p_{\tau_k +j} = p_{\tau_k + j, 1}$ and we
  have the recursive identity 
  \begin{equation} \label{recursive_eq} 
  p_{\tau_k + j, t-1} = p_{\tau_k + j, t} +
   \eta_{j - t+1}(z_{\tau_k+t+1},
  \dots, z_{\tau_{k+1}}) \, \big(\sum_{i=\tau_k+1}^{\tau_k+t-1} q_{\{i,
    \tau_k+t, \dots, \tau_{r}\}} \big). 
\end{equation}
  For any $t$, set $\bz_{:t} = (z_{\tau_{k}+1}, \dots, z_{\tau_k+t})$
  and $\bz_{t:} = (z_{\tau_k+t}, \dots, z_{\tau_{k+1}})$. We will show
  that for $t = 1, \ldots, j$, the polynomial $p_{\tau_k +
    j, t}$ satisfies:
  \begin{equation}\label{claim_blocks}
    p_{\tau_k + j, t} \mbox{ is block symmetric in } \bz_{:t}
    \mbox{ and } \bz_{t+1:} 
  \end{equation}
  Taking $t=1$ implies that $p_{\tau_k + j} =
  p_{\tau_k + j, 1}$ is symmetric in $\bz_{2:} =
  (z_{\tau_k+2}, \dots, z_{\tau_{k+1}})$, as claimed.

  To prove statement (\ref{claim_blocks}) we use  decreasing induction on
  $t = j, \dots, 1$. The statement is true when
  $t= j$ since in this case $p_{\tau_k+j,
    j} = \sum_{i=\tau_k+1}^{\tau_k+ j} \, q_{\{i,
      \tau_k+j+1, \dots, \tau_r\}}$, which is symmetric in
  $\bz_{:j}$ by Lemma~\ref{smallsum}, while each summand $q_{\{i,
      \tau_k+j+1, \dots, \tau_r\}}$ is symmetric in
    $\bz_{{j+1}:}$ by Lemma~\ref{lemma:equivariant}.  
  Assume now that (\ref{claim_blocks}) is true for some index $t$ in
  $\{2,\dots, j\}$; we show that it also holds for $t-1$.
  That is, we have $p_{\tau_k+ j, t}$ is block symmetric in
  $\bz_{:t}$ and $\bz_{t+1:}$ and need to show that
  $p_{\tau_k+ j, t-1}$ is block symmetric in $\bz_{:t-1}$
  and $\bz_{t:}$.

  From Lemma~\ref{smallsum}, we have that
  $\sum_{i=\tau_k+1}^{\tau_k+t-1} q_{\{i, \tau_k+t, \dots, \tau_{r}\}}$
  is symmetric in $\bz_{:t-1}$. Furthermore, from our induction
  hypothesis, the polynomial $p_{\tau_k + j, t}$ is
  symmetric in $\bz_{:t-1}$, while $\eta_{j -
    t+1}(z_{\tau_k+t+1}, \dots, \tau_{k+1})$ depends only on
  $\bz_{t:}$. Thus, in view of~\eqref{recursive_eq}, we see that
  $p_{\tau_k +j, t-1}$ is symmetric in $\bz_{:t-1}$. It
  remains to prove that it is also symmetric in $\bz_{t:}$.

  We will prove this by showing $\sigma(p_{\tau_k + j,
    t-1}) = p_{\tau_k + j, t-1}$ for any $\sigma =
  (\tau_k+t+1, \tau_k+\epsilon)$ with $\epsilon \in \{t, t+2, \dots,
  \ell_{k+1}\}$. For any such $\sigma$ with $t+2 \leq \epsilon \leq
  \ell_{k+1}$, our induction hypothesis implies that
  $\sigma(p_{\tau_k+ j, t}) = p_{\tau_k+ j, t}$,
  while $\sigma(\eta_{j - t+1}(z_{\tau_k+t+1}, \dots,
  \tau_{k+1})) = \eta_{j - t+1}(z_{\tau_k+t+1}, \dots,
  \tau_{k+1})$ and $\sigma\big( q_{\{i, \tau_k+t, \dots, \tau_r\}}\big)
  = q_{\{i, \tau_k+t, \dots, \tau_r\}}$ hold for all $i$.  Together with
  (\ref{recursive_eq}), we get 
  $\sigma(p_{\tau_k+ j, t-1}) = p_{\tau_k+j,
    t-1}$.
  Finally, if $\sigma = (\tau_k+t+1, \tau_k+t)$, then we have
  $$\sigma(\eta_{j - t+1}(z_{\tau_k+t+1}, \dots,
  \tau_{k+1})) = \eta_{j - t+1}(z_{\tau_k+t},
  z_{\tau_k+t+2}, \dots, \tau_{k+1})$$ and $\sigma\big(q_{\{i, \tau_k+t,
      \dots, \tau_{r}\}} \big) = q_{\{i, \tau_k+t, \dots, \tau_{r}\}}$
  for all $i=\tau_k+1, \dots, \tau_k+t-1$.  Notice 
  that 
$$ 
\eta_{j - t+1}(z_{\tau_k+t}, z_{\tau_k+t+2}, \dots,
  \tau_{k+1}) - \eta_{j - t+1}(z_{\tau_k+t+1}, \dots,
  \tau_{k+1}) = 
(z_{\tau_k+t} - z_{\tau_k+t+1} ) \,
  \eta_{j-t}(z_{\tau_k+t+2}, 
\dots, z_{\tau_{k+1}}).
$$
Therefore,
 \begin{align} \label{zzz1} \sigma(p_{\tau_k+j, t-1}) -
   p_{\tau_k+\hat{\imath}, t-1}&=  \sigma(p_{\tau_k+j, t})  -
                                 p_{\tau_k+j, t}  \nonumber  \\ 
&\quad + {(z_{\tau_k+t} - z_{\tau_k+t+1} )} \,
  \eta_{j-t}(z_{\tau_k+t+2}, 
\dots, z_{\tau_{k+1}}) \big(\sum_{i = \tau_k+1}^{\tau_k+t-1} \, q_{\{i,
  \tau_k+t, \dots, \tau_{r}\}}  \big) \nonumber  \\
   &= \sigma(p_{\tau_k+j, t}) 
     - p_{\tau_k+j , t} +
     \eta_{j-t}(z_{\tau_k+t+2}, 
\dots, z_{\tau_{k+1}}) \nonumber  \\
   & \quad    \big(\sum_{i = \tau_k+1}^{\tau_k+t-1} \, ( q_{\{i,
     \tau_k+t+1, \tau_k+t+2,  \dots, \tau_{r}\}} - q_{\{i, \tau_k+t,
     \tau_k+t+2,  \dots, \tau_{r}\}} )\big),
\end{align}
where the last equality follows from the definition of
divided differences. In
  particular, 
  $$
  \sigma(p_{\tau_k+j, j-1}) -
  p_{\tau_k+j, j-1} = 
\sigma(p_{\tau_k+j, j})   
     - p_{\tau_k+j, j} +
     \sum_{i=\tau_k+1}^{\tau_k+j-1}
     (q_{\{i, \tau_k+j+1, \dots, \tau_r\}} - q_{\{i, \tau_k+j,
       \tau_k+j+2, \dots, \tau_r\}}).
       $$ 
      In addition, since
     $p_{\tau_k+j, j} = \sum_{i=\tau_k+1}^{\tau_k+j}
     q_{\{i, \tau_k+j+1, \dots, \tau_r\}}$, then when $\sigma = (\tau_k
     + j+1, \tau_k+j)$, we have $\sigma(p_{\tau_k+j,
       j}) - p_{\tau_k+j,j} =
     \sum_{i=\tau_k+1}^{\tau_k+j-1}(q_{\{i, \tau_k+j,\tau_k+j+2,
       \dots, \tau_r\}} - q_{\{i, \tau_k+j+1, \dots, \tau_r\}})$. This implies that
     $\sigma(p_{\tau_k+j, j-1}) - p_{\tau_k+j, j-1} = 0$. 

When $t \leq j-1$, from~(\ref{recursive_eq}),
   taken at index $t+1$, {if $\sigma = (\tau_k+t+1, \tau_k+t)$}, we also have
  \begin{align*}
    \sigma(p_{\tau_k+j, t}) 
&= \sigma(p_{\tau_k+j, t+1}) +
     \eta_{j-t}(z_{\tau_k+t+2}, \dots, z_{\tau_{k+1}})
    \big(\sum_{i=\tau_k+1}^{\tau_k+t-1}  q_{\{i,\tau_k+t,
      \tau_k+t+2, \dots, \tau_{r}\}} + q_{\{\tau_k+t, \tau_k+t+1, \dots,
      \tau_{k+1}\}} \big).
  \end{align*}
Then, by subtraction:

    \begin{align*} \sigma(p_{\tau_k+j, t}) - p_{\tau_k+j, t}
      &= \sigma(p_{\tau_k+j, t+1})- p_{\tau_k+j,
        t+1} +  \eta_{j-t}(z_{\tau_k+t+2}, \dots, z_{\tau_{k+1}}) \\
      &~~~ \big( \sum_{i=\tau_k+1}^{\tau_k+t-1} (q_{\{i, \tau_k+t, \tau_k+t+2, \dots, \tau_r\}}
        - q_{\{i, \tau_k+t+1, \dots, \tau_r\}})\big)
      \end{align*}
and so
       \begin{align} \label{zzz}
         \sigma(p_{\tau_k+j, t+1})- p_{\tau_k+j, t+1}
         =& \sigma(p_{\tau_k+j, t}) - p_{\tau_k+j, t} + 
           \eta_{j-t}(z_{\tau_k+t+2}, \dots, z_{\tau_{k+1}}) \nonumber  \\
         & ~~~  \big( \sum_{i=\tau_k+1}^{\tau_k+t-1} (q_{\{i, \tau_k+t+1, \dots, \tau_r\}}
        - q_{\{i, \tau_k+t, \tau_k+t+2, \dots, \tau_r\}})\big).
      \end{align}
Combining~(\ref{zzz1}) and~(\ref{zzz}) gives
$\sigma(p_{\tau_k+j, t-1}) - p_{\tau_k+j, t-1} =
\sigma(p_{\tau_k+j, t+1}) - p_{\tau_k+j, t+1}$.
By induction, we have that $p_{\tau_k+j, t+1}$ is symmetric
in $\bz_{:t+1}$ and so $\sigma(p_{\tau_k+j, t+1}) =
p_{\tau_k+j, t+1}$ for $\sigma = (\tau_k+t+1, \tau_k+t)$
which in turn implies that $\sigma(p_{\tau_k+j, t-1}) =
p_{\tau_k+j, t-1}$.  This gives our result.

\section{Proof of Proposition \ref{lemma:vector_symme_divided2}}\label{sec:proof2}

Define the row vector 
$$
\h = \big(h_{\tau_0+1}, \dots,h_{\tau_1}, \dots, h_{\tau_{r-1}+1}, \dots,h_{\tau_r}\big)
$$ 
where, for $k=0, \dots, r-1$ and $j = 1, \dots, \ell_{k+1}$,
\begin{align}\label{eqdef:hs}
h_{\tau_k+j} = \sum_{i =
  \tau_k+1}^{\tau_k+j} q_{\{i, \tau_k+j+1,
  \dots, \tau_{r}\}}. 
\end{align}
  
Then for all $i=1, \dots, m$, $k=0, \dots, r-1$, $
   p_{\tau_k+\ell_{k+1}} =   h_{\tau_k+\ell_{k+1}}$, and for $j = 1, \dots, \ell_{k+1}-1$, 
 \[ p_{\tau_k+j} = \sum_{s=1}^{j} \, 
   \eta_{j-s}(z_{\tau_k+s+2}, \dots, z_{\tau_{k+1}}) \, h_{\tau_k+s}.
 \]
Then $\h = \bp \, \mat{M}$, where we recall that $\mat{M}$ is
the block-diagonal matrix with blocks 
$\mat{M}_1,\dots,\mat{M}_r$ where
 \[
   \mat{M}_{k+1} = \begin{pmatrix}
     1 &  \eta_1(z_{\tau_k+3}, \dots, z_{\tau_{k+1}})
     &   \eta_{2}(z_{\tau_k+3}, \dots, z_{\tau_{k+1}}) & \cdots &
      \eta_{\ell_{k+1}-2}(z_{\tau_k+3}, \dots, z_{\tau_{k+1}})  &  0  \\
     0 & 1  &  \eta_1(z_{\tau_k+4}, \dots, z_{\tau_{k+1}}) & \cdots &
      \eta_{\ell_{k+1}-3}(z_{\tau_k+4}, \dots, z_{\tau_{k+1}})  & 0  \\
     0 & 0 & 1 & \cdots &
      \eta_{\ell_{k+1}-4}(z_{\tau_k+5}, \dots, z_{\tau_{k+1}})  & 0  \\
     \vspace{0.2cm}
     \vdots & \vdots & \vdots &  & \vdots & \vdots \\
     \vspace{0.2cm}
     0 & 0 & 0 & \cdots & 1 & 0 \\
     0 & 0 & 0 & \cdots & 0 & 1
   \end{pmatrix} .
 \]
 Then $\det(\mat{M})=1$ and  $\mat N = {\mat M}^{-1}$  is also a polynomial
 matrix in $\KK[\bZ]$ with $\det({\mat N}) = 1$. 

 We  construct a matrix $\mat{J}$ which defines the column
 operations converting $\h$ into $\bq$ as follows. 
Recall that for $k=0, \dots, r-1$ and $j = 1, \dots,
 \ell_{k+1}$, we have defined the following $\tau_r \times \tau_r$
 polynomial matrices. Set $\mat{B}_{\tau_0+1} = \mat{I}_{\tau_r}$,
 $\mat{C}_{\tau_0+1} = \mat{I}_{\tau_r}$, $\mat{D}_{\tau_0+j} =
 \mat{I}_{\tau_r}$, and
\[
\mat{B}_{\tau_k+j} =  \begin{pmatrix}
  \begin{matrix}
 \mat{I}_{\tau_k}
  \end{matrix}
  & \rvline & \mat{0} & \rvline & \mat{0}\\
\hline
  \mat{0}   & \rvline & \mat E_{k, j} & \rvline & \mat{0} \\ 
\hline 
\mat{0} & \rvline &  \mat{0} & \rvline &
  \begin{matrix}
\mat{I}_{\tau_r - \tau_{k+1}}
  \end{matrix}
\end{pmatrix}, \text{~with~}
\mat{E}_{k, j} = 
\begin{pmatrix}
  \begin{matrix}
 \mat{I}_{j-1}
  \end{matrix} & \rvline & 
\begin{matrix} z_{\tau_k+j} - z_{\tau_k + 1} \\ \vdots \\
  z_{\tau_k+j} - z_{\tau_k + j           -1} 
\end{matrix}& \rvline & \mat{0}\\
\hline
\begin{matrix} 0 & \hdots & 0
\end{matrix} & \rvline & -1 & \rvline & \mat{0}\\
\hline 
\mat{0} & \rvline & 0 & \rvline & \begin{matrix}
 \mat{I}_{\ell_{k+1}-j}
  \end{matrix}
\end{pmatrix}; \\
\]

\[
\mat{C}_{\tau_k+j} = 
 \begin{pmatrix}
  \begin{matrix}
 \mat{I}_{\tau_k}
  \end{matrix} & \rvline &   \mat{0} & \rvline & \mat{0} \\
\hline 
\mat{0}  & \rvline &  \mat{F}_{k, j} & \rvline & \mat{0} \\
\hline 
\mat{0} & \rvline &  \mat{0} & \rvline &
  \begin{matrix}
\mat{I}_{\tau_r - \tau_{k+1}}
  \end{matrix}
\end{pmatrix}, \text{~with~}
 \mat{F}_{k, j} = \begin{pmatrix}
  \begin{matrix}
{\rm {\bf diag}}(z_{\tau_k+j} -
z_{\tau_k + t})_{t=1}^{j           -1}
  \end{matrix} & \rvline & 
\begin{matrix} \mat{0}
\end{matrix} & \rvline & \mat{0} \\
\hline
\begin{matrix} \frac{-1}{j} & \hdots &
  \frac{-1}{j} 
\end{matrix} & \rvline & \frac{-1}{j} & \rvline & \mat{0}  \\
\hline 
\mat{0} &  \rvline & 0 & \rvline & \mat{I}_{\ell_{k+1}-j}
\end{pmatrix}; \\ 
\]

\[
\mat{D}_{\tau_k+j} = 
\begin{pmatrix} 
  \begin{matrix}
{\rm {\bf diag}}(z_{\tau_k+j} - z_t)_{t=1}^{\tau_k}
  \end{matrix} & \rvline &   \mat{0} & \rvline & \mat{0} \\
\hline 
\mat{G}_{k, j}  & \rvline &  \mat{I}_{\ell_{k+1}} & \rvline & \mat{0} \\
\hline 
\mat{0} & \rvline &  \mat{0} & \rvline &
  \begin{matrix}
\mat{I}_{\tau_r - \tau_{k+1}}
  \end{matrix}
\end{pmatrix}, \ 
\mat{G}_{k, j} \, : \, j^{th} {\rm \, row \, is \ } (1,
\dots, 1); {\rm \,others \, are \, zeros}. 
\]

Let 
\[
\mat{J} =  \prod_{k=0}^{r-1} \, \prod_{j = 1}^{\ell_{k+1}}
\, \mat{B}_{\tau_k+j} \, \mat{C}_{\tau_k+j} \, \mat{D}_{\tau_k +
  j}  ~~ \in  ~~ \KK[\bZ_1, \dots, \bZ_r]^{\tau_r \times \tau_r}.
\]
We will prove that this matrix satisfies $\bq = \h \, \mat{J}$. Note first 
that, for $k=0, \ldots, r-1$ and $j~=~1,~\ldots,~\ell_{k+1}$  we 
have $\det(\mat{B}_{\tau_k+j}) = \det(\mat{E}_{k, j}) = -1$, 
$\det(\mat{C}_{\tau_k+j}) = \det(\mat{F}_{k, j}) = \frac{-1}{j} \, 
\prod_{t=1}^{j-1} \, (z_{\tau_k+j} - z_t )$, and 
$\det(\mat{D}_{\tau_k+j}) = \prod_{t=1}^{\tau_k} (z_{\tau_k+j} - z_t)$. This implies that
\[
\det(\mat J) = \alpha \,  \prod_{k=0}^{r-1} \, \prod_{j=1}^{\ell_{k+1}} \, 
\prod_{t=1}^{j-1} \, (z_{\tau_k+j} - z_t ) \, \prod_{t=1}^{\tau_k}
(z_{\tau_k+j} - z_t) =\alpha \Delta \text{~for~some~} \alpha \in \KK_{\ne 0}. 
\]  
Define $\mat U = \mat N \,  \mat J$. Then $\bp = \bq \, \mat U$, and $\det(\mat U)$ 
is a unit in $\KK[\bZ_1,\dots,\bZ_r,1/\Delta]$, as claimed.

It remains to prove $\bq = \h \, \mat J$. For $s=0,\dots,\tau_r$,
define
$$\bq_{s} =   \begin{pmatrix}
    q_{\{1, s+1, \dots, \tau_r\}} & \hdots &
    q_{\{s, s+1, \dots, \tau_r\}} &
    h_{s+1} & \hdots & h_{\tau_r} 
    \end{pmatrix},$$
so that for $s=0$ we have $\bq_0 = \h$, whereas for $s=\tau_r$ we 
have $\bq_{\tau_r} = \bq$. We prove the following: for $k$ in $\{0,\dots,r-1\}$
and $j$ in $\{1,\dots,\ell_{k}\}$, 
\begin{equation}\label{eq:recQ}
  \bq_{\tau_k + j} = \bq_{\tau_k + {j-1}} \mat{B}_{\tau_k+j} \,
  \mat{C}_{\tau_k+j} \, \mat{D}_{\tau_k + j}.
\end{equation}
Our claim $\bq = \h \, \mat J$ then follows from
a direct induction, taking into account the values of $\bq_0$ and
$\bq_{\tau_r}$ given above.

Take $k$ in $\{0,\dots,r-1\}$ and $j$ in $\{1,\dots,\ell_{k}\}$.
Right-multiplying $\bq_{\tau_k + {j-1}}$ by $\mat{B}_{\tau_k+j}$ only
affects the entry at index $\tau_k+j$. It replaces 
$h_{\tau_k+j}$ by 
$$\sum_{i=1}^{j-1} q_{\{\tau_k+i,\tau_k+j,\dots,\tau_r\}}(z_{\tau_k+j}-z_{\tau_k+i}) \ \ - \ \ h_{\tau_k+j}.$$
Using the defining relation of divided differences, we get 
$$
q_{\{\tau_k+i,\tau_k+j,\dots,\tau_r\}}(z_{\tau_k+j}-z_{\tau_k+i}) = q_{\{\tau_k+i,\tau_k+j+1,\dots,\tau_r\}} - 
q_{\{\tau_k+j,\tau_k+j+1,\dots,\tau_r\}}.
$$
With the definition of $h_{\tau_k+j}$ in~\eqref{eqdef:hs}, the new entry 
at index $\tau_k+j$ simplifies 
as $-j q_{\{\tau_k+j,\tau_k+j+1,\dots,\tau_r\}}$. When we multiply the resulting 
vector by $\mat{C}_{\tau_k+j}$, we affect only entries
from indices $\tau_k+1$ to $\tau_k+j$. More precisely, 
the previous relation shows that
we obtain the vector
$$  \begin{pmatrix}
    q_{\{1, \tau_k+j, \dots, \tau_r\}} & \hdots &
    q_{\{\tau_k, \tau_k+j, \dots, \tau_r\}}
&
  q_{\{\tau_k+1, \tau_k+j+1, \dots, \tau_r\}}
&
\hdots 
&
  q_{\{\tau_k+j, \tau_k+j+1, \dots, \tau_r\}}
&
    h_{\tau_k+j+1} & \hdots & h_{\tau_r} 
    \end{pmatrix}.$$
Finally, right-multiplication by $\mat{D}_{\tau_k + j}$ affects 
entries of indices $1,\dots,\tau_k$. For $i=1,\dots,\tau_k$,
it replaces $q_{\{i,\tau_k+j,\dots,\tau_r\}}$ by
$$
q_{\{i,\tau_k+j,\dots,\tau_r\}} (z_{\tau_k+j}-z_i) +  q_{\{\tau_k+j,\tau_k+j+1,\dots,\tau_r\}}
=q_{\{i,\tau_k+j+1,\dots,\tau_r\}}.
$$
Thus, the resulting vector is 
$$  \begin{pmatrix}
    q_{\{1, \tau_k+j+1, \dots, \tau_r\}} & \hdots &
    q_{\{\tau_k, \tau_k+j+1, \dots, \tau_r\}}
&
  q_{\{\tau_k+1, \tau_k+j+1, \dots, \tau_r\}}
&
\hdots 
&
  q_{\{\tau_k+j, \tau_k+j+1, \dots, \tau_r\}}
&
    h_{\tau_k+j+1} & \hdots & h_{\tau_r} 
    \end{pmatrix}
    $$
which is precisely $\bq_{\tau_k+j}$, as claimed in~\eqref{eq:recQ}.

\section{Proof of Lemma~\ref{isolatedbound}}\label{app:bound}

To simplify our notation, for all $1 \leq s \leq \ell$, we abbreviate
$\eta_{\ell-s}(d-1, \dots, d-\ell)$ to $g_{\ell-s}$. Then, we claim
that one has $$g_{\ell-s} < d(d-1)\cdots
(d-\ell+1).$$ Indeed, let $f(t) = (t+d-1)(t+d-2)\cdots (t+d-\ell)$, so
that $f(1) = d(d-1)\cdots (d-\ell+1)$. From Vieta's formula we have
\[           
f(t) = \sum_{s= 0}^{\ell} g_{\ell-s} \, t^s 
\]            
and so we also have $f(1) = \sum_{s=0}^\ell
g_{\ell-s}$. Therefore, $$d(d-1)\cdots(d-\ell+1) = \sum_{s=0}^\ell
g_{\ell-s}$$ and so $g_{\ell-s} < d(d-1)\cdots (d-\ell+1)$ for
all $1 \leq s \leq \ell$.
 
Now, for any partition $\lambda =
(n_1^{\ell_1}\, \dots \, n_r^{\ell_r}) \vdash n$ of length $\ell_\lambda$, we
have
\begin{eqnarray*} \mathfrak{c}_\lambda & = &
d^s \frac{g_{\ell_\lambda-s}}{w_\lambda} \quad \text{with}\quad
w_\lambda = \prod_{i=1}^r \, \ell_i!\\
&=& d^s \frac{ \,\ell_\lambda!}{\prod_{i=1}^r \, \ell_i!} \, \frac{g_{\ell_\lambda -s}}{\ell_\lambda!}  \\
&=& d^s h(\lambda) \, \mathscr{F}_{d, \ell_\lambda,s},   
\end{eqnarray*}
where $h(\lambda) = \frac{\ell_\lambda!}{\prod_{i=1}^r \, \ell_i!} =
{{\ell_\lambda} \choose {\ell_1, \dots, \ell_r}}$ and $\mathscr{F}_{d,
  \ell_\lambda, s} = \frac{g_{\ell_\lambda - s}}{\ell_\lambda!}$. From our previous inequality we have 
$$\mathscr{F}_{d, \ell_\lambda, s} \leq \frac{d(d-1) \cdots (d-\ell_\lambda+1)}{\ell_\lambda!}
= {d \choose \ell_\lambda}$$ 
and so
\begin{equation} \label{eqqq0}
  \sum_{\lambda \vdash n, \, \ell_\lambda \geq s} \, \mathfrak{c}_\lambda \leq d^s
  \left( \sum_{\lambda  \vdash n, \, \ell_\lambda \geq s} \,
  h(\lambda) \, {d \choose \ell_\lambda}\right). 
  \end{equation}

Let $\bf{a}$ be a sequence of $m+1$ numbers $(a_0, a_1, \dots, a_m)$
and let $p_{{\bf a}}(t) = \sum_{i=0}^m a_i \, t^i$ be its generating
polynomial. The \textit{polynomial coefficients} associated to
$\bf{a}$ are defined by
\[
{k \choose n}_{\bf{a}} = 
\begin{cases}
[t^n] \, (p_{\bf a}(t)^k) , & \text{ if } 0\leq n\leq mk\\
0 , & \text{ if } n < 0 \ \text{or} \ n > mk
\end{cases}
\] 
where $[t^n] \, \sum_i c_i t_i = c_n$ is the coefficient of $t^n$ in
the series $\sum_i c_i t_i$.  For any partition $\lambda$ of $n$, let
further $\lambda'$ be its conjugate partition. By~\cite[Lemma~2.1]{Fah12}, we have
\begin{equation} \label{coeff}
{k \choose n}_{\bm{a}} = \sum_{\substack{\lambda \vdash n,\\
    \ell_{\lambda'} \leq n}} a_0^{k-\ell_{\lambda'}} \, h(\lambda) w_{\bm a}(\lambda) {k
  \choose \ell_{\lambda}},
\end{equation} 
where $w_{\bm a}(\lambda)$ is the function $w_{\bm a}(\lambda) = \prod_{i=1}^ma_i^{\ell_i}$,
and $\ell_\lambda,\ell_{\lambda'}$ are the respective lengths of $\lambda$ and $\lambda'$.
If we consider $m = n$, ${\bm a} = (1, \dots, 1) = {\bm 1}$ and $k = d$, then equation~(\ref{coeff}) becomes
\[
{d \choose n} _{\bm{1}} =  \sum_{\substack{\lambda \vdash n,\\ \ell_{\lambda'}
    \leq n}}  h(\lambda) {d \choose \ell_{\lambda'}}. 
\] 
For any partition $\lambda$ of $n$,  the length of its conjugate satisfies $\ell_{\lambda'} \leq n$ and so
\begin{equation} \label{eqqq}
[t^n] (1 + t + \cdots + t^n)^d = {d \choose n} _{\bm{1}} =
\sum_{\substack{\lambda \vdash n}}  h(\lambda) {d \choose \ell_\lambda}.
\end{equation}
Furthermore, 
\begin{eqnarray*}
(1 + t + \cdots + t^n)^d &=&  (1-t^{n+1})^d \, (1-t)^d = \Big(
                             \sum_{k=0}^d \, (-1)^k {d \choose k} \,
                             t^{(n+1)k} \Big) 
    \Big(  \sum_{i=0}^{\infty} {{d + i -1}\choose i} t^i \Big),
\end{eqnarray*}
where $t^n$ appears only when $k=0$ and $i=n$. In other words, 
\begin{equation} \label{coeff_bound}
[t^n] \, (1 + t + \cdots + t^n)^d = {{n+d-1} \choose {n}}. 
\end{equation}
Combining~(\ref{eqqq0}), (\ref{eqqq}) and (\ref{coeff_bound}), gives
\[
  \sum_{\lambda \vdash n, \, \ell_\lambda \geq s} \, \mathfrak{c}_\lambda  \leq
  d^s \, \left( \sum_{\lambda  \vdash n} \, h(\lambda) \, {d
  \choose \ell_\lambda}\right)  \leq  d^s {{n+d-1} \choose
  {n}} .
\]
We prove the inequality $ \sum_{\lambda \vdash n, \, \ell_\lambda \geq
  s} \, \mathfrak{e}_\lambda \le n (d+1)^s {{n+d} \choose {n}}$
similarly.

\end{document}